\documentclass[11pt,a4paper]{article}
\usepackage[UKenglish]{babel}
\usepackage[utf8]{inputenc}
\usepackage{float}   
\usepackage{url}     
\usepackage{geometry}
\usepackage{enumitem}  
\usepackage{amsmath}
\usepackage{amssymb}
\usepackage{graphicx}
\usepackage[T1]{fontenc}
\usepackage{color}
\usepackage[flushleft]{threeparttable}
\usepackage{tabularx} 
\usepackage[justification=centering]{caption}
\usepackage{subcaption}
\usepackage{color,soul}
\usepackage{datetime}
\usepackage{pdflscape}
\usepackage{afterpage}
\usepackage{adjustbox}
\usepackage{booktabs}
\usepackage{capt-of}
\usepackage{setspace}
\usepackage{rotating}

\allowdisplaybreaks
\usepackage[
authordate,
bibencoding=auto,
strict,
backend=biber,
natbib,
maxbibnames=99,
maxcitenames=2,
doi=false,
isbn=false,
url=true, 
eprint=false
]{biblatex-chicago}
\AtEveryBibitem{%
	\ifentrytype{article}{%
		\clearfield{url}%
		\clearfield{urldate}%
		\clearfield{urlyear}%
		\clearfield{urlmonth}%
		\clearfield{urlday}%
	}{}%
}
\usepackage{amsthm}
\newtheoremstyle{wooldridge}
{}{}                                         
{\normalfont}                                
{\parindent}                                 
{\bfseries}                                  
{:}                                          
{ }                                          
{\thmname{#1}\thmnumber{ #2}\thmnote{ (#3)}} 
\theoremstyle{wooldridge}
\newtheorem{innerassumption}{Assumption}
\newcommand{\inlineqed}{\unskip\nobreak\ \qedsymbol}
\newenvironment{assumption}[1][]
{\pushQED{\inlineqed}%
	\begin{innerassumption}[#1]}
	{\popQED\end{innerassumption}}
\makeatletter
\newcommand{\assumptionlabel}[2]{%
	\def\@currentlabel{#2}%
	\label{#1}%
}
\makeatother

\theoremstyle{plain}   
\newtheorem{proposition}{Proposition}

\newtheorem{remark}{Remark}

\theoremstyle{definition}
\newtheorem{innerdefinition}{Definition}
\newenvironment{definition}[1][]
{\pushQED{\inlineqed}%
	\begin{innerdefinition}[#1]}
	{\popQED\end{innerdefinition}}
\theoremstyle{plain}

\newdateformat{monthyeardate}{\monthname[\THEMONTH] \THEYEAR}
\newcommand{\E}{\mathbb{E}}

\newcommand{\plim}{\text{plim}}

\usepackage{verbatim}

\title{First as Tragedy? Second as What? \\ Estimating Dynamic Effects of Recurrent Events} 
\author{Agnes Norris Keiller\thanks{London School of Economics}}
\date{\monthyeardate\today }

\newcommand{\mcnsim}{1000}
\newcommand{\mcnbreps}{500}
\newcommand{\mcmaxh}{8}

\newcommand{\mcsieBiasThreeBaseTwo}{-0.1362}

\newcommand{\mcsieCovThreeBaseTwo}{0.372}

\newcommand{\mcmeAdvantageThree}{three}
\newcommand{\mcmeBiasFourBaseTwo}{+0.5028}

\newcommand{\mcmeBiasTwoTwo}{-0.3074}
\newcommand{\mcsieBiasTwoTwo}{-0.0015}
\newcommand{\mcmeBiasThreeBaseTwo}{-0.0454}

\newcommand{\mcrefusalRateThreeSat}{4.1 to 5.3\%}

\begin{document}

\vspace{-3cm}
\maketitle
	
\begin{abstract}
    I study treatment effect estimation when treatment events have persistent effects and can be experienced more than once. Natural disasters, job loss and health shocks are examples of such treatments. I show that the effect of a total treatment trajectory can be recovered under assumptions similar to those commonly invoked in single-event settings using suitably flexible TWFE models. Decomposing the total trajectory effect into portions attributable to distinct event occurrences, however, requires further assumptions. I propose an assumption similar to conditional parallel trends, imposing it on the growth of event-specific effects rather than on untreated outcomes. Combined with a linear-in-parameters model of effect growth, this assumption enables a sequential imputation estimator that consistently estimates the dynamic effects of each event occurrence and that can accommodate heterogeneity in effects according to observable event attributes, such as intensity. I demonstrate that several intuitive TWFE models fail to recover interpretable treatment effect parameters in the multi-event setting and illustrate the sequential imputation estimator's favourable performance using Monte Carlo simulations. 
    
    \bigskip
		
    \noindent \textbf{JEL codes}: C21, C23
    \newline
    \noindent \textbf{Keywords}: dynamic treatment effects, heterogeneous treatment effects, natural disasters
    \bigskip
    \newline
    \noindent \textbf{Acknowledgments}: I would like to thank Kirill Borusyak, Robin Burgess, Xavier Jaravel, \'{A}ureo de Paula, Steve Pischke and John Van Reenen for helpful comments and encouragement. Generative AI (Claude, Anthropic) was used during the development of this paper, primarily when implementing Monte Carlo simulations. All errors are my own.
    
\end{abstract}
	
\newpage

\onehalfspacing

\section{Introduction}

There are many examples of `treatments' that units can experience more than once over time. Floods can hit the same location on different dates, people who lose one job may go on to lose another, surviving one negative health shock does not impart immunity to subsequent illness. The effect of such recurrent events is a plausible subject of research interest. One may be particularly interested in how the effect of subsequent events compares to that of initial events. Such comparison would shed light on, for example, whether areas adapt to flooding and whether people become more resilient to job loss or bad health. Such analysis, however, requires one to obtain treatment effect estimates for each event.
	
This paper examines how the dynamic effects of recurrent events can be estimated separately for each event occurrence. I first show how the analysis of \citet{wooldridge_twoway_2025} extends in a straightforward manner to multi-event settings by replacing cohort indicators with indicators that encode the timing of all event occurrences. In this manner, one can consistently estimate the total effect of a treatment trajectory, which reflects the effects of all events experienced up to a particular point in time, using suitably-flexible TWFE specifications under assumptions that are commonly imposed in single-event settings. Decomposing the total trajectory effect into portions attributable to distinct event occurrences, however, requires additional assumptions. For units that have experienced $k$ events at a given point in time, for example, one needs some way of distinguishing the effect of the $k$-th event from the dynamic effects of the prior $k-1$ events. To achieve this decomposition, I propose a `conditional parallel effect-trends' assumption, which imposes that growth in the effects of prior events is conditionally independent of selection into subsequent events -- a condition that can be supported by diagnostic tests similar to `pre-trend' analysis in single-event settings. Under such an assumption, units who have not experienced a $k$-th event at time $t$ can be used to estimate the effects of the prior $k-1$ events and thereby isolate the effect of the $k$-th event for units that have. Together with a linear-in-parameters restriction on how effect growth depends on observables, the assumption enables a sequential imputation estimator (SIE), which provides consistent estimates of the average treatment effect of each event occurrence at each horizon after its impact and can accommodate heterogeneity in treatment effects according to pre-determined characteristics and to observable event attributes, such as event intensity. 
	
To illustrate the advantages of the SIE, I show that several TWFE specifications that have been implemented in recurrent-event settings suffer from a number of issues under different notions of treatment effect heterogeneity. Single-event specifications that include indicators for the first event only can, at best, recover estimates of the total trajectory effect, and fail to do so when the mean trajectory effect at a given first-event horizon varies with the date of the first event. Specifications that pool event indicators across occurrences, such as distributed lag models, require effects to be common across occurrences and conditional mean effects to be invariant to event timing. When these conditions do not hold, their coefficients lack clean interpretation as average effects. Multi-event specifications, which include separate indicators for each event occurrence, accommodate occurrence heterogeneity but are biased when conditional mean effects vary with the timing trajectory, which occurs, for example, when selection into subsequent events depends on the effects of prior events. Monte Carlo simulations demonstrate the empirical importance of these limitations, the favourable performance of the SIE and the ability of the suggested heuristic test to detect failure of the conditional parallel effect-trends assumption on which the sequential imputation estimator depends. 
	
This paper's primary contribution relates to a recent literature on event-study research designs. This `new' event-study literature highlights shortcomings of commonly-implemented estimators and proposes solutions that allow researchers to recover unbiased estimates of various quantities related to treatment effects. The majority of this literature studies settings in which treatment is binary and absorbing, in that units switch into treatment once and remain treated forever. In this setting, \citet{goodman-bacon_difference_2021} and \citet{dechaisemartin_twoway_2020} show treatment effect estimates obtained from standard difference-in-differences regressions reflect weighted averages of heterogeneous treatment effects, with weights that may be negative. \citet{callaway_difference_2021} propose an estimator of group-time average treatment effects that can accommodate time-varying covariates and be used to analyse dynamic treatment profiles. \citet{sun_estimating_2021} also focus on dynamic effects, showing that coefficients on lags and leads of a treatment event indicator are effectively `contaminated' by effects that operate at other lags and leads. \citet{wooldridge_twoway_2025} shows that the cohort-based corrections of \citet{callaway_difference_2021} and \citet{sun_estimating_2021}, as well as the imputation estimator of \citet{borusyak_revisiting_2024}, can equivalently be implemented via OLS estimation of an `extended' TWFE model that controls for treatment cohort dummies and their interaction with time dummies (and possibly additional covariates). 

This paper shows that the analysis of \citet{wooldridge_twoway_2025} applies directly in recurrent-event settings once cohort dummies are redefined to reflect the timing of all event occurrences rather than just the first. The total effect of a given event timing configuration can therefore be consistently estimated under recurrent-event analogues of assumptions conventionally invoked in single-event difference-in-differences analysis. These assumptions do not, however, provide basis for decomposing the total trajectory effect into portions attributable to each event occurrence. To achieve this, I propose a further `conditional parallel effect-trends' assumption, which is similar to the familiar parallel trends assumption but applied to event effects rather than to untreated potential outcomes. Alongside the conventional assumptions, a support requirement, a linearity restriction on effect growth that strengthens this assumption, and standard regularity conditions, I show that the proposed SIE consistently estimates event-specific treatment effects.
	
An exception to the predominant focus on binary and absorbing treatment is \citet{de_chaisemartin_did_2024}, who examine treatment effect estimation when units are subject to different trajectories of continuous treatment. Treatment in their setting can increase or decrease in intensity over the course of a study period, in ways that vary across units. \citeauthor{de_chaisemartin_did_2024} show that conditioning on treatment status in the first period of observation allows one to recover estimates of the impact of the entire trajectory of treatment relative to a `status-quo' counterfactual under which the treatment of each unit stayed at the same level as in the first observed period. This framework is more general than that of the present paper, and one can effectively cast recurrent events into it by using `count of events experienced' as the time-varying treatment intensity. In doing so, the \citeauthor{de_chaisemartin_did_2024} approach is only able to estimate the same total event trajectory effect as one obtains via the trajectory-based extension of the method(s) proposed in \citet{wooldridge_twoway_2025}. Relative to the work of \citeauthor{de_chaisemartin_did_2024}, the estimator proposed in this paper leverages the more structured environment of recurrent events to separately identify the dynamic effects of distinct event occurrences.
	
Prior to the more recent event-study literature discussed above, \citet{sandler_multiple_2014} also explicitly consider event study estimation when treatment events can reoccur. They use Monte Carlo simulations to document the performance of various specifications but do not derive any formal bias results. Relative to their work, this paper contributes formal derivations of the bias in several TWFE specifications and shows the specification they suggest as relatively robust to multiple event occurrences (effectively the multi-event TWFE specification), is vulnerable to bias when treatment effects are heterogeneous across units in ways correlated with event timing.
	
This paper also contributes to empirical literatures that study the effects of recurrent events. This includes work that examines the impacts of natural disasters, of negative health shocks and of job loss. While each of these literatures acknowledges the possibility that units can experience recurrent events over time, none have formally examined how to recover unbiased estimates of distinct events. Researchers have instead adopted a number of methods in the hope that they will recover unbiased estimates of some quantity of interest. These methods include dropping observations that experience more than one event over a short period \citep{erda_cleansing_2026}, assuming dynamic effects are zero after a specific horizon \citep{cengiz_effect_2019} focusing on a single (often the first) event occurrence \citep{deryugina_fiscal_2017,fadlon_family_2019,patel_floods_2024}, or implementing pooled designs that estimate a single set of dynamic effects across event occurrences via distributed lag models \citep{krolikowski_choosing_2018,strobl_economic_2011,hsiang_environment_2014}. I contribute to this work by showing how such previously-adopted strategies can fail to recover unbiased treatment effect estimates when events reoccur and by proposing an estimator that, under certain assumptions, is consistent for the dynamic effects of distinct treatment events in such contexts. 
	
The remainder of this paper proceeds as follows. Section~\ref{sect:environment} formalises the setting in which treatment events can occur more than once. Section~\ref{sect:wooldridge} explains how the results of \citet{wooldridge_twoway_2025} -- and, by implication, all of \citet{callaway_difference_2021,sun_estimating_2021,borusyak_revisiting_2024} -- extend to the recurrent-event setting and highlights why further assumptions are necessary in order to decompose total trajectory effects into event-specific effects. Section~\ref{sect:estimator} proposes the conditional parallel effect-trends assumption and the sequential imputation estimator of average event-specific treatment effects that follows, alongside diagnostic tests that can be used to assess its validity. Section~\ref{sect:twfe} shows how various alternative TWFE specifications fail to recover unbiased estimates of treatment effect parameters in the recurrent-event setting, whereas Section~\ref{sect:montecarlo} presents Monte Carlo simulations that compare their performance to that of the proposed estimator under various DGPs. Section~\ref{sect:implementation} considers several issues regarding empirical implementation and Section~\ref{sect:conclusion} concludes.
	
\section{Environment and Target Estimands}\label{sect:environment}
	
\subsection{The Environment of Study}\label{sect:basic_setup}
Consider a (possibly unbalanced) panel of units $i \in \{1, \cdots, N\}$ observed over a number of periods $t \in \{1, \cdots, T\}$, indexed $(i,t) \in \Omega$. Unit $i$ may experience up to $K$ occurrences of a binary treatment event.\footnote{I focus on binary treatment events for simplicity and include separate discussion, where appropriate, of how results extend to settings where each treatment event occurrence can vary in intensity.} Let $E_i^{(k)}$ denote the period in which unit $i$ experiences its $k$-th event, with $E_i^{(k)} = \infty$ if fewer than $k$ events occur. The treatment trajectory of each unit is summarised by $E_i=\{E_i^{(1)},E_i^{(2)},...,E_i^{(K)}\}$. Simultaneous events are ruled out, so that $E_i^{(k)}<E_i^{(k+1)}$ for every $k$ with $E_i^{(k+1)}<\infty$. Relative time is defined as $\ell_{it}^{(k)} = t - E_i^{(k)}$. 
	
\paragraph{Initial conditions.} The count of experienced events, $k\in\{0, 1, \cdots, K\}$, pertains to events experienced within the period of the observed panel. While it is possible that some units have experienced treatment events prior to the first observed period that may continue to affect observed outcomes, the researcher possesses no information that can be used to identify their effect. Methods to account for this unobserved initial condition problem are discussed in Section \ref{sect:implementation}. Such unobserved events are abstracted from here for notational simplicity by imposing \newline $E_i^{(1)}>1$. This restriction also ensures that all units are observed in an untreated state for at least one period, which is necessary in order to distinguish event effects from permanent unobserved differences between units.
	
\paragraph{Sample partition.} The $(i,t)$ observations can be partitioned into disjoint sets $\Omega^{(k)}$ for $k\in\{0, 1, \cdots, K\}$, where $\Omega^{(0)}=\{it: t<E_i^{(1)}\}$ and $\Omega^{(k)}=\{it: E_i^{(k)} \leq t <E_i^{(k+1)} \}$ for $k\in\{1, 2 \cdots, K\}$. $\Omega^{(0)}$ therefore consists of untreated unit-time observations, $\Omega^{(1)}$ consists of unit-time observations that have experienced exactly 1 treatment event, $\Omega^{(2)}$ consists of unit-time observations that have experienced exactly 2 treatment events, and so on up to $\Omega^{(K)}$. 
	
\paragraph{Potential outcomes.} $Y_{it}(E)$ is the potential outcome of unit $i$ at time $t$ in the scenario where their treatment trajectory is given by $E$. $Y_{it}(\infty,\infty,...,\infty)$ is hence the untreated scenario, in which no events are experienced, denoted $Y_{it}^{(0)}$ for compactness. $Y_{it}\equiv Y_{it}(E_i)$ is the realised outcome. 
	
\subsection{Target Estimands}\label{sect:estimation_target} 
	
The potential outcome framework described above can be used to define several distinct notions of `treatment effect'. First, the total effect of a treatment trajectory up to time $t$ for unit $i$ is given by 
\begin{equation}\label{eq:totaleffect}
    \tau_{it}=Y_{it}-Y_{it}^{(0)}.  
\end{equation}
For $it\in\Omega^{(1)}$, $\tau_{it}$ reflects the effect of the first treatment event only. For $it\in\Omega^{(k)}$ $k>1$, by contrast, it reflects the dynamic effect of all events experienced by unit $i$ in periods up to and including $t$. 

I assume researchers are interested in decomposing the total causal effect of a treatment trajectory into portions attributable to different event occurrences. I focus specifically on the incremental treatment effect of experiencing $k$ events relative to experiencing $k-1$. Such effects can be formalised by considering a sequence of counterfactual treatment trajectories, $\tilde{E}^{(k)}_i$, which take the value of $E_i$ for the first $k$ elements and $\infty$ for the remaining $K-k$ elements. Let $Y_{it}^{(k)} \equiv Y_{it}(\tilde{E}^{(k)}_i)$, denote the potential outcome for unit $i$ at time $t$ in the counterfactual scenario where they experience only the first $k$ events of their total treatment trajectory. Ruling out anticipation effects, as we do below, it makes sense to define such counterfactuals only for $k\leq K_{it}$ where $K_{it}$ is the number of events that unit $i$ has actually experienced at time $t$ ($K_{it}=\sum_{k=1}^{K}\mathbf{1}[E_i^{(k)}\leq t]$). 
	
With these concepts in hand, we can define the incremental effect of event $k$ as the impact of event $k$ on observation $(i,t)$ relative to the counterfactual where the $k$-th event did not occur but the treatment trajectory for the first $k-1$ events was unchanged:
\begin{equation}\label{eq:marginaleffect}
    \tau^{(k)}_{it}=Y_{it}^{(k)}-Y_{it}^{(k-1)} \quad \textrm{for} \quad k\in\{1,\cdots,K_{it}\}.    
\end{equation}
Noting that $Y_{it}^{(K_{it})}=Y_{it}$, we therefore have $$\sum_{k=1}^{K_{it}}\tau^{(k)}_{it}=Y_{it}^{(K_{it})}-Y^{(0)}_{it}=Y_{it}-Y^{(0)}_{it}=\tau_{it}.$$

While the definition of $\tau^{(k)}_{it}$ thus decomposes the total causal effect of a treatment trajectory into additive event-specific components, this does not necessarily impose additivity on the effects of the events themselves. It is perfectly possible that the total effect of event $k$ is determined by the effect of event $k-1$. The impact of a drought on agricultural yields, for example, may depend on the impact of droughts experienced in previous periods. The definition of counterfactuals $Y_{it}^{(k)}$ admits such interaction effects and the event-specific $\tau^{(k)}_{it}$ should therefore be interpreted as the impact of the $k$-th event occurring after the sequence $\tilde{E}^{(k-1)}_i$, including any interaction with events occurring earlier in the sequence.
	
It should be recognised that the incremental effect $\tau^{(k)}_{it}$ defined with respect to the counterfactual treatment trajectories $\tilde{E}^{(k)}_i$ and $\tilde{E}^{(k-1)}_i$ may not be the only treatment effect of interest. An alternative would be to define the treatment effect of the $k$-th event relative to a counterfactual where the $k$-th event was not experienced but all past \emph{and future} event occurrences were kept unchanged. This would amount to a counterfactual treatment trajectory where the $k$-th element of $E_i$ was replaced by the $k+1$-th element of $E_i$, the $k+1$-th element was replaced by the $k+2$-th element and so on, until finally the last finite element of $E_i$ was replaced by $\infty$. While a legitimate object of interest, such treatment effects cannot be used to decompose the total effect of the treatment trajectory into additive, event-specific components apart from in the specific case where there are no interactions between event effects. They also have a less straightforward interpretation when observations experience a different number of events. I therefore focus on $\tau^{(k)}_{it}$, which represents the incremental effect of experiencing $k$ events relative to experiencing $k-1$ for unit $i$ at time $t$. 
	
I next follow \citet{borusyak_revisiting_2024} and assume the objects of interest are scalar weighted sums of the incremental, event-specific treatment effects: $\tau^{(k)}_{w}=\sum_{it\in\Omega^{(\geq k)}}w_{it}\tau^{(k)}_{it}$, where $\Omega^{(\geq k)}\equiv\bigcup_{j\geq k}\Omega^{(j)}$ collects the observations on which the $k$-th event can affect outcomes. The weights $w_{it}$ are pre-specified by the researcher and, conditional on realised treatment timing, are non-stochastic and independent of realised outcomes. To fix ideas, I focus on the ATT of the $k$-th event at a fixed horizon $h$ after it occurs. Writing $\mathcal{N}^{(k)}_h=\bigl\{i:E^{(k)}_i<\infty,\;(i,E^{(k)}_i+h)\in\Omega\bigr\}$ for the set of units observed at horizon $h$ after their $k$-th event, this corresponds to $w_{it}=\mathbf{1}[\ell_{it}^{(k)}=h]/\bigl|\mathcal{N}^{(k)}_h\bigr|$. Extending the ATT to condition on discrete observable characteristics is straightforward and amounts to restricting the sum to observations within the discrete group of interest. 

\subsection{Continuous Treatment} 
It is straightforward to extend the concepts discussed above to the context where each treatment event occurrence is associated with a specific treatment intensity. The main amendment necessary is to define the treatment trajectory $E_i$ as a sequence of tuples \newline $E_i=\{\{E^{(1)}_i,I^{(1)}_i\},\{E^{(2)}_i,I^{(2)}_i\},\cdots,\{E^{(K)}_i,I^{(K)}_i\}\}$, which records the intensity of the $k$-th event occurrence $I^{(k)}_i>0$ alongside the period in which it occurred (with $I^{(k)}_i=0$ if unit $i$ experiences fewer than $k$ events). The counterfactual trajectories $\tilde{E}^{(k)}_i$ used to define $\tau^{(k)}_{it}$ are defined similarly to the case of binary treatment, but additionally set all intensity measures $I^{(j)}_i=0$ for $j>k$ while keeping $I^{(j)}_i$ unchanged from $E_i$ for $j \leq k$. 
	
In this context, the researcher may reasonably wish to standardise treatment effects by intensity and focus on $\tau^{(k)}_{w}=\sum_{(i,t):\ell_{it}^{(k)}\geq 0}w_{it}(\tau^{(k)}_{it}/I^{(k)}_{i})$. As long as $I^{(k)}_{i}$ is observable, this corresponds to a specific choice of weights $w'_{it}=w_{it}/I^{(k)}_{i}$. 
	
\section{Identification Limits of Conventional Difference-in-Differences Assumptions in Recurrent-Event Settings}\label{sect:wooldridge}
	
This section examines the limits of identification in recurrent-event settings when the only assumptions imposed are slight modifications on those conventionally applied in single-event settings. To ease comparison with existing literature, I follow \citet{wooldridge_twoway_2025} in the naming and general exposition of these assumptions, which can be recast into the recurrent-event framework as follows.\footnote{While the \ref{ass:sutva} assumption is a straightforward repetition, the remaining assumptions differ slightly from their single-event counterpart. The assumption labels used in this paper should therefore be understood as referring to the recurrent-event statement of the corresponding assumption.}

\begin{assumption}[SUTVA]{Stable Unit Treatment Value Assumption}\assumptionlabel{ass:sutva}{SUTVA} The potential outcome of each unit in the population is independent of the treatment timing of any other unit in the population. 
\end{assumption}
    
In ruling out spillover effects, the validity of SUTVA must be evaluated carefully in any setting. 
 
\begin{assumption}[NA]{No Anticipation}\assumptionlabel{ass:anticipation}{NA}
    $Y_{it}=Y_{it}^{(k)}$ for all $it \in \Omega^{(k)}$ and for each $k\in\{0,1,...,K\}$.
\end{assumption}
\ref{ass:anticipation} in a single-event setting requires that outcomes in periods before the treatment occurs are not affected by treatment. The intuitive extension of this to recurrent-event settings is that each treatment event occurrence can only affect outcomes after it has occurred. It follows that the sample on which the $k$-th event occurrence can affect observed outcomes is given by $\bigcup_{j=k}^{K} \Omega^{(j)}$.
	
All difference-in-difference estimators are underpinned by a `parallel trend' assumption that places structure on how untreated outcomes evolve over time. This structure facilitates difference-in-difference estimators by effectively identifying unbiased and observable counterfactual untreated outcomes for treated observations. While it is common to define parallel trends unconditionally, a more flexible, conditional version allows the time-path of untreated outcomes to depend on covariates. For treatment effects to remain identified by conventional estimators under the conditional parallel trends assumption, one needs to ensure that the covariates on which parallel trends are conditioned are unaffected by treatment. In the recurrent-event setting, this requires that covariates are unaffected by the occurrence of \emph{any} event, which is more plausible when one restricts $\mathbf{x}$ to time-invariant characteristics.
	
\begin{assumption}[NBC]{No Bad Controls}\assumptionlabel{ass:nbc}{NBC} Letting $\mathbf{x}(E_i)$ denote time-invariant covariates when the treatment trajectory is $E_i \in \mathcal{E}$, assume $\mathbf{x}(E_i)=\mathbf{x}(\infty), \forall \, E_i \in \mathcal{E}$.
\end{assumption}
	
To formalise the recurrent-event conditional parallel trends assumption, let $|\mathcal{E}| \equiv G$ denote the number of all possible timing trajectories.\footnote{The dimension of $\mathcal{E}$ follows from the observation that the number of possible timing configurations for experiencing $k$ events amounts to choosing $k$ periods from those in which events may occur. Assuming no simultaneous events, such choice is done without replacement. In a panel of length $T$ in which every unit is observed for at least $P$ periods before any event occurs and in which each unit experiences at most $K$ events, the total number of possible trajectory groups is therefore $|\mathcal{E}| \equiv G = \sum_{m=0}^{K}\binom{T-P}{m}$. In practice, realised panels will likely contain fewer timing trajectories than all that are feasible, in which case $G$ denotes the observed count of distinct timing trajectories.} For each trajectory group $e\in\mathcal{E}$, define a dummy $de$ equal to $1$ for units with $E_i=e$ and $0$ otherwise, and collect these into the vector $\mathbf{d}=(d1,\ldots,dG)$. While difference-in-differences can account for initial level differences in outcomes between trajectory groups, conditional parallel trends in the recurrent-event setting stipulates that the change in untreated outcomes over time cannot depend on $\mathbf{d}$.
	
\begin{assumption}[CPT]{Conditional Parallel Trends}\assumptionlabel{ass:cpt}{CPT} For $t=2,...,T$ and time-constant controls $\mathbf{x}$,
    \begin{equation}
        \E[Y^{(0)}_{it}-Y^{(0)}_{i1}\mid\mathbf{d},\mathbf{x}]     =\E[Y^{(0)}_{it}-Y^{(0)}_{i1}\mid\mathbf{x}]
    \end{equation}
\end{assumption}
While \ref{ass:cpt} thus imposes that trends cannot differ according to treatment event timing, they can differ according to covariates $\mathbf{x}$, which may, in turn, correlate with a unit's treatment trajectory. If units are observed for at least two periods prior to treatment, one can relax CPT to allow untreated outcomes to include a linear trajectory-specific time trend (i.e. $\gamma_e de t$ for each $e\in \mathcal{E}$) \citep{wooldridge_twoway_2025}. While this may be attractive in some settings, it makes notation somewhat more cumbersome without changing the main conclusions of this section and I therefore focus on the more restrictive \ref{ass:cpt}.

A final assumption obtains estimating equations amenable to OLS regression by stipulating that the relationship between untreated outcomes and covariates $\mathbf{x}$ is linear in parameters. To formalise this condition, let $fs_t$ for $s=1,...,T$ denote time period dummies so that $fs_t=1$ if $s=t$ and zero otherwise. 
\begin{assumption}[LIN]{Linearity}\assumptionlabel{ass:lin}{LIN}
    For trajectory group indicators $de$, $e\in\mathcal{E}$, and time-constant controls $\mathbf{x}$,
    \begin{align}
        \E[Y_{i1}^{(0)}|\mathbf{d},\mathbf{x}] &= \alpha + \sum_{e\in\mathcal{E}}\beta_e de_i + \mathbf{x}_i'\kappa + \sum_{e\in\mathcal{E}}de_i\,\mathbf{x}_i'\xi_e \label{eq:lin-t1} \\
        \E[Y_{it}^{(0)}|\mathbf{d},\mathbf{x}]-\E[Y_{i1}^{(0)}|\mathbf{d},\mathbf{x}] &= \sum_{s=2}^T\gamma_s fs_t + \sum_{s=2}^T fs_t\,\mathbf{x}_i'\pi_s,\quad t=2,\ldots,T \label{eq:lin-trend}
    \end{align}
\end{assumption}
Since $\mathbf{d}$ does not appear on the right-hand side of equation \eqref{eq:lin-trend}, \ref{ass:lin} implies \ref{ass:cpt} and the latter assumption is thus superfluous once the former is imposed.
	
To examine the limits of identification in a recurrent-event setting under assumptions \ref{ass:sutva}, \ref{ass:anticipation}, \ref{ass:nbc} and \ref{ass:lin}, it is instructive to first examine what quantities are identified. To this end, consider the following model of observed outcomes
\begin{equation}\label{eq:dgp_marginal}
    Y_{it}=\sum_{k=1}^{K}\sum_{s=E_i^{(k)}}^{T}\tau^{(k)}_{is}\left(w^{(k)}_{it}\cdot fs_t\right)+\sum_{s=2}^{T}\gamma_s fs_t+\sum_{s=2}^T fs_t\,\mathbf{x}_i'\pi_s+c_i+u_{it},
\end{equation}
where $\tau^{(k)}_{is}$ are the event-specific effects defined by equation \eqref{eq:marginaleffect}, $w^{(k)}_{it}\equiv\mathbf{1}[t \geq E_i^{(k)}]$ and $u_{it}$ is an unobserved disturbance. $c_i$ is a unit-specific fixed effect that adds a time-invariant, unobserved unit component $\upsilon_i$ to the terms of equation \eqref{eq:lin-t1} 
\begin{equation}\label{eq:unitfe}
    c_i=\alpha + \sum_{e\in\mathcal{E}}\beta_e de_i + \mathbf{x}_i'\kappa + \sum_{e\in\mathcal{E}}de_i\,\mathbf{x}_i'\xi_e + \upsilon_i.
\end{equation}
Under \ref{ass:lin}, the non-stochastic terms in \eqref{eq:unitfe} together with the time effects in \eqref{eq:dgp_marginal} constitute the conditional expectation of untreated outcomes given $(\mathbf{d},\mathbf{x})$. The remaining components, $\upsilon_i$ and $u_{it}$, are deviations from that expectation and therefore satisfy $\E[\upsilon_i|\mathbf{d},\mathbf{x}]=0$ and $\E[u_{it}|\mathbf{d},\mathbf{x}]=0$ for $t=1,\ldots,T$. Since $\mathbf{d}$ records the full timing trajectory, any cell defined by event occurrence and horizon is a function of $\mathbf{d}$ and $t$, and the composite disturbance is therefore mean-zero within every such cell. 
	
The $\left(w^{(k)}_{it}\cdot fs_t\right)$ interactions in \eqref{eq:dgp_marginal} emulate the exposition of \citet{wooldridge_twoway_2025} and ensure that only effects dated $t$ for events that have occurred by period $t$ contribute to the double sum component. Following the definition of the event-specific treatment effects, we can therefore rewrite \eqref{eq:dgp_marginal} replacing the double sum over event-specific effects with a single product of the total trajectory effect $\tau_{it}$ and the indicators $w^{(1)}_{it}$ that select periods after the first event have occurred
\begin{equation}\label{eq:dgp_total}
    Y_{it}=\tau_{it}w^{(1)}_{it}+\sum_{s=2}^{T}\gamma_s fs_t+\sum_{s=2}^T fs_t\,\mathbf{x}_i'\pi_s+c_i+u_{it}.
\end{equation}
The value of this rewriting is that it demonstrates how a data generating process in which observed outcomes are affected by the dynamic effects of several events can be rewritten as one that includes a single treatment effect parameter per treated observation. Such a linear-in-parameters DGP featuring an unobserved unit effect and a single treatment effect parameter per treated observation is the workhorse model deployed in the single-event event-study literature. It follows that very slight modifications of existing estimators developed by this literature are capable of recovering weighted averages of the observation-specific $\tau_{it}$ parameters -- for any set of deterministic, researcher-specified weights -- under the recurrent-event extensions of the conventional single-event assumptions outlined above. In single-event settings, these average effects amount to ATTs of the single treatment event. When events can reoccur, by contrast, they capture the dynamic impact of all distinct events affecting the observation i.e. they are average total treatment trajectory effects.
	
The imputation estimator of \citet{borusyak_revisiting_2024} (BJS) offers a clean demonstration of this observation. Just as in the single-event setting, this estimator first estimates the parameters of the untreated potential outcome model (i.e. the $\gamma_s$ and $\pi_s$ from equation \eqref{eq:dgp_total} and the $\alpha, \,\beta_e, \, \kappa,\, \xi_e$ and $\upsilon_i$ of equation \eqref{eq:unitfe}) via OLS regression on the untreated sample $\Omega^{(0)}$. The \ref{ass:sutva}, \ref{ass:anticipation} and \ref{ass:lin} assumptions ensure that the estimates from this subsample regression are unbiased for the true model parameters conditional on $(\mathbf{d},\mathbf{x})$. Combined with assumption \ref{ass:nbc}, the predicted values from this first-stage regression serve as a valid estimator for counterfactual untreated outcomes $\widehat{Y}^{(0)}_{it}$. The residuals from this projection $Y_{it}-\widehat{Y}^{(0)}_{it}$ thus amount to observation-specific estimates of $\tau_{it}+u_{it}$ for the remaining portion of the sample $\bigcup_{j=1}^{K} \Omega^{(j)}$. Averaging these projection residuals over a sufficiently large subsample eliminates the idiosyncratic terms $u_{it}$ and thereby isolates the average total treatment trajectory effect among the subsample.
	
While the BJS imputation estimator provides a particularly clear example of how one can recover average total treatment trajectory estimates when events can reoccur, it is also straightforward to extend the equivalence result of \citet{wooldridge_twoway_2025} to the recurrent-event settings. This result follows directly from Wooldridge's proofs by replacing the cohort indicators of his exposition with the trajectory indicators $de$ and by imposing the recurrent-event assumptions \ref{ass:sutva}, \ref{ass:anticipation}, \ref{ass:nbc} and \ref{ass:lin}, instead of their single-event analogues. In the recurrent-event setting, the equivalence implies that average effects of a given treatment trajectory (alongside any combination or conditional expectation of these quantities) can be obtained by several distinct but equivalent estimators.\footnote{As in \citet{wooldridge_twoway_2025}, the equivalence is stated for a balanced panel. Where the panel is unbalanced and the untreated component is parametrised by unit fixed effects rather than trajectory dummies, the estimators remain asymptotically equivalent but need not coincide in finite samples.} 

The equivalent estimator set includes an `extended TWFE' model in which the event-time dummies of conventional specifications are replaced with interactions between the timing trajectory dummies and the period dummies over treated cells. Where the untreated model contains covariates, one additionally interacts the trajectory-period interactions with the covariate vector $\mathbf{x}$, centered around the trajectory-period cell average. The coefficient on the trajectory-period interaction then has interpretation as the average total trajectory effect within the trajectory-period cell. This highlights that, just as in single-event settings, TWFE models are capable of returning interpretable parameter estimates in recurrent-event settings provided they are sufficiently flexible. At the time of writing, however, no TWFE model deployed in a recurrent-event setting  includes the terms that render the model `sufficiently flexible' in the \citet{wooldridge_twoway_2025} sense. This point is returned to in Section \ref{sect:twfe}.
	
A final point to note here is the parallel between the results of this subsection and those of \citet{de_chaisemartin_did_2024}. In their more general setting, \citeauthor{de_chaisemartin_did_2024} define the `non-normalized actual-versus-status-quo' effect as the difference between the realized outcome and a counterfactual `status-quo' outcome in which unit's treatment status is unchanged from its initial value. In the present context, the assumption that $E_i^{(1)}>1$ implies the status-quo outcome of all units in the sample is the untreated outcome $Y_{it}^{(0)}$. The total trajectory treatment effect is therefore identical to \citeauthor{de_chaisemartin_did_2024}'s non-normalized actual-versus-status-quo effect. Relative to their work, a contribution of this section is demonstrating how the framework of \citet{wooldridge_twoway_2025} can be generalised to settings featuring recurrent events. This is more flexible than the \citeauthor{de_chaisemartin_did_2024} exposition, in that it allows for covariate-specific time trends, and also highlights that \citeauthor{wooldridge_twoway_2025}'s equivalence result applies in recurrent-event settings. That the same can be achieved in the more general environment of \citet{de_chaisemartin_did_2024} is a conjecture I leave for future work.
	
For observations $it\in\Omega^{(1)}$, which have experienced exactly one event, the total trajectory effect is simply the effect of the first event. For observations that have experienced two or more events, by contrast, the total trajectory effect is a single parameter reflecting the combined effect of all events experienced by unit $i$ up to and including time $t$. For observations $it\in\Omega^{(2)}$, for example, $\tau_{it}=\tau^{(1)}_{it}+\tau^{(2)}_{it}$. Without further restrictions, there are infinite possible ways in which $\tau_{it}$ can be decomposed into the event-specific components $\tau^{(1)}_{it}$ and $\tau^{(2)}_{it}$ and assumptions \ref{ass:sutva}, \ref{ass:anticipation}, \ref{ass:nbc} and \ref{ass:lin} provide no basis for identifying the true decomposition. A data generating process satisfying \ref{ass:sutva}, \ref{ass:anticipation}, \ref{ass:nbc} and \ref{ass:lin}, for example, would result in the same distribution of observables as another constructed by adding a constant $c$ to $\tau^{(1)}_{it}$ and subtracting it from $\tau^{(2)}_{it}$ at every observation in $\Omega^{(2)}$. Since the assumptions place no restriction on event-1 effects across the $\Omega^{(1)}$ and $\Omega^{(2)}$ subsamples, both processes satisfy them, and any estimator that is consistent for the total trajectory effect would return the same value across the distinct processes. 

Thus, while average total trajectory effects are identified under assumptions similar to those conventionally applied in single-event settings, the event-specific effects $\tau_{it}^{(k)}$ are not, in general.
	
\section{Identifying Event-Specific Effects}\label{sect:estimator}
	
\subsection{Identifying Assumptions}
To identify event-specific effects, it is necessary to place some restriction on their form. While several assumptions can yield identification, the following is proposed as a relatively general restriction that mirrors the familiar `conditional parallel trends' assumption introduced above. 
	
\begin{assumption}[CPET]{Conditional Parallel Effect-Trends.}\assumptionlabel{ass:cpet}{CPET}
    For each $k\in\{2,\ldots,K\}$, each $m\in\{1,\ldots,k-1\}$, and each prior-event
    relative time $\ell\geq1$:
    \[
    \E\left[\tau^{(m)}_{i\ell}-\tau^{(m)}_{i0} \;\middle|\;
    (i,E_i^{(m)}+\ell)\in \Omega^{(k)},\, \mathbf{x}^{(m)}_i\right]
    =
    \E\left[\tau^{(m)}_{i\ell}-\tau^{(m)}_{i0} \;\middle|\;
    (i,E_i^{(m)}+\ell)\in \Omega^{(m)},\, \mathbf{x}^{(m)}_i\right],
    \]
    where $\mathbf{x}^{(m)}_i$ is a function of the time-invariant covariates $\mathbf{x}$ and of the first $m$ components of the treatment trajectory.
\end{assumption}
	
\ref{ass:cpet} is similar in form to \ref{ass:cpt} but imposes structure on growth of the event-$m$ effect $\tau^{(m)}_{it}$, rather than on growth of the untreated potential outcome $Y_{it}^{(0)}$. Similarly to \ref{ass:cpt}, the conditioning vector $\mathbf{x}^{(m)}_i$ cannot contain characteristics that either relate to subsequent events or that are affected by them. Being a function of the trajectory up to event $m$ and of $\mathbf{x}$, the conditioning vector nonetheless accommodates attributes of the $m$-th event such as measures of its intensity and the number of periods between event $m$ and the previous event (i.e. $E^{(m)}_i-E^{(m-1)}_i$). Since $\mathbf{x}^{(m)}_i$ effectively captures heterogeneity in event effects, it is advisable to include any attribute that the researcher wishes to examine as a dimension of such heterogeneity.
	
To judge the plausibility of \ref{ass:cpet}, it is helpful to view it as ruling out a particular type of selection. Specifically, it rules out that growth in the event-$m$ effect governs the selection of units into having experienced another event by period $E_i^{(m)}+\ell$. If the event of interest were job loss and the outcome of interest were earnings, for example, \ref{ass:cpet} would be violated if individuals who experienced persistent earnings reductions after a first job loss event (conditional on characteristics $\mathbf{x}^{(m)}_i$) were more likely to lose another job than individuals who experienced only a short-lived earnings drop. While it is not too hard to think of mechanisms through which such selection could occur in this example, \ref{ass:cpet} seems more plausible when the events of interest are due to factors outside the immediate control of economic agents, such as natural disasters or episodes of severe air pollution. Even in settings where \ref{ass:cpet} seems a relatively strong assumption, supporting evidence can be obtained by comparing estimated effect paths among units that do and do not go on to experience a subsequent event, over horizons at which both groups remain in $\Omega^{(m)}$. Such a heuristic test is similar in spirit to the `pre-trend' analysis that is typically presented as validation of \ref{ass:cpt} and is discussed in greater detail in Section \ref{sect:test_eventrep}.
	
Just as \ref{ass:cpt} identifies treatment effects in single-event settings by effectively providing a valid counterfactual for the untreated outcome, \ref{ass:cpet} identifies event-specific treatment effects in recurrent-event settings by providing a valid counterfactual for the prior-event effects. When it holds, units that have experienced exactly $m$ events at relative time $E^{(m)}_i+\ell$ provide a valid counterfactual for growth in event-$m$ effects for units who have experienced another event by relative time $E^{(m)}_i+\ell$. As highlighted above, \ref{ass:sutva}, \ref{ass:anticipation}, \ref{ass:nbc} and \ref{ass:lin} identify the first-event effects $\tau^{(1)}_{i\ell}$ for observations $it \in \Omega^{(1)}$. \ref{ass:cpet} then enables an inductive process by providing valid counterfactuals that first identify second-event effects, then third-event effects and so on until the effects of all events up to event $K$ have been recovered. For this to be possible, however, the sample must include the data necessary for all required counterfactuals to be estimable. Formally, this requires 
	
\begin{assumption}[SUPP]{Support for CPET imputation.}\assumptionlabel{ass:supp}{SUPP}
    For each $k\in\{2,\ldots,K\}$, each $m<k$, and each relative time $\ell\geq1$ at
    which $\tau^{(m)}_{i\ell}$ is required for units in $\Omega^{(k)}$,
    \[
    \Pr\bigl[(i,E_i^{(m)}+\ell)\in\Omega^{(m)}\bigr]>0
    \]
    and $\E\bigl[\mathbf{x}^{(m)}_i\mathbf{x}^{(m)\prime}_i\mid
    (i,E_i^{(m)}+\ell)\in\Omega^{(m)}\bigr]$ is nonsingular.
\end{assumption}
	
\ref{ass:supp} ensures that the comparison group required by \ref{ass:cpet} is populated at every relative time the counterfactual construction needs and exhibits sufficient variation in $\mathbf{x}^{(m)}_i$, so that the growth relationship can be estimated at each such horizon. To give a concrete example of the first condition, suppose that $K=2$ and $T=4$. If the sample includes units that experience a first event in period 2 and a second event in period 3, then identification of the second event's impact effect, $\tau^{(2)}_{i\ell}$ at horizon $\ell=0$, requires other units in the sample that experience a first event in period 2 and a second event in period 4 at the earliest. Similarly, identification of the second event's effect at horizon $\ell=1$ requires the sample to feature units that experience a first event in period 2 and no second event. In practice, \ref{ass:supp} may hold for low event counts (e.g. $k=2$) but may be violated for later events (e.g. $k=3$), as identification of later events requires sample support over longer prior-event horizons $\ell$. Because identification follows inductive logic, early event effects can be identified even if later events are not. 

Just as \ref{ass:lin} enables straightforward estimation of total trajectory effects, estimating event-specific effects is facilitated by imposing further restriction on the relationship between effect growth and $\mathbf{x}^{(m)}_i$. Following \ref{ass:lin}, this relationship is specified as linear in parameters, which allows prior-event counterfactuals to be estimated by OLS. 
	
\begin{assumption}[LEG]{Linear Effect Growth.}\assumptionlabel{ass:leg}{LEG}
    For each $m\in\{1,\ldots,K-1\}$ and each relative time $\ell\geq1$, there exists
    a parameter vector $\lambda^{(m)}_\ell$ such that, for all units with
    $E^{(m)}_i<\infty$,
    \[
    \E\bigl[\tau^{(m)}_{i\ell}-\tau^{(m)}_{i0}\;\big|\;\mathbf{d},\mathbf{x}\bigr]
    =\mathbf{x}^{(m)\prime}_i\lambda^{(m)}_\ell,
    \]
    where $\mathbf{x}^{(m)}_i$ is as defined in Assumption~\ref{ass:cpet} and
    contains a constant.
\end{assumption}
	
Since $\mathbf{d}$ records the full timing trajectory, whether a unit lies in $\Omega^{(m)}$ or in $\Omega^{(k)}$ at horizon $\ell$ is a function of $\mathbf{d}$. Taking the expectation of \ref{ass:leg} conditional on either the $\Omega^{(m)}$ or the $\Omega^{(k)}$ subsample together with $\mathbf{x}^{(m)}_i$ therefore leaves the right-hand side unchanged, since it is a function of $\mathbf{x}^{(m)}_i$ alone, implying that both conditional means in \ref{ass:cpet} equal $\mathbf{x}^{(m)\prime}_i\lambda^{(m)}_\ell$. \ref{ass:leg} thus implies \ref{ass:cpet} and, as with \ref{ass:lin} and \ref{ass:cpt}, the latter is superfluous once the former is imposed.
	
While \ref{ass:leg} may appear restrictive, the model is linear in parameters rather than in the underlying characteristics and can accommodate a considerable degree of flexibility. Where the elements of $\mathbf{x}^{(m)}_i$ are discrete and the model saturated, for example, the linear form is without loss of generality, since any conditional mean function of a discrete variable can be written as a linear combination of indicators for its support points. In that case $\mathbf{x}^{(m)\prime}_i\lambda^{(m)}_\ell$ is the mean growth in the event-$m$ effect within the cell to which unit $i$ belongs, and the substantive content of \ref{ass:leg} is then the requirement that this cell mean does not vary with the remaining components of $(\mathbf{d},\mathbf{x})$, in particular with the subsequent trajectory. Where the elements of $\mathbf{x}^{(m)}_i$ are continuous, the linear form becomes a genuine restriction, although non-linear relationships can be accommodated by including transformations of the underlying variables or by modelling the continuous components using splines. The value of the linear form is that it permits the fitted relationship to be evaluated at values of $\mathbf{x}^{(m)}_i$ not observed among units remaining in $\Omega^{(m)}$, which is why \ref{ass:supp} imposes no condition on the overlap of $\mathbf{x}^{(m)}_i$ between the two groups. 
	
\subsection{The Sequential Imputation Estimator}
\ref{ass:supp} and \ref{ass:leg}, together with \ref{ass:sutva}, \ref{ass:anticipation}, \ref{ass:nbc} and \ref{ass:lin}, identify event-specific effects and justify the following sequential imputation estimator (SIE), whose consistency is established below under further regularity conditions. In what follows, $\mathbf{x}_i$ denotes the time-invariant covariates entering \ref{ass:lin} and $\mathbf{x}^{(m)}_i$ the event-$m$ growth covariates entering \ref{ass:cpet} and \ref{ass:leg}, which are constructed from $\mathbf{x}_i$ and the trajectory through event $m$. 
	
\begin{definition}[Sequential Imputation Estimator]\label{def:sie}\leavevmode
    \setlength{\abovedisplayskip}{5pt}%
    \setlength{\belowdisplayskip}{5pt}%
    \begin{description}[leftmargin=0pt,style=unboxed,itemsep=0.8ex,topsep=0.8ex,parsep=0pt]
        
        \item[Stage 0.] Estimate $\widehat c_i$, $\widehat\gamma_t$ and $\widehat\pi_t$ of model \eqref{eq:dgp_marginal} via OLS of $Y_{it}$ on unit indicators, period indicators and interactions of the period indicators with $\mathbf{x}_i$ using $\Omega^{(0)}$, giving $\widehat{Y}^{(0)}_{it}=\widehat c_i+\widehat\gamma_t+\mathbf{x}_i'\widehat\pi_t$ for all $it$. Total trajectory effects for all $it\in\bigcup_{m=1}^K\Omega^{(m)}$ are then estimated as $$\widehat\tau_{it}=Y_{it}-\widehat{Y}^{(0)}_{it}.$$ 
        
        \item[Stage $\boldsymbol{m}$, for $\boldsymbol{m=1,\ldots,K-1}$:]\mbox{}
        \begin{description}[style=unboxed,leftmargin=1.5em,itemindent=0pt,itemsep=0.6ex,topsep=0.6ex,parsep=0pt]
            
            \item[$\boldsymbol{m}$.\textbf{a.}] For all $it\in\Omega^{(m)}$, estimate $\tau^{(m)}_{it}$ as
            \[
            \widehat\tau^{(m)}_{it}=\widehat\tau_{it}
            -\sum_{j=1}^{m-1}\widehat\tau^{(j)}_{it},
            \]
            where $\sum_{j=1}^{0}\cdot\equiv0$.
            
            \item[$\boldsymbol{m}$.\textbf{b.}] Following \ref{ass:leg}, model growth in the event-$m$ effect at each relative time $\ell\geq1$ as
            \[
            \tau^{(m)}_{i,E^{(m)}_i+\ell}-\tau^{(m)}_{i,E^{(m)}_i}
            =\mathbf{x}^{(m)\prime}_i\lambda^{(m)}_\ell,
            \]
            and estimate $\widehat{\lambda}^{(m)}_\ell$ via OLS of $\bigl(\widehat\tau^{(m)}_{i,E^{(m)}_i+\ell}-\widehat\tau^{(m)}_{i,E^{(m)}_i}\bigr)$ on $\mathbf{x}^{(m)}_i$ using units $\{i:(i,E^{(m)}_i+\ell)\in\Omega^{(m)}\}$, which is nonempty by \ref{ass:supp}.
            
            \item[$\boldsymbol{m}$.\textbf{c.}] For all $it\in\Omega^{(k)}$, $k>m$, estimate $\tau^{(m)}_{it}$ as
            \[
            \widehat\tau^{(m)}_{it}=\widehat\tau^{(m)}_{i,E^{(m)}_i}
            +\mathbf{x}^{(m)\prime}_i\widehat{\lambda}^{(m)}_{t-E^{(m)}_i}.
            \]
        \end{description}
        
        \item[Stage $\boldsymbol{K}$.] For all $it\in\Omega^{(K)}$, estimate $\tau^{(K)}_{it}$ as
        \[
        \widehat\tau^{(K)}_{it}=\widehat\tau_{it}
        -\sum_{j=1}^{K-1}\widehat\tau^{(j)}_{it}.
        \]
        
        \item[Aggregation.] Aggregate the observation-specific $\widehat\tau^{(m)}_{it}$ using pre-specified weights $w_{it}$ to obtain the quantity of interest. The average effect of event $m$ at horizon $\ell$, for example, is estimated as
        \[
        \widehat\tau^{(m)}_\ell=\bigl|\mathcal{N}^{(m)}_\ell\bigr|^{-1}
        \sum_{i\in\mathcal{N}^{(m)}_\ell}\widehat\tau^{(m)}_{i\ell},
        \]
        where $\widehat\tau^{(m)}_{i\ell}$ denotes $\widehat\tau^{(m)}_{it}$ at $t=E^{(m)}_i+\ell$.
    \end{description}
\end{definition}
	
Several aspects of the SIE deserve comment. First, since the environment assumes events are not simultaneous (i.e. $E^{(m)}_i<E^{(m+1)}_i$ for all events $m$ that unit $i$ eventually experiences), it is certain that $(i,E^{(m)}_i)\in\Omega^{(m)}$ whenever unit $i$ experiences an $m$-th event. This ensures the event-$m$ impact effect $\widehat\tau^{(m)}_{i,E^{(m)}_i}$ for units that experience an $m$-th event is always obtained in Stage $m$.a. This is why \ref{ass:cpet} and \ref{ass:leg}, which underpin steps $m$.b and $m$.c, restrict \emph{growth} in event-specific treatment effects rather than their level. Each unit's imputed event-specific effect path is anchored at its own realised impact effect rather than at a group average, so heterogeneity in the level of event effects is accommodated without restriction and \ref{ass:cpet} is called upon only to extend that level to horizons by which point the unit has experienced a subsequent event.
	
Second, the regressand at Stage $m$.b is constructed from estimated effects rather than observed data, and therefore carries estimation error alongside any departure of the true growth relationship from \ref{ass:leg}. For OLS to recover $\lambda^{(m)}_\ell$, and so for Stage $m$.c to construct valid counterfactuals, the regressors must be orthogonal to both prior-event estimation error and departures from linearity. \ref{ass:lin} and \ref{ass:leg} together imply these two quantities are mean zero conditional on $(\mathbf{d},\mathbf{x})$ which, since $\mathbf{x}^{(m)}_i$ is a function of the treatment trajectory and the time-invariant covariates, ensures orthogonality. \ref{ass:cpet} then justifies extrapolation of the fitted relationship beyond the sample on which it is estimated, by asserting that growth estimated among units remaining in $\Omega^{(m)}$ also describes those that have moved on to $\Omega^{(k)}$.
	
Finally, the estimator is sequential in the sense that consistency at stage $m$ presupposes consistency at every preceding stage, since Stage $m$.a subtracts the prior-event effects constructed in Stage $j$.c for each $j<m$. \ref{ass:cpet}, \ref{ass:supp} and \ref{ass:leg} must therefore hold at every $j<m$, and a failure at any stage propagates to all subsequent ones. Where \ref{ass:supp} fails at a given $(m,\ell)$, for example, any average effect which requires the unavailable $\lambda^{(m)}_\ell$ should be reported as inestimable.\footnote{An alternative that constructs the estimate omitting the units requiring $\lambda^{(m)}_\ell$ would implicitly condition on a subpopulation determined by the inter-event gap, which $\mathbf{x}^{(m)}_i$ permits to modify effects, and would hence depart from the population ATT.} The converse also holds, however, in that no stage draws on quantities constructed at later ones. Estimates of early event effects are therefore unaffected by failures arising at later events, so that a researcher whose sample cannot support identification of, say, third-event effects may nonetheless report first- and second-event effects.
	
To establish consistency of the SIE, several further regularity conditions are imposed.
\begin{assumption}[REG]{Regularity.}\assumptionlabel{ass:reg}{REG}\mbox{}
    \begin{enumerate}[label=(\roman*),leftmargin=2em,itemsep=0.4ex,topsep=0.4ex,parsep=0pt]
        \item $\bigl\{\bigl(E_i,\mathbf{x}_i,\{Y_{it}(\cdot)\}_{t\leq T},\{u_{it}\}_{t\leq T}\bigr)\bigr\}_{i=1}^{N}$ are independent and identically distributed across $i$, with $T$ fixed and $N\to\infty$.
        \item $\E[u_{it}^2]<\infty$ for $t=1,\ldots,T$.
        \item $\E\|\mathbf{x}_i\|^2<\infty$, $\E\|\mathbf{x}^{(m)}_i\|^2<\infty$ and $\E\bigl[(\tau^{(m)}_{i\ell})^2\bigr]<\infty$ for all $m$ and $\ell$, and the products $\|\mathbf{x}_i\|^2u_{it}^2$ and $\|\mathbf{x}^{(m)}_i\|^2\bigl(\tau^{(m)}_{i\ell}\bigr)^2$ have finite expectation.
        \item Over $\Omega^{(0)}$, the period indicators $f2_t,\ldots,fT_t$ and their interactions with $\mathbf{x}_i$ are not collinear once expressed as deviations from their within-unit means.
        \item Any set of observations over which an average effect is reported constitutes a non-vanishing share of the sample as $N\to\infty$, with $\max_{it}w_{it}\to0$.\inlineqed
    \end{enumerate}
    \let\popQED\relax
\end{assumption}

\begin{proposition}[Consistency of the SIE]\label{prop:sie}
    Suppose Assumptions~\ref{ass:sutva}, \ref{ass:anticipation}, \ref{ass:nbc}, \ref{ass:lin}, \ref{ass:supp}, \ref{ass:leg} and \ref{ass:reg} hold. Then $\plim_{N\to\infty} \widehat\lambda^{(m)}_\ell=\lambda^{(m)}_\ell$ for each $m\leq K-1$ and each $\ell\geq1$, and for each $m\in\{1,\ldots,K\}$ and any weights $w_{it}$ satisfying \ref{ass:reg}(v),
    \[
    \plim_{N\to\infty}\Bigl(\sum_{it\in\Omega^{(\geq m)}}w_{it}\widehat\tau^{(m)}_{it}
    -\sum_{it\in\Omega^{(\geq m)}}w_{it}\,\E\bigl[\tau^{(m)}_{it}\mid\mathbf{d},\mathbf{x}\bigr]\Bigr)=0 .
    \]
\end{proposition}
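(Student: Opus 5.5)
The argument is an induction on the stage index $m$. The quantity to carry through the induction is the estimation error of each observation-specific estimate. Define
\[
\widehat\tau^{(j)}_{it}=\E[\tau^{(j)}_{it}\mid\mathbf{d},\mathbf{x}]+\eta^{(j)}_{it}+r^{(j)}_{it},
\]
where $\eta^{(j)}_{it}$ is a term that is mean zero conditional on $(\mathbf{d},\mathbf{x})$ and $r^{(j)}_{it}$ is a remainder that is $o_p(1)$ uniformly over the finitely many cells (trajectory $\times$ period). The second display of the proposition then follows once $\sum w_{it}\eta^{(m)}_{it}\to_p0$ and $\sum w_{it}r^{(m)}_{it}\to_p0$.

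\textbf{Base case (Stage 0).} The first-stage OLS on $\Omega^{(0)}$ is the BJS/Wooldridge untreated-outcome regression. By \ref{ass:sutva}, \ref{ass:anticipation} and \ref{ass:lin}, together with \ref{ass:reg}(i)--(iv), the period coefficients $\widehat\gamma_t$ and $\widehat\pi_t$ are consistent. The unit effects $\widehat c_i$ are not consistent with $T$ fixed, but their error is the within-unit average of pre-event disturbances, $\bar u_i^{(0)}$, which is mean zero given $(\mathbf{d},\mathbf{x})$. Hence
\[
\widehat\tau_{it}=\tau_{it}+u_{it}-\bar u^{(0)}_i+o_p(1)\cdot(1,\mathbf{x}_i),
\]
and by \ref{ass:nbc} and the decomposition $\tau_{it}=\E[\tau_{it}\mid\mathbf{d},\mathbf{x}]+(\tau_{it}-\E[\cdot])$, the error term has the required form. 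For $it\in\Omega^{(1)}$, Stage 1.a gives $\widehat\tau^{(1)}_{it}=\widehat\tau_{it}$, so the claim holds there.

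\textbf{Inductive step.} Assume the representation holds for every $j<m$ on all of $\Omega^{(\geq j)}$, and for $j=m$ on $\Omega^{(m)}$ via Stage $m$.a. In Stage $m$.b the regressand satisfies
\[
\widehat\tau^{(m)}_{i,E^{(m)}_i+\ell}-\widehat\tau^{(m)}_{i,E^{(m)}_i}=\mathbf{x}^{(m)\prime}_i\lambda^{(m)}_\ell+\varepsilon_i+o_p(1)\text{ terms}.
\]
By \ref{ass:leg}, applied on the estimation sample $\{i:(i,E^{(m)}_i+\ell)\in\Omega^{(m)}\}$, $\varepsilon_i$ is mean zero given $(\mathbf{d},\mathbf{x})$. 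That sample is selected by $\mathbf{d}$ alone, and $\mathbf{x}^{(m)}_i$ is a function of $(\mathbf{d},\mathbf{x})$, so the orthogonality $\E[\mathbf{x}^{(m)}_i\varepsilon_i\mid\text{sample}]=0$ holds. The design matrix is nonsingular by \ref{ass:supp}, and the moments are finite by \ref{ass:reg}(iii), so the WLLN and Slutsky give $\widehat\lambda^{(m)}_\ell\to_p\lambda^{(m)}_\ell$. This is the first claim.

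In Stage $m$.c, the imputed value for $it\in\Omega^{(k)}$ with $k>m$ is
\[
\widehat\tau^{(m)}_{i,E^{(m)}_i}+\mathbf{x}^{(m)\prime}_i\lambda^{(m)}_\ell+\mathbf{x}^{(m)\prime}_i(\widehat\lambda-\lambda).
\]
Using \ref{ass:leg} conditional on $(\mathbf{d},\mathbf{x})$ for units in $\Omega^{(k)}$, this equals $\E[\tau^{(m)}_{it}\mid\mathbf{d},\mathbf{x}]$ plus a mean-zero term plus $o_p(1)\cdot\|\mathbf{x}^{(m)}_i\|$. \ref{ass:leg} holds for all units, which is how it delivers the extrapolation that \ref{ass:cpet} would otherwise supply. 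Stage $(m+1)$.a (or Stage $K$) forms the residual
\[
\widehat\tau_{it}-\sum_{j\leq m}\widehat\tau^{(j)}_{it},
\]
whose conditional mean is $\E[\tau_{it}-\sum_{j\leq m}\tau^{(j)}_{it}\mid\cdot]=\E[\tau^{(m+1)}_{it}\mid\cdot]$. Subtracting the representations preserves the mean-zero-plus-$o_p(1)$ structure, which closes the induction.

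\textbf{Aggregation.} The remainder is bounded by
\[
o_p(1)\cdot\sum w_{it}\|(1,\mathbf{x}_i,\mathbf{x}^{(j)}_i)\|,
\]
and the sum is $O_p(1)$ by \ref{ass:reg}(iii) and (v). For the mean-zero part, \ref{ass:reg}(v) makes the weights of order $1/N$ on a non-vanishing share of units. Within-unit dependence across the $T$ periods is harmless because $T$ is fixed. So $\sum w_{it}\eta_{it}$ is an average of i.i.d. mean-zero unit-level sums with finite variance (\ref{ass:reg}(ii)--(iii)), and a WLLN/Chebyshev argument conditional on $(\mathbf{d},\mathbf{x})$ sends it to zero.

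\textbf{Main obstacle.} The hardest part is the bookkeeping in the inductive step. The error of $\widehat\tau^{(m)}_{i,E^{(m)}_i}$ contains the unit-specific deviation $\tau^{(m)}_{i0}-\E[\tau^{(m)}_{i0}\mid\cdot]$ and $\bar u_i^{(0)}$, and these reappear inside both the Stage $m$.b regressand and the Stage $m$.c imputations. I must check that these terms are mean zero given $(\mathbf{d},\mathbf{x})$, and not merely given the selected subsample, so that the OLS orthogonality and the aggregation LLN both go through. The first thing I would try is to condition throughout on the full $(\mathbf{d},\mathbf{x})$, using that every selection event is $\mathbf{d}$-measurable.
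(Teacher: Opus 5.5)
Your proposal is correct and follows essentially the paper's route: induction on $m$, each estimation error split into a unit-level part that is mean zero given $(\mathbf{d},\mathbf{x})$ and a part driven by common estimation error, OLS orthogonality at Stage $m$.b from the $\mathbf{d}$-measurability of the estimation sample, and aggregation by Chebyshev plus factoring out the common error. The only differences are bookkeeping: you centre on $\E[\tau^{(j)}_{it}\mid\mathbf{d},\mathbf{x}]$ rather than on $\tau^{(j)}_{it}$, which folds the LEG residuals into the mean-zero term; and your remainder $r^{(j)}_{it}$ should be stated, as you effectively do later, as a common $o_p(1)$ vector times a unit-level function of $(\mathbf{d},\mathbf{x})$ with finite second moment rather than as uniformly $o_p(1)$ over trajectory--period cells, since with continuous $\mathbf{x}_i$ it varies within cells and that product form is what makes its contribution to the Stage $m$.b numerator vanish.
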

The proof proceeds by induction on the event-count $m$ and is given in full in Appendix~\ref{app:proof_sie}. 
	
\subsection{Relation to Other Estimators}\label{sect:equivalence}
	
The SIE, as its name suggests, is an imputation estimator in the spirit of BJS. Just as in their proposed estimator, the untreated observations $it\in\Omega^{(0)}$ are used to estimate the parameters that comprise the untreated counterfactual outcome, whose extrapolation into the treated sample is justified by \ref{ass:lin}. Imputation of untreated potential outcomes in this manner is Stage~0 of the SIE, and identifies total trajectory effects. The subsequent stages implement a similar imputation procedure in which observations that have experienced exactly $m$ events are used to estimate the parameters that characterise growth in event-$m$ treatment effects, the extrapolation of which to observations that have experienced $k>m$ events is justified by \ref{ass:cpet}.
	
In light of \citet{wooldridge_twoway_2025}, who shows the BJS imputation estimator can be equivalently implemented using TWFE regression, it is worth considering whether a similar equivalence holds for the SIE. Just as Stage~0 of the SIE essentially implements the first stage of the BJS estimator to recover estimates of the total trajectory effect, estimates of trajectory-specific total effects can alternatively be obtained using OLS regression applied to sufficiently flexible TWFE models. As outlined in Section~\ref{sect:wooldridge}, such sufficiency requires interacting the trajectory dummies with period dummies and with the covariates $\mathbf{x}_i$ jointly, the latter centred within each trajectory-period cell. The regression-based approach thereby summarises the treated sample through a fixed set of cell-level parameters, unlike the imputation-based approach, which recovers observation-specific total trajectory effect estimates.
	
The remaining steps of the SIE amount to linear combinations of the Stage~0 total trajectory estimates which, under \ref{ass:leg} and \ref{ass:supp}, consistently decompose the total trajectory effects into event-specific components. To see this, observe that Stage~1.a sets the first-event estimates $\hat\tau^{(1)}_{it}$ equal to the total trajectory estimates for $it\in\Omega^{(1)}$. Stage~1.b regresses differences in these event-1 estimates on covariates and thus estimates the effect-growth as a linear combination of the total trajectory estimates for $it\in\Omega^{(1)}$. Stage~1.c then combines the growth-effect parameter estimates with the first-event impact estimates from Stage~1.a to obtain first-event effect estimates for $it\in\Omega^{(m)}$ when $m>1$. Estimates of the event-specific impacts of subsequent events proceed in the same manner, involving linear transformations of the event-specific estimates for prior events. All event-specific $\hat\tau^{(m)}_{it}$ are thereby linear combinations of the $\hat\tau_{it}$ estimates from Stage~0.
	
Linear transformations of the total trajectory effects recovered by the regression approach can likewise provide consistent estimates of average event-specific effects. As \citet{wooldridge_twoway_2025} highlights in the single-event case, it is thereby possible to obtain aggregate event-specific estimates and their standard errors from a single regression using standard software. The SIE and the regression-based approach differ, however, in how the growth covariates $\mathbf{x}^{(m)}_i$ are incorporated. Because the regression summarises the treated sample through its cell parameters and covariate slopes, the decomposition can be carried out within it only insofar as $\mathbf{x}^{(m)}_i$ is spanned by the interactions the regression contains. Where $\mathbf{x}^{(m)}_i$ is constant within trajectory groups this holds immediately, and no information is lost in implementing Stage~$m$.b with trajectory-group average effects. More generally, leveraging \ref{ass:leg} within the TWFE framework requires one of two approaches.
	
The first involves interacting the trajectory-period dummies on the treated sample with those elements of $\mathbf{x}^{(m)}_i$ that are not already spanned by the interactions in the TWFE regression. Elements that are functions of the trajectory through event $m$, including timing, inter-event gaps and intensity, are constant within trajectory groups by construction and are therefore captured by the trajectory-period dummies themselves. Since those dummies are also interacted with the centred covariates, the fitted treated effect varies linearly with $\mathbf{x}_i$ within each cell, so elements linear in $\mathbf{x}_i$, and products of these with trajectory-determined characteristics, are captured too. Further interactions are required only where \ref{ass:leg} employs non-linear transformations of $\mathbf{x}_i$. With those included, $\lambda^{(m)}_\ell$ is recovered exactly, at the cost of a slope vector for every trajectory-period cell and of a stricter support requirement than \ref{ass:supp}. Whereas \ref{ass:supp} asks only that $\E\bigl[\mathbf{x}^{(m)}_i\mathbf{x}^{(m)\prime}_i\mid(i,E_i^{(m)}+\ell)\in\Omega^{(m)}\bigr]$ be non-singular among the units remaining in $\Omega^{(m)}$ at that horizon, the interacted specification requires those elements of $\mathbf{x}^{(m)}_i$ that vary within trajectory groups to exhibit sufficient variation within each group separately, the remaining elements being absorbed by the trajectory-period dummies. Where treatment is non-binary the requirement is more demanding still, since the trajectory groups themselves must then partition on intensity as well as timing and contain correspondingly fewer units.
	
The second involves replacing $\mathbf{x}^{(m)}_i$ in \ref{ass:leg} with the trajectory-group mean characteristics. This yields a between-group estimator of $\lambda^{(m)}_\ell$, which remains consistent under \ref{ass:leg} and economises on parameters, but identifies its coefficients from variation in group mean characteristics alone. Elements of $\mathbf{x}^{(m)}_i$ determined by the trajectory up to event $m$, such as event timing, intensity or the gap since the previous event, are constant within trajectory groups by construction and hence unaffected. Coefficients on elements derived from $\mathbf{x}_i$, by contrast, are recovered only insofar as their group means differ. A characteristic unrelated to event timing, for example, would have approximately equal mean across trajectory groups and its coefficient would not be identified. The SIE, by contrast, works with observation-specific estimates throughout and thereby faces neither difficulty, pooling information across trajectory groups when estimating the effect-growth parameters while retaining the within-group variation in $\mathbf{x}^{(m)}_i$.
	
A similar consideration governs the aggregation weights. Since the regression returns cell parameters and their covariate slopes, it recovers an estimand $\sum_{it}w_{it}\tau^{(m)}_{it}$ exactly when the weights $w_{it}$ are constant within trajectory-period cells or linear within them in the interacted characteristics. The average event-specific effects $\tau^{(m)}_\ell$ satisfy the first of these, since all units in a trajectory group experience their $m$-th event in the same period. Estimands whose weights vary within cells in some other manner require the design to be extended accordingly, with the same consequences for dimensionality as above.
	
A notable advantage of the TWFE regression-based approach to estimating aggregate event-specific effects is that it provides an analytic variance for such quantities. Since every event-specific estimate is a linear combination of the total trajectory effects with weights fixed by $(\mathbf{d},\mathbf{x})$, the variance of any aggregate is the corresponding linear transformation of the variance-covariance matrix of the regression coefficients. Two considerations make this route unattractive. First, the transformation is cumbersome to derive beyond the first event since Stage~$m$.a subtracts the imputed path of every prior event. Second, such an analytic approach is available only where the aggregate event-specific effects of interest can be obtained as linear transformations of the coefficients of a single TWFE regression. As the preceding discussion explains, this holds only under conditions on the growth covariates $\mathbf{x}^{(m)}_i$ and the aggregation weights, and, in unbalanced panels, only asymptotically. While the SIE does not provide analytic standard errors, it remains consistent when these conditions do not hold and inference on the estimands it provides can be undertaken using a Bayesian bootstrap procedure outlined in the following section.
	
While it is therefore possible to obtain event-specific effect estimates from a TWFE regression, this paper proposes the imputation approach of the SIE as the primary estimator, owing to its greater flexibility in accommodating unit-level heterogeneity, arbitrary aggregation weights and unbalanced panels. 
	
\subsection{Inference}\label{sect:inference}
	
The Bayesian bootstrap \citep{rubin_bayesian_1981,lo_class_1987} provides inference on the SIE without recourse to an analytic variance. For a general estimand $\tau^{(m)}_w=\sum_{it\in\Omega^{(\geq m)}}w_{it}\tau^{(m)}_{it}$, the procedure is as follows.
	
\begin{enumerate}[leftmargin=2em,itemsep=0.3ex,topsep=0.4ex]
    \item Select the number of bootstrap draws, $B$.
    \item For each draw $b\in\{1,\ldots,B\}$:
    \begin{enumerate}[leftmargin=1.5em,itemsep=0.2ex,topsep=0.3ex]
        \item Draw unit-level weights $(\omega^b_1,\ldots,\omega^b_N)\sim\mathrm{Dirichlet}(1,\ldots,1)$, scaled to sum to $N$, applying $\omega^b_i$ to every observation of unit $i$ across time.\footnote{To accommodate dependence across units, such as that due to spatially-correlated shocks, the weights should instead be drawn at the level of the cluster and applied to every observation of every unit within it.}
        \item Re-estimate Stage~0 by weighted OLS on $\Omega^{(0)}$.
        \item Recompute $\widehat\lambda^{(m)b}_\ell$ by weighted OLS at each Stage~$m$.b, and all subsequent-stage estimates, using the same weights throughout.
        \item Aggregate the resulting $\widehat\tau^{(m)b}_{it}$ using the weights $w_{it}$ to obtain $\widehat\tau^{(m)b}_w$, recomputing any weights that depend on realised cell sizes from the draw's own effective sizes.
    \end{enumerate}
    \item Confidence intervals are obtained as percentiles of the empirical distribution of $\widehat\tau^{(m)b}_w$ across bootstrap draws.
\end{enumerate}

Validity requires that the estimator be a smooth functional of the unit-level empirical distribution. This holds in the present setting because each $\widehat\tau^{(m)}_{it}$ depends on unit $i$'s own observations and on the common nuisance parameters $\widehat\gamma_t$, $\widehat\pi_t$ and $\widehat\lambda^{(j)}_\ell$, which are finite in number when $T$ and $K$ are fixed. The aggregate is therefore an average across units of a function of each unit's data evaluated at a finite-dimensional first-step estimate, and the required smoothness follows under \ref{ass:reg} and the nonsingularity imposed by \ref{ass:supp}. The unit effects $\widehat c_i$ grow in number with the sample but are themselves functions of each unit's own untreated observations, and so do not disturb this structure.
	
The Dirichlet weights are preferred over those of the conventional nonparametric bootstrap because every unit retains strictly positive weight in every draw, so the comparison sets on which \ref{ass:supp} depends are never emptied and the Stage~$m$.b design matrix never becomes singular by chance. Such sets are by construction thin at long horizons and high event counts, so that under multinomial resampling some replications would fail or silently drop horizons, and the resulting interval would be conditioned on those draws that happened to succeed. \citet{callaway_difference_2021} make the same argument in motivating their multiplier bootstrap. That alternative is less convenient here, since perturbing an influence function presupposes that it is available in closed form, whereas re-executing the estimator under reweighting does not.
	
A final observation clarifies what object the resulting intervals cover. Proposition~\ref{prop:sie} establishes consistency for $\sum_{it}w_{it}\,\E[\tau^{(m)}_{it}\mid\mathbf{d},\mathbf{x}]$, which is a weighted average of conditional mean effects rather than of the effects realised among the particular units observed. The dispersion of $\tau^{(m)}_{it}$ about that mean is therefore part of the sampling error and enters the reported uncertainty alongside the variance of the disturbance $u_{it}$. \citet{borusyak_revisiting_2024} instead target a sum of realised unit-level effects, for which the dispersion in treatment effects is not part of the sampling error and must be separated from the disturbance. This is not possible without further assumptions, which is the source of the conservatism in their standard errors.
    
\subsection{Diagnostic Test of Effect-Growth Selection}\label{sect:test_eventrep}
	
As with the parallel trend assumption in single-event settings, \ref{ass:cpet} cannot be tested at the horizons at which it is invoked. For an observation in $\Omega^{(k)}$ at relative time $\ell$ after its $m$-th event, \ref{ass:cpet} is needed precisely to separate the effects of event-$m$ from those of later events. The estimator makes this separation exact, which leaves no unexplained variation against which the assumption could be assessed. Evidence on the selection it rules out can, however, be obtained from observations in $\Omega^{(m)}$, by comparing growth trajectories in event-$m$ effects among units that do and do not subsequently experience a further event.

Such a diagnostic test proceeds as follows. Fix an event $m$ and a relative time $\ell\geq1$, and consider the units entering the Stage~$m$.b estimation sample, all of which remain in $\Omega^{(m)}$ at that horizon. For $q\geq1$, let $g^{(m)}_{i\ell q}=\mathbf{1}[E^{(m+1)}_i=E^{(m)}_i+\ell+q]$ indicate that unit $i$ experiences its next event exactly $q$ periods later, and augment the Stage~$m$.b regression as
\begin{equation}\label{eq:cpet_test}
    \widehat\tau^{(m)}_{i,E^{(m)}_i+\ell}-\widehat\tau^{(m)}_{i,E^{(m)}_i}
    =\mathbf{x}^{(m)\prime}_i\lambda^{(m)}_\ell
    +\sum_{q=1}^{Q}\psi^{(m)}_{\ell q}g^{(m)}_{i\ell q}+\varsigma_{i\ell},
\end{equation}
where $Q$ is either the number of periods remaining in the panel after the earliest such $E^{(m)}_i+\ell$ or, if the cells at large $q$ are sparse, some lower terminal value. If the latter approach is pursued, $g^{(m)}_{i\ell Q}$ is redefined to indicate a next event at $Q$ or more periods later. Under \ref{ass:leg}, growth in the event-$m$ effect is mean independent of the full trajectory conditional on $\mathbf{x}^{(m)}_i$, so that $\psi^{(m)}_{\ell q}=0$ for every $q$. This is stronger than \ref{ass:cpet}, which restricts only the comparison between units that have and have not experienced a further event by horizon $\ell$. A non-zero coefficient indicates that units about to experience a further event were already on a different effect path. Plotting $\widehat\psi^{(m)}_{\ell q}$ against $q$ yields a figure directly analogous to the leads of a conventional event-study plot.

While the regression in equation \eqref{eq:cpet_test} pertains to a single event-horizon combination, estimands of interest will often rely on many $\left(m,\ell\right)$ pairs. The test statistic used to gauge the significance of departures from the test's null must therefore accommodate multiple testing. Formally, for an estimand $\tau^{(m)}_w$, let $\mathcal{L}^{(m)}_w$ collect the pairs $(j,\ell)$ such that $\widehat\lambda^{(j)}_\ell$ enters $\widehat\tau^{(m)}_{it}$ for some observation with $w_{it}\neq0$. The relevant null is that $\psi^{(j)}_{\ell q}=0$ at every element of $\mathcal{L}^{(m)}_w$ and every $q$. Since the elements of $\widehat{\boldsymbol\psi}$ are estimated on survivor samples that shrink as the horizon lengthens, and whose sizes remain in unequal proportion however large the panel, they should be studentised before comparison. An intuitive test statistic then reflects the largest departure from the null of $\psi^{(j)}_{\ell q}=0$ among the studentised elements of $\widehat{\boldsymbol\psi}$.
	
Implementation requires estimating \eqref{eq:cpet_test} alongside the Stage~$m$.b regression within every bootstrap replication, so that each draw $b$ delivers a full vector $\widehat{\boldsymbol\psi}^{b}$ under the same reweighting as the effect estimates. Writing $s^{(j)}_{\ell q}$ for the standard deviation of $\widehat\psi^{(j)b}_{\ell q}$ across draws, each bootstrap replication thus yields
\[
M^b=\max_{(j,\ell)\in\mathcal{L}^{(m)}_w,\;q\leq Q}
\bigl|\widehat\psi^{(j)b}_{\ell q}-\widehat\psi^{(j)}_{\ell q}\bigr|\big/s^{(j)}_{\ell q},
\]
and the null is rejected at level $\alpha$ if $\max|\widehat\psi^{(j)}_{\ell q}|/s^{(j)}_{\ell q}$ exceeds the $(1-\alpha)$ quantile of $M^b$ across draws. This quantile also delivers uniform confidence bands for the $\psi^{(j)}_{\ell q}$, so that the figure described above can be presented with bands whose exclusion of zero at any $q$ is equivalent to rejection of the joint null.
	
While informative, it should be acknowledged that this diagnostic test faces an important limitation. Because the test statistic is computed from the same estimated effects as the estimator itself, the diagnostic test cannot be treated as external to inference on $\widehat\tau^{(m)}_w$. By drawing on the same survivor samples, estimation errors in the lead coefficients $\widehat\psi^{(j)}_{\ell q}$ and the effect estimates $\widehat\tau^{(m)}_w$ are correlated, and restricting attention to samples in which the test has passed therefore truncates the distribution of the error in $\widehat\tau^{(m)}_w$. Intervals constructed from its unconditional distribution -- which is what the inference procedure outlined in Section~\ref{sect:inference} delivers -- then misstate coverage among the estimates actually reported. The direction of the distortion is undetermined in general and depends on the correlation between the two types of estimation error, which in turn reflects how the lead indicators covary with the growth covariates. \citet{roth_pretest_2022} shows that in the analogous case of pre-trend testing the resulting distortion can be substantial. The pre-trend test of \citet{borusyak_revisiting_2024} escapes this problem, since it is computed on untreated observations alone and hence is uncorrelated with their estimator. No analogue is available here, however, since the test's null concerns treatment effects, which are defined only for treated observations, so that no diagnostic can be constructed from data the estimator does not itself use. Conditional inference procedures of the kind developed for pre-trend testing could in principle be adapted to the present setting, and I leave this to future work.
	
	
Notwithstanding this limitation, rejection of the test in a given application provides indication that the data contradict \ref{ass:leg} and I therefore recommend withholding $\widehat\tau^{(m)}_w$ where the test rejects. Reporting the lead coefficients and their confidence intervals alongside the effect estimates is also recommended, to make clear which horizons the diagnostic can speak to and with what precision, and how close the realised statistic came to its critical value.
 
\subsection{Heterogeneity Analysis}\label{sect:heterogeneity}

In the same manner as the imputation estimator of BJS returns observation-specific total effects, the SIE returns estimates of event-specific effects for each event that has been experienced by a given observation. While individual estimates are uninformative in isolation, since each carries the outcome shock $u_{it}$ alongside the estimation error of the nuisance parameters, they nonetheless provide a granular basis for heterogeneity analysis.

Heterogeneity along discrete attributes can be examined via the choice of the weights $w_{it}$ used to define the particular ATT estimand of interest, so that they effectively condition the ATT on the characteristic of interest. If one wants to examine heterogeneity along continuous attributes or among a 
given attribute while conditioning on another, $\widehat\tau^{(m)}_{it}$ can be regressed on the covariates of interest to obtain a best linear predictor of event-specific effects.\footnote{The sample of such a regression will be dictated by the object of interest. To examine heterogeneity in the $h$-th lag effect of event $k$, for example, one would restrict the sample to $\left\{it \mid t-E_i^{(k)}=h\right\}$.} Since the regressand in this exercise is itself estimated, conventional standard errors from that regression will be invalid. The projection should instead be run within each bootstrap replication, with inference on its coefficients based on percentiles of the resulting distribution.

An alternative method to examine heterogeneity according to discrete attributes would be to estimate the SIE separately within attribute groups. This is equivalent to interacting the covariate-time slopes of the untreated model and every element of $\mathbf{x}^{(m)}_i$ with the group indicators 
and requires \ref{ass:supp} to hold within each group separately. Conditioning through the weights, which retains the pooled estimation of the growth relationship, is preferable unless the growth relationship itself is thought to differ by attribute group.

Because each unit's imputed path begins at its own realised impact effect, heterogeneity in the level of event effects is recovered along any dimension. Heterogeneity in their growth, however, is recovered only along dimensions spanned by $\mathbf{x}^{(m)}_i$, since for $it\in\Omega^{(k)}$ with $k>m$ event-$m$ effect at horizon $\ell$ is calculated exactly as the event-$m$ impact effect plus $\mathbf{x}^{(m)\prime}_i\widehat\lambda^{(m)}_{t-E^{(m)}_i}$. A characteristic that governs how event-$m$ effects evolve but is omitted from $\mathbf{x}^{(m)}_i$ will therefore appear unrelated to effect growth among the imputed observations, so that reported heterogeneity is attenuated in proportion to the share of observations for which effects are imputed.\footnote{Where the omitted determinant is unobserved and mean independent of $(\mathbf{d},\mathbf{x})$ given $\mathbf{x}^{(m)}_i$, \ref{ass:leg} continues to hold, the aggregate estimates remain consistent, and only the heterogeneity analysis is affected. Where it is an element of $\mathbf{x}_i$ excluded from $\mathbf{x}^{(m)}_i$, or is related to selection into subsequent events, \ref{ass:leg} fails and the SIE is in general biased.} Any attribute along which heterogeneity in effect growth is to be reported should therefore enter $\mathbf{x}^{(m)}_i$, and is accordingly restricted to functions of the time-invariant covariates, which \ref{ass:nbc} requires to be unaffected by treatment, and of the trajectory through event $m$. 
	
\section{Issues with Conventional TWFE Approaches}\label{sect:twfe}

As discussed in Section \ref{sect:wooldridge}, OLS estimation of a sufficiently flexible TWFE specification can recover consistent estimates of total treatment trajectory effects in settings with recurrent events. Section \ref{sect:estimator} further explains that if one imposes assumptions additional to those conventionally invoked in single-event settings, these estimates can be combined to recover estimates of occurrence-specific event effects. While there is thus nothing `wrong' with TWFE specifications \emph{per se} when it comes to estimating treatment effects in recurrent-event settings, the `sufficiently flexible' qualification is crucial. If treatment effects are heterogeneous, this condition requires that event indicators are replaced with interactions between period dummies and dummies that encode units' entire treatment trajectory (i.e. the timing of all events they experience), a specification that, to the author's knowledge, no paper undertaking event-study analysis in a recurrent-event setting has implemented. In light of this observation, this section evaluates several alternative TWFE specifications that have been implemented in existing literature, explaining when and why these alternatives fail to recover estimates that have clear interpretation as reasonably-weighted averages of event-specific effects.\footnote{The focus here is on intuitive explanation of these failures. Formal weighting representations, together with the conditions under which each specification is consistent, are provided in Appendix~\ref{app:twfe_proofs}.}
	
\paragraph{Single-Event Dynamic TWFE Specifications.} A common approach in applied work is to estimate a dynamic event study defining event indicators with respect to the first observed event occurrence only without modelling subsequent events. Such an approach is adopted in studies of natural disasters \citep{deryugina_fiscal_2017,patel_floods_2024}, in the job displacement literature following \citet{jacobson_earnings_1993} \citep[see also][]{couch_earnings_2010}, and in studies of health shocks \citep{dobkin_economic_2018}.
	
To fix ideas, consider the following TWFE model, which I refer to as the `single-event dynamic' TWFE specification (SE-TWFE).
\begin{equation}\label{eq:setwfe}
    Y_{it}=\sum_{h=0}^{\bar{h}^{(1)}}\tau_h\,w^{(1,h)}_{it}+\sum_{s=2}^{T}\tilde{\gamma}_s fs_t+\sum_{s=2}^{T}fs_t\,\mathbf{x}_i'\tilde{\pi}_s+\tilde{c}_i+\tilde{u}_{it},
\end{equation}
where $w^{(1,h)}_{it}\equiv\mathbf{1}[\ell^{(1)}_{it}=h]$ with $\ell^{(1)}_{it}=t-E^{(1)}_i$, and $\bar{h}^{(1)}=\max_i\,(T-E^{(1)}_i)$ is the maximum first-event horizon observed in the estimation panel. The parameters $\tilde{c}_i$, $\tilde{\gamma}_s$, $\tilde{\pi}_s$ and the disturbance $\tilde{u}_{it}$ are those of the specification rather than of the data generating process, and are distinguished accordingly with tildes. Relative to the DGP \eqref{eq:dgp_marginal}, untreated outcomes are correctly specified and the model is misspecified only through the omission of indicators for events after the first.
	
Practitioners adopting this specification, or similar single-event models, typically defend it in one of two ways. The first appeals to the exogeneity of event incidence. Where the occurrence of subsequent events is unrelated to unit-level unobservables, their omission is argued not to bias estimates of the $\tau_h$ coefficients, and supplementary analysis restricting the treated group to units experiencing a single event is offered as evidence that key results are unaffected \citep{deryugina_fiscal_2017}. The second concedes that the estimated $\tau_h$ reflect the effects of subsequent events alongside those of the first, but treats this combined object, which is essentially the total trajectory effect introduced above, as the target of estimation \citep{dobkin_economic_2018}.
	
These defences are less distinct than they appear. Exogeneity of incidence ensures that units observed at a given first-event horizon are not selected, but it does not sever the link between the omitted indicators and the retained ones, since a subsequent event can only occur once the first has taken place. Random incidence therefore delivers not the effect of the first event but a mixture of first- and subsequent-event effects, which is the object the second defence advocates. Both arguments are, in this sense, arguments for the total trajectory effect.
	
In light of these arguments, it is appropriate to judge the specification according to whether it is a consistent estimator for average total trajectory effects. As shown in Appendix~\ref{app:single_dynamic}, OLS applied to \eqref{eq:setwfe} returns coefficients $\widehat{\tau}_h$ that are weighted sums of the effects of every event active at each treated observation. Observations at first-event horizon $h$ carry weights summing to unity and those at every other horizon carry weights summing to zero. Since no subsequent-event indicators enter the specification, these weights depend on the configuration of first-event dates across the sample and on $\mathbf{x}_i$, but not on the timing of any subsequent event. In a balanced panel they are therefore constant within groups of observations sharing the same first-event date, horizon and covariate values. Consistency for the average total trajectory effect requires the expected sum of active event effects at a given first-event horizon, conditional on $\mathbf{x}_i$, to be invariant to both the date of the first event and to the covariates, which is essentially \ref{ass:cpt}, applied to trajectory effects rather than to untreated potential outcomes.
	
Two intuitive cases satisfy this condition. The first is where the sum of active effects is common across units at each first-event horizon, which is satisfied when both event-specific effects and inter-event gaps are homogeneous. The second is where inter-event gaps are common across units, so that the same set of events is active at each horizon, but event-specific effects vary, provided their conditional means are invariant to the timing trajectory and covariates. Homogeneity of event-specific effects alone does not suffice when the distribution of inter-event gaps varies with the first-event date or covariates, since the total effect at a given first-event horizon will still depend on the relative time elapsed since each subsequent event. 
	
When both event effects and inter-event gaps vary across units, the coefficients from \eqref{eq:setwfe} cannot be interpreted as a reasonable average of trajectory effects. Relative to the sufficiently flexible TWFE specification of Section~\ref{sect:wooldridge}, the failure arises because \eqref{eq:setwfe} fits a single coefficient $\tau_h$ across units whose trajectory effects differ. While an apparently innocuous simplification, it is consequential because of the FWL mechanics of OLS regression. The zero-sum weights on other horizons are harmless only when the trajectory effect at those horizons is common across the units receiving them. When it is not, the other-horizon weights, which need not be positive, deliver a combination of trajectory effects at other horizons whose contributions are governed by the joint distribution of first-event timing and inter-event gaps. A distinct difficulty is that when first-event timing differs, the set of units observed at horizon $h$ shifts with $h$, so that even in cases where each coefficient $\tau_h$ is a convex average of unit-specific effects, the sequence of estimates does not trace the effect trajectory of any fixed population. Conditioning on the timing configuration, the direct analogue of the cohort-based corrections of \citet{sun_estimating_2021} and \citet{callaway_difference_2021}, addresses both problems but cannot decompose the recovered trajectory into event-specific components, as explained in Section \ref{sect:wooldridge} and established below in \eqref{eq:collinearity}.
	
\paragraph{Pooled Dynamic TWFE Specifications.} Recognising that the SE-TWFE model will, at best, recover total trajectory effects, another commonly-applied TWFE specification attempts to isolate event-specific effects by pooling event-relative time indicators across event occurrences \citep{krolikowski_choosing_2018}. This approach is implemented using variants of the following specification, which I refer to as the `pooled dynamic' TWFE model (P-TWFE).
\begin{equation}\label{eq:ptwfe}
    Y_{it}=\sum_{h=0}^{\bar{h}^{(1)}}\tau_h\,\bar{w}^{(h)}_{it}+\sum_{s=2}^{T}\tilde{\gamma}_s fs_t+\sum_{s=2}^{T}fs_t\,\mathbf{x}_i'\tilde{\pi}_s+\tilde{c}_i+\tilde{u}_{it},
\end{equation}
where $\bar{w}^{(h)}_{it}\equiv\sum_{k=1}^{K}w^{(k,h)}_{it}$ replaces the single-event indicators \eqref{eq:setwfe}, with horizon-$h$ indicators that are pooled across event occurrences. Since simultaneous events are ruled out, at most one occurrence can be dated $h$ periods before any given observation, and hence $\bar{w}^{(h)}_{it}$ is a binary indicator for some event having occurred exactly $h$ periods earlier. This specification is algebraically identical to distributed lag models that have been used to estimate dynamic event effects while controlling for repeat exposures \citep{strobl_economic_2011,hsiang_environment_2014}, since the $h$-th lag of $\sum_{k}\mathbf{1}[t=E^{(k)}_i]$ is precisely $\bar{w}^{(h)}_{it}$. As with \eqref{eq:setwfe}, untreated outcomes are correctly specified and the model is potentially misspecified only in terms of the event structure.
	
Relative to SE-TWFE, P-TWFE is closer to the DGP of equation \eqref{eq:dgp_marginal} in that it encodes every event occurrence rather than the first alone. Since the coefficients $\tau_h$ carry neither unit nor occurrence notation, however, the specification is correctly specified only if dynamic event effects are common across units and occurrences. Proposition~\ref{prop:pooled} in Appendix~\ref{app:pooled_dynamic} shows it is consistent under the weaker condition that conditional mean event effects are common across occurrences and invariant to $(\mathbf{d},\mathbf{x})$. The same proposition shows that OLS applied to \eqref{eq:ptwfe} returns coefficients $\widehat{\tau}_h$ that are weighted sums of the effects of every event active at each treated observation, taken across occurrences and horizons alike. When either homogeneity across occurrences or mean independence fails, however, these weighted sums lack clear interpretation as average event effects. 
	
When distinct event occurrences have heterogeneous effects, the $\widehat{\tau}_h$ of the P-TWFE model will reflect a combination of $\tau^{(1)}_{h},\tau^{(2)}_{h},\ldots$, with the precise composition governed by the FWL weights. Since the shares of observations at horizon $h$ arising from first, second and later events are determined by the joint distribution of event timings, the implied weighting varies with $h$ and the estimated profile $\{\widehat{\tau}_h\}_{h\geq0}$ traces the dynamics of no single occurrence. The consequences of this are most serious where occurrence heterogeneity follows a systematic pattern. Because a $k$-th event leaves fewer remaining periods of observation than the first, the share of observations at horizon $h$ arising from later occurrences falls as $h$ rises, so that the short-horizon coefficients draw more heavily on later occurrences than the long-horizon coefficients do. Suppose that units adapt to repeated exposure, so that the effect of each successive event is attenuated relative to the last at every horizon. The short-horizon coefficients then draw disproportionately on the attenuated later occurrences and the long-horizon coefficients on the first, and the estimated profile will appear to grow in magnitude with the horizon even where the profile of each occurrence is flat. Were subsequent events instead more damaging than the first, the same mechanism would generate spurious attenuation. The estimated profile therefore confounds the dynamics of event effects with the changing composition of occurrences across horizons, so that the shape of the response, which is the feature a distributed lag specification is typically estimated to recover, cannot be read from the $\widehat{\tau}_h$ coefficients.
	
A second difficulty arises from heterogeneity across units. Where event effects vary across units in a manner correlated with event timing or covariates, each $\widehat{\tau}_h$ is a weighted sum of unit-specific event effects with weights that need not be positive and that are determined by the joint distribution of event timings rather than by any feature of the estimand. The practical implication is that coefficient estimates from pooled and distributed lag specifications lack clean interpretation as average effects, either of a particular event occurrence or of all occurrences taken together. Recovering occurrence-specific effects requires occurrence-specific parameters, which is the modification considered next.
	
\paragraph{Multi-Event Dynamic TWFE Specifications.} The most saturated of the specifications considered here includes separate event-relative time indicators for every occurrence:
\begin{equation}\label{eq:metwfe}
    Y_{it}=\sum_{k=1}^{K}\sum_{h=0}^{\bar{h}^{(k)}}\tau^{(k)}_h\,w^{(k,h)}_{it}+\sum_{s=2}^{T}\tilde{\gamma}_s fs_t+\sum_{s=2}^{T}fs_t\,\mathbf{x}_i'\tilde{\pi}_s+\tilde{c}_i+\tilde{u}_{it},
\end{equation}
where $w^{(k,h)}_{it}\equiv\mathbf{1}[\ell^{(k)}_{it}=h]$ and $\bar{h}^{(k)}=\max_i\,(T-E^{(k)}_i)$ is the maximum horizon observed after a $k$-th event. As with \eqref{eq:setwfe} and \eqref{eq:ptwfe}, pre-event observations serve as the reference category, the covariate--time interactions of \eqref{eq:dgp_marginal} are retained, and the parameters carrying tildes are those of the specification rather than of the data generating process. In what follows, I refer to \eqref{eq:metwfe} as the `multi-event dynamic' TWFE model (ME-TWFE).
	
\citet{sandler_multiple_2014} recommend this specification on the basis of Monte Carlo evidence that it performs relatively well when events reoccur. Despite this recommendation, no paper appears to have implemented it in an applied recurrent-event setting. It is nonetheless more robust than either the SE-TWFE or P-TWFE models, since it imposes neither the omission of subsequent events that characterises \eqref{eq:setwfe} nor the pooling across occurrences that characterises \eqref{eq:ptwfe}. 
	
While ME-TWFE accommodates heterogeneity in effects across event occurrences, it requires that conditional mean effects be invariant to the timing trajectory and to the covariates and fails to recover consistent estimates otherwise. To see this note that, relative to the sufficiently flexible specification of Section~\ref{sect:wooldridge}, \eqref{eq:metwfe} omits the interactions of the event indicators with the timing trajectory. It therefore fits a single coefficient $\tau^{(k)}_h$ across units whose $k$-th event effects at horizon $h$ may differ. As shown in Appendix~\ref{app:multi_dynamic}, $\widehat{\tau}^{(k)ME-TWFE}_h$ is thus a weighted sum of these unit-specific effects whose weights are determined by the configuration of event timings across the sample and may be negative. 
	
	
Invariance of conditional mean effects to the timing trajectory is the multi-event analogue of the restriction that the cohort-based corrections of \citet{callaway_difference_2021} and \citet{sun_estimating_2021} avoid imposing. Those papers show that contamination in single-event TWFE arises from aggregation across treatment cohorts, and resolve it by estimating cohort-specific effects from clean comparisons before aggregating with researcher-chosen weights. The natural extension here would interact each $w^{(k,h)}_{it}$ with dummies for the timing trajectory $E_i$, the multi-event analogue of the cohort, and aggregate the resulting trajectory-specific estimates. This is precisely the specification that Section~\ref{sect:wooldridge} shows to be sufficiently flexible for total trajectory effects.
	
Applied to event-specific effects, however, the extension fails, because the interacted regressors are exactly linearly dependent. Consider a trajectory $e$ in which a unit experiences its first two events at $e^{(1)}$ and $e^{(2)}$, with gap $G=e^{(2)}-e^{(1)}$. For every $h\geq G$,
\begin{equation}\label{eq:collinearity}
    w^{(1,h)}_{it}\cdot\mathbf{1}[E_i=e]=w^{(2,h-G)}_{it}\cdot\mathbf{1}[E_i=e],
\end{equation}
since both indicators select the single observation at $t=e^{(1)}+h$. Within any trajectory group experiencing two or more events, the interacted first- and second-event indicators are therefore perfectly collinear at every horizon at which both events are active, and the same holds for every later pair of occurrences. Interacting the event indicators with the trajectory dummies thus precludes separate identification of event-specific effects because the two event-specific regressors are perfectly collinear within each treatment trajectory cell. This is the sense in which the decomposition problem of Section~\ref{sect:wooldridge} reappears as a property of the design matrix. Conditioning on the trajectory can therefore recover the total effect of that trajectory but not the event-specific effects of which it is composed. Recovering the latter requires borrowing information across trajectory groups, using observations at which only an earlier event is active to identify its continuing contribution where a later event has intervened, which is the role played by \ref{ass:cpet} in Section~\ref{sect:estimator}.
	
The discussion of this section, underpinned by the formal Propositions~\ref{prop:single_dynamic}, \ref{prop:pooled}  and \ref{prop:multi_dynamic} in Appendix~\ref{app:twfe_proofs}, demonstrates that TWFE specifications implemented in recurrent-event settings recover interpretable estimates of event-specific effects only under conditions considerably stronger than those of Section~\ref{sect:wooldridge}. The single-event specification is not designed to recover event-specific effects and delivers total trajectory effects only where the conditional mean trajectory effect at each horizon is invariant to first-event timing. The pooled specification requires effects be common across occurrences and invariant to event timing. The multi-event specification admits occurrence-specific effects but maintains the requirement that their conditional means be invariant to the timing trajectory. Imposing the flexibility that Section~\ref{sect:wooldridge} shows to be sufficient for total trajectory effects is of no help, since by \eqref{eq:collinearity} it renders event-specific effects unidentified. These results motivate the sequential imputation estimator of Section~\ref{sect:estimator}.
	
\section{Monte Carlo Simulations}\label{sect:montecarlo}
	
This section presents Monte Carlo simulations that compare performance of the SIE against that of the TWFE specifications described in Section \ref{sect:twfe}.
	
\subsection{Data Simulation}\label{sect:dgps}
	
The Monte Carlo exercise simulates four distinct data generating processes (DGPs)  \mcnsim\ times. Data in each run is simulated for a panel of 1000 units over 30 periods according to $$Y_{it} = \alpha_i + \beta_t + \sum_{k=1}^{K} \sum_{h \geq 0} \tau_{i,h}^{(k)} \cdot \mathbf{1}[\ell_{it}^{(k)} = h] + \varepsilon_{it},$$
where $\varepsilon_{it} \sim N(0, 0.25)$. Fixed effects are i.i.d.\ standard normal random variables and time effects are simulated as $\beta_t = 0.1t + \mu_t$, where $\mu_t \sim N(0, 0.09)$. No time-invariant covariates enter the untreated outcome, so Assumption \ref{ass:lin} holds with unit and period effects alone.
	
In each of the four DGPS, units can experience up to two treatment events (i.e. $K=2$). 90\% of units experience a first event, with the period of first-event occurrence simulated as a discrete random variable with uniform probability on the support $E_i^{(1)} \in \{3, \ldots, \bar{E}^{1}\}$, where $\bar{E}^{1} = 25$ unless stated otherwise. The earliest period in which an event can occur is therefore the third, which ensures at least two periods of pre-event data per unit with which unit fixed effects can be estimated.\footnote{While the period of first treatment is rarely observed directly in empirical settings, Section~\ref{sect:implementation} discusses how such unobserved initial conditions can be addressed in practice.} A share $p_2$ of first-event units also experience a second event, with second-event timing determined by the gap $G_i = E_i^{(2)} - E_i^{(1)}$, which is simulated as a discrete random variable with uniform probability on the support $G_i \in \{2, \ldots, \min(\bar{G},\, 30 - E_i^{(1)} - 2)\}$. Baseline values are $p_2 = 0.5$ and $\bar{G} = 25$.
	
Treatment effects in all DGPs take the form
$$\tau_{i,\ell}^{(k)} = a_i^{(k)} + \delta_i^{(k)}\left(1 - e^{-\rho_i \ell}\right).$$ $a_i^{(k)}$ captures the unit-specific effect of event $k$ on impact, while $\delta_i^{(k)}$ and $\rho_i$ govern the extent and speed of subsequent recovery towards $a_i^{(k)} + \delta_i^{(k)}$. The growth increment restricted by Assumption \ref{ass:leg} is therefore
$$\Delta_{i\ell}^{(k)} \equiv \tau_{i,\ell}^{(k)} - \tau_{i,0}^{(k)} = \delta_i^{(k)}\left(1 - e^{-\rho_i \ell}\right),$$
which is a function of $\delta_i^{(k)}$ and $\rho_i$ but not of $a_i^{(k)}$. 
	
Baseline effect magnitudes attenuate proportionally with event occurrence,
$$a_i^{(1)} \sim N(-2, 0.25), \quad a_i^{(2)} \sim N(-1, 0.25),$$
$$\delta_i^{(k)} = I_i^{(k)} + \nu_i^{(k)}, \quad I_i^{(1)} \sim N(1.5, 0.16), \quad I_i^{(2)} \sim N(0.75, 0.16), \quad \nu_i^{(k)} \sim N(0, 0.09),$$
so that the first event has an average effect of $-2$ on impact, recovering gradually at rate $\rho_i$ towards a long-run average of $-0.5$, whereas the second event's impact and recovery are both half as large on average. The extent of recovery $\delta_i^{(k)}$ has an observable `intensity' component $I_i^{(k)}$, which enters the growth covariate vector, and an unobservable component $\nu_i^{(k)}$, which does not.
	
The \ref{ass:cpet} assumption restricts the conditional mean of growth in treatment effects given the growth covariate vector $\mathbf{x}_i^{(m)}$, with \ref{ass:leg} requiring it to be linear in that vector. Whether they hold therefore depends on the specification a researcher adopts as well as on the process generating the data. The baseline specification is $\mathbf{x}_i^{(m)} = (1, I_i^{(m)})'$, containing a constant and the event-$m$ intensity. Since $\delta_i^{(k)}$ is the sum of $I_i^{(k)}$ and an independently-drawn unobservable, the conditional mean of the growth increment is linear in this specification when the rate of recovery $\rho_i$ is common across units and when $\delta_i^{(k)}$ is itself linear in intensity. Where a DGP departs from either of these conditions, the baseline no longer satisfies \ref{ass:leg}, and I additionally report an enriched specification which restores it.
	
The four DGPs I consider differ in how they simulate treatment effects, second-event timing and selection into subsequent events. 
	
\paragraph{DGP 1. Independent Heterogeneous Effects.}
The first DGP features common growth in event effects across units, $\rho_i = 0.4$, and selection into the second event independent of all other features of the DGP. Specifically, the probability that each unit who experiences one event experiences a second event is constant at $p_2 = 0.5$. Under the baseline specification outlined above, the conditional mean of the growth increment is $E[\Delta_{i\ell}^{(m)} \mid d, x] = I_i^{(m)}(1 - e^{-0.4\ell})$, which is exactly linear in $x_i^{(m)}$ at every horizon. \ref{ass:leg} therefore holds and the SIE should recover unbiased estimates. 
	
\paragraph{DGP 2. Selection on Event Levels.} 
Treatment effects in this DGP are simulated in the same manner as in DGP 1. The selection process into the second event, however, differs. Rather than each first-event unit having an equal 50\% probability of experiencing a second event, this probability depends on the first-event impact effect
$$P\!\left(\text{second event} \mid a_i^{(1)}\right) = \Phi\!\left(-1.6 \cdot \frac{a_i^{(1)} + 2}{0.5}\right).$$
The parameters are calibrated so that $E[P(\text{second event})] = 0.5$ (at the mean $a_i^{(1)} = -2$, $\Phi(0) = 0.5$), but units with larger first-event impact effects are considerably more likely to experience a second event. Growth in event effects, however, remains independent of selection into the second event so that \ref{ass:leg} continues to hold.
	
\paragraph{DGP 3. Omitted Growth Driver.}
The third DGP introduces heterogeneity into the rate at which event effects recover, determined by a unit-specific type:
$$\rho_i = \begin{cases} 0.15 & \text{if } Z_i = 1 \quad (\text{persistent type}),\\ 0.70 & \text{if } Z_i = 0 \quad (\text{transient type}),\end{cases} \qquad Z_i \sim \text{Bernoulli}(0.5),$$
with impact effects and intensities as in DGP 1. Selection into the second event depends on the same type,
$$P\!\left(\text{second event} \mid Z_i\right) = p_2 + \kappa_Z\left(Z_i - \tfrac{1}{2}\right).$$
I set $\kappa_Z = 0.6$, which implies persistent types experience a second event with probability $0.8$ and transient types with probability $0.2$. 
	
Since $Z_i$ determines effect growth, its omission from the baseline specification of the growth regression violates both \ref{ass:cpet} and \ref{ass:leg}. As $\kappa_Z$ increases, the distribution of $Z_i$ among the observations in $\Omega^{(1)}$ used to estimate the growth parameters $\lambda_\ell^{(1)}$ becomes progressively less representative of the distribution among the observations in $\Omega^{(2)}$ for which the growth regression is used to compute event-1 effects. The imputation, and hence the SIE, will consequently become progressively biased as $\kappa_Z$ increases.
	
To redress this bias, I also consider an enriched specification of $x_i^{(m)} = (1, I_i^{(m)}, Z_i, Z_i I_i^{(m)})'$, under which\footnote{The first equality holds because $\nu^{(m)}_i$ is independent of the event timings and of $x_i^{(m)}$, while $\rho_i$ is determined by $Z_i$. The second holds because $Z_i$ is binary, so that $e^{-\rho(Z_i)\ell}=Z_ie^{-\rho(1)\ell}+(1-Z_i)e^{-\rho(0)\ell}$, where $c_z(\ell)\equiv1-e^{-\rho(z)\ell}$.}
$$E\!\left[\Delta_{i\ell}^{(m)} \mid d, x\right] = I_i^{(m)}\left(1 - e^{-\rho(Z_i)\ell}\right) = I_i^{(m)} c_1(\ell) Z_i + I_i^{(m)} c_0(\ell)\left(1 - Z_i\right).$$
This is exactly linear in $\mathbf{x}_i^{(1)}$ and hence restores \ref{ass:leg} (and, by implication, \ref{ass:cpet}). Together with the baseline specification, this DGP thus examines the extent to which misspecification of the growth covariate vector impairs the performance of the SIE.
	
\paragraph{DGP 4. Heterogeneous Effect Growth.}
Whereas DGP 3 explores performance of the SIE when \ref{ass:cpet} is violated by the omission of a determinant of growth from $\mathbf{x}_i^{(m)}$, this DGP explores performance when every growth determinant is included in $\mathbf{x}_i^{(m)}$, so that \ref{ass:cpet} holds, but \ref{ass:leg} is violated through functional form. The extent of recovery is exponential rather than linear in intensity:
$$\delta_i^{(1)} = 1.5\, e^{I_i - 1.5}+ \nu_i^{(1)}, \qquad \delta_i^{(2)} = 0.75\, e^{I_i - 1.5}+ \nu_i^{(2)}, \qquad \rho_i = 0.4, \quad \nu_i^{(k)} \sim N(0, 0.09).$$
Intensity $I_i$ is common across events, with $I_i \sim N(1.5, 0.16)$ truncated to $[0.3, 2.7]$, and is correlated with selection into the second event according to
$$P\!\left(\text{second event} \mid I_i\right) = \Phi\!\left(2.5 \cdot \frac{I_i - 1.5}{0.4}\right).$$
Compared to DGP 3, where the omission of $Z_i$ from the growth covariate vector compromises both \ref{ass:cpet} and \ref{ass:leg}, this DGP separates the two conditions. Conditional on $I_i$, growth in event effects is a deterministic function of $I_i$ plus a disturbance drawn independently of selection, so \ref{ass:cpet} holds. \ref{ass:leg} is violated, however, since the deterministic function is exponential rather than linear. An enriched specification augments the baseline of $\mathbf{x}_i^{(m)} = (1, I_i^{(m)})'$ with $(I_i^{(m)})^2$, to examine how far a polynomial approximation can accommodate a non-linear growth relationship.

\subsection{Performance of the Sequential Imputation Estimator}\label{sect:mc_sie}
	
For each Monte Carlo replication I estimate the SIE with \mcnbreps\ Bayesian bootstrap draws, constructed as described in Section~\ref{sect:inference}. I evaluate performance of the SIE by computing the average bias across Monte Carlo runs against the population ATT implied by the DGP, and by calculating `coverage' as the percentage of runs in which the 95\% percentile interval covers the DGP's true population ATT.
	
Table~\ref{tab:mc_sie} reports the results. Under DGP 1 the SIE is unbiased at both events and coverage of the bootstrap confidence intervals is nominal. The diagnostic test also exhibits nominal size, rejecting in approximately 5\% of runs. Under DGP 2 the results are essentially the same as those from DGP 1, despite the strong dependence of second-event occurrence on the first-event impact effect. This demonstrates how the SIE is consistent when selection into subsequent events is correlated with the magnitude of prior-event effects, which is similar to how conventional single-event difference-in-differences can accommodate differences in levels of untreated outcomes between treated and control groups. 
	
\begin{table}[H]
    \centering
    \caption{Performance of the Sequential Imputation Estimator}\label{tab:mc_sie}
\begin{tabular}{clcrrrr}
\hline\hline
\rule{0pt}{2.6ex}DGP & $x^{(1)}$ & Event & True ATT & Bias & Coverage & Rejection \\
\hline
1 & $1, I$ & 1 & -0.9924 & -0.0001 & 0.948 & 0.055 \\
 &  & 2 & -0.4947 & -0.0002 & 0.953 & 0.045 \\
\hline
2 & $1, I$ & 1 & -0.9927 & +0.0003 & 0.939 & 0.046 \\
 &  & 2 & -0.4956 & -0.0015 & 0.940 & 0.048 \\
\hline
3 & $1, I$ & 1 & -1.1079 & +0.0348 & 0.770 & 1.000 \\
 &  & 2 & -0.6399 & -0.1362 & 0.372 & 1.000 \\
 & $1, I, Z, Z \times I$ & 1 & -1.1079 & +0.0004 & 0.945 & 0.047 \\
 &  & 2 & -0.6430 & +0.0026 & 0.938 & 0.046 \\
\hline
4 & $1, I$ & 1 & -0.9095 & -0.0021 & 0.946 & 0.071 \\
 &  & 2 & -0.2967 & +0.0724 & 0.807 & 0.156 \\
 & $1, I, I^{2}$ & 1 & -0.9095 & +0.0020 & 0.946 & 0.066 \\
 &  & 2 & -0.2967 & -0.0054 & 0.949 & 0.053 \\
\hline\hline
\end{tabular}

    \begin{minipage}{0.95\textwidth}
        \vspace{0.25cm}
        \footnotesize Note: table summarises results from \mcnsim\ replications of each data generating process, with the SIE estimated using \mcnbreps\ Bayesian bootstrap draws. True ATT is the population average effect on the treated implied by the DGP, averaged over horizons $h = 0, \ldots, \mcmaxh$. Bias is the mean difference
        between the estimate and that quantity, averaged over the same horizons, and is reported in levels. Coverage is the mean rate at which the 95\% percentile interval contains the true ATT. Rejection is the rate at which the diagnostic test of Section~\ref{sect:test_eventrep} rejects at a nominal size of 0.05. 
    \end{minipage}
\end{table}
	
DGP 3, by contrast violates \ref{ass:cpet}, and by consequence \ref{ass:leg}, by making selection into the second event correlated with \emph{growth} in first-event effects. Under the baseline specification of $\mathbf{x}_i^{(1)}=(1,I^{(1)}_i)$ the SIE carries bias of \mcsieBiasThreeBaseTwo\ at the second event, about 21\% of the true second-event ATT, and coverage falls to \mcsieCovThreeBaseTwo.\footnote{Under DGP 3's enriched specification, \mcrefusalRateThreeSat\ of replications are refused at second-event horizons $h \geq 2$, since that specification alone contains four regressors and the survivor sample occasionally falls below the rank requirement of \ref{ass:supp}. Figures reported in text and those in the relevant row of Table \ref{tab:mc_sie} are computed on the remaining replications, with the true ATT averaged over the same horizons so that it remains comparable to the bias.}  The diagnostic test has strong power, rejecting in all of the 1000 runs. Consistency of the SIE is restored by extending the growth covariate vector to additionally include the selection determinant $Z_i$, along with its interaction with first-event heterogeneity.
	
DGP 4 demonstrates performance of the SIE when \ref{ass:cpet} holds but when \ref{ass:leg} does not. In this DGP, selection into the second event is uncorrelated with growth in first-event effects but the SIE's effect-growth regression is misspecified. The functional form failure causes modest bias in the first-event ATT estimate, with the second-event ATT exhibiting larger bias. The extent of bias in the first-event effects is so modest that coverage of the first-event confidence intervals remains close to nominal, with the diagnostic test rejecting in only 7\% of cases. Coverage of the second-event ATT falls to around 80\% with the diagnostic test rejection rate rising to 16\%. Enriching the growth covariate vector with $(I_i^{(1)})^2$ to accommodate greater non-linearity in the growth relationship reduces the second-event bias to $-0.005$ and restores coverage to around 95\%, while first-event bias is negligible under both specifications.

The contrast in rejection rates across DGPs 3 and 4 helps clarify what the diagnostic test measures. By augmenting the growth regression with indicators for the timing of the next event, equation~\eqref{eq:cpet_test} indicates whether growth in prior-event effects varies with the timing of subsequent events conditional on $\mathbf{x}_i^{(m)}$. Its null is an implication of \ref{ass:leg}, and the test has power against departures from \ref{ass:leg} that covary with the timing of subsequent events. This includes the selection on effect growth that \ref{ass:cpet} rules out, but not functional-form error that is unrelated to next-event timing. When failure of \ref{ass:leg} is due to failure of \ref{ass:cpet}, DGP 3 indicates the test has strong power. The results from DGP 4, by contrast, indicate that the diagnostic test has weak power against failures of \ref{ass:leg} that leave \ref{ass:cpet} intact. In DGP 4 every systematic determinant of effect growth is included in $\mathbf{x}_i^{(1)}$, and misspecification arises from the non-linear dependence of growth on first-event intensity. The test retains some power only indirectly, since first-event intensity drives both the misspecification error and selection into the second event. A pass on the diagnostic test is therefore best treated as evidence that \ref{ass:cpet} holds and that any remaining misspecification is not strongly correlated with selection into subsequent events. Where \ref{ass:cpet} holds given the included covariates, misspecifying how they enter the growth model biases the estimator only to the extent that the distribution of $\mathbf{x}_i^{(m)}$ differs between the units on which growth is estimated and those for whom effects are imputed. Choices of $\mathbf{x}_i^{(m)}$ should therefore be assessed both with the diagnostic test and by comparing the distribution of $\mathbf{x}_i^{(m)}$ across the two populations.
	
In summary, the Monte Carlo simulations demonstrate the SIE recovers approximately unbiased estimates of event-specific ATTs in a range of DGPs, with confidence intervals calculated using the Bayesian bootstrap exhibiting nominal coverage when \ref{ass:sutva}, \ref{ass:anticipation}, \ref{ass:nbc}, \ref{ass:lin}, \ref{ass:supp}, \ref{ass:leg} and \ref{ass:reg} hold. The diagnostic test of these assumptions proposed in Section \ref{sect:test_eventrep} has strong power against violations of \ref{ass:cpet}, which is necessary but not sufficient for \ref{ass:leg}, but only weak power against violations of \ref{ass:leg} that leave \ref{ass:cpet} intact.
	
\subsection{Comparison with Two-Way Fixed Effects Specifications}\label{sect:mc_twfe}
	
Table~\ref{tab:mc_twfe} reports bias for the three TWFE specifications of Section~\ref{sect:twfe} alongside that of the SIE.\footnote{Each estimator estimates a separate coefficient at every observed relative time, so that no restriction is imposed on effects at long horizons. This circumvents bias due to binning or truncating events that applied work routinely employs.} Considering first the ME-TWFE specification, Table~\ref{tab:mc_twfe} shows its performance across the four DGPs is uneven. Under DGP 1, both ME-TWFE and the SIE are unbiased at both events. Effects are heterogeneous across units in this DGP, but selection into the second event is independent of them, so the subpopulation from which second-event coefficients are identified has the same distribution of first-event effects as the population. The single coefficient $\tau^{(1)}_h$ that ME-TWFE fits across units is therefore the correct average for the doubly-treated units as well as for the rest, and nothing remains to contaminate $\widehat\tau^{(2)}_h$.
	
\begin{table}[H]
    \centering
    \caption{Bias of SIE and Two-Way Fixed Effects Specifications}\label{tab:mc_twfe}
%
\begin{tabular}{ccrrrrr}
\hline\hline
\rule{0pt}{2.6ex}DGP & Event & \multicolumn{2}{c}{SIE} & ME-TWFE & SE-TWFE & P-TWFE \\
\cline{3-4}
 & & Baseline & Extended & & & \\
\hline
1 & 1 & -0.0001 & --- & -0.0000 & -0.0926 & +0.2432 \\
 & 2 & -0.0002 & --- & -0.0006 & --- & --- \\
\hline
2 & 1 & +0.0003 & --- & +0.0426 & -0.0927 & +0.1329 \\
 & 2 & -0.0015 & --- & -0.3074 & --- & --- \\
\hline
3 & 1 & +0.0348 & +0.0004 & +0.0114 & -0.1082 & +0.2124 \\
 & 2 & -0.1362 & +0.0026 & -0.0454 & --- & --- \\
\hline
4 & 1 & -0.0021 & +0.0020 & -0.0656 & -0.0723 & +0.4708 \\
 & 2 & +0.0724 & -0.0054 & +0.5028 & --- & --- \\
\hline\hline
\end{tabular}

    \begin{minipage}{0.95\textwidth}
        \vspace{0.25cm}
        \footnotesize Note: mean bias over horizons $h = 0, \ldots, \mcmaxh$, calculated over \mcnsim\ replications. Each TWFE specification is estimated with a separate coefficient at every observed relative time. SE-TWFE contains first-event indicators alone and so yields no second-event estimate. P-TWFE pools coefficients across event occurrences and is reported against the first event only. Results in the `Extended' SIE column correspond to the rows of Table~\ref{tab:mc_sie} that feature the larger growth covariate vector.
    \end{minipage}
\end{table}
	
Under DGP 2 this ceases to hold, and ME-TWFE carries bias of \mcmeBiasTwoTwo\ at the second event against the SIE's \mcsieBiasTwoTwo. Since units with larger first-event impact effects are considerably more likely to experience a second event, the doubly-treated units have systematically more negative first-event effects than the pooled average. The first-event indicators absorb only that pooled average, and the residual first-event effect is then attributed to the event-2 coefficients. Selection into the second event in DGP 3 is based on type rather than on the level of first-event effects, and ME-TWFE's bias is correspondingly smaller, at \mcmeBiasThreeBaseTwo. The type $Z_i$ in this DGP governs the rate at which effects recover rather than their magnitude on impact, so that doubly-treated units differ from the pooled average in the shape of their first-event path rather than in its level. The residual left unabsorbed by the first-event indicators is therefore smaller, and the contamination of the second-event coefficients milder than in DGP 2. Bias in the ME-TWFE estimates is lower here than the SIE's estimates, by a factor of \mcmeAdvantageThree, under the specification that omits $Z_i$. DGP 4 produces the largest bias in the table of \mcmeBiasFourBaseTwo\ at the second event, which is sufficient to reverse the sign of the estimated effect. Selection in this DGP operates on intensity and the extent of recovery is exponential in it, so the doubly-treated units' first-event effects exceed the pooled average by a considerable margin. The residual attributed to the second event is correspondingly large. 
	
These results highlight that comparison between the SIE and ME-TWFE rests on the process governing selection into subsequent events. ME-TWFE is unbiased when such selection is independent of prior-event effects, and biased when it is not, with the magnitude of the bias governed by how far the doubly-treated units' prior-event effects depart from the population average. The SIE is unbiased in both cases, since it nets out each unit's own prior-event effects rather than a coefficient common across units, and fails only where its growth model is misspecified.

Turning now to the SE-TWFE and P-TWFE results, Table \ref{tab:mc_twfe} shows that the single-event and pooled specifications are biased against first-event effects in every DGP, with absolute bias of 0.07--0.11 (SE-TWFE) and 0.13--0.47 (P-TWFE), compared with at most 0.035 for the SIE under the baseline growth specification and at most 0.002 under the extended one. This bias persists despite the features that distinguish the DGPs from one another, since the sources of their bias are the omission of subsequent events and the pooling of effects across event occurrences respectively, rendering both misspecified across DGPs. As explained in Section \ref{sect:twfe}, the SE-TWFE model is best viewed as an estimator of average total trajectory effects, whereas the P-TWFE model is presumably intended to capture an event-specific trajectory averaged over event occurrences. Judging the SE-TWFE and P-TWFE estimates by how accurately they recover first-event effects is therefore arguably an unreasonable test. 

To further explore performance of the SE-TWFE estimator, Table~\ref{tab:mc_setwfe} compares the SE-TWFE estimates against the average total trajectory effect realised among the units in each replication. Against this quantity the estimator performs an order of magnitude better than against the event-specific effects of Table~\ref{tab:mc_twfe}, with bias below three per cent of the effect at every DGP and horizon group. Since the SE-TWFE specification includes first-event indicators alone, its coefficient at horizon $h$ absorbs whatever the outcome reflects at that point in a unit's trajectory, including the effects of any subsequent event the unit has by then experienced. When measured against the first-event effect this constitutes misspecification whereas when measured against the total trajectory effect it is close to the quantity the specification estimates. The residual bias shown in Table~\ref{tab:mc_setwfe} is due to the specification fitting a single coefficient across units whose trajectories differ.

\begin{table}[H]
    \centering
    \caption{Bias of the SE-TWFE Specification Against the Average Total Trajectory Effect}\label{tab:mc_setwfe}
%
\begin{tabular}{clrr}
\hline\hline
\rule{0pt}{2.6ex}DGP & $h$ & Trajectory ATT & SE-TWFE bias \\
\hline
1 & 0 & -2.0002 & -0.0102 \\
 & 1--4 & -1.1831 & -0.0050 \\
 & 5--8 & -0.7714 & -0.0012 \\
\hline
2 & 0 & -2.0008 & -0.0108 \\
 & 1--4 & -1.1836 & -0.0049 \\
 & 5--8 & -0.7723 & -0.0008 \\
\hline
3 & 0 & -1.9995 & -0.0280 \\
 & 1--4 & -1.2751 & -0.0254 \\
 & 5--8 & -0.9772 & -0.0271 \\
\hline
4 & 0 & -2.0009 & -0.0128 \\
 & 1--4 & -1.1048 & -0.0114 \\
 & 5--8 & -0.6181 & -0.0141 \\
\hline\hline
\end{tabular}

    \begin{minipage}{0.95\textwidth}
        \vspace{0.25cm}
        \footnotesize Note: table summarises results from \mcnsim\ replications of each data generating process. SE-TWFE is reported against the average total trajectory effect realised in each replication, calculated as the mean combined effect of all events experienced by the units observed at first-event horizon $h$ in that replication. Each quantity is formed at every horizon within a replication and then averaged, first across the horizons of the group with equal weight and then across replications.
    \end{minipage}
\end{table}
	
To assess the ability of the P-TWFE specification to recover cross-occurrence average event effects, Table~\ref{tab:mc_pooled} reports bias measures in the P-TWFE estimates compared to an average of the population first- and second-event effects $\bar\tau^{(1)}_h$ and $\bar\tau^{(2)}_h$, weighted by the number of observations contributing at each horizon in the replication:
$$\bar\tau^{(P)}_h = \frac{N^{(1)}_h \bar\tau^{(1)}_h + N^{(2)}_h \bar\tau^{(2)}_h}{N^{(1)}_h + N^{(2)}_h}.$$
This reveals that P-TWFE returns biased estimates of the cross-event average ATTs as well as of the event-specific ATTs. 

\begin{table}[H]
    \centering
    \caption{Bias of the P-TWFE Specification Against the Cross-Occurrence Average Effect}\label{tab:mc_pooled}
%
\begin{tabular}{clrr}
\hline\hline
\rule{0pt}{2.6ex}DGP & $h$ & Pooled ATT & P-TWFE bias \\
 & & $\tau^{(P)}$ & against $\tau^{(P)}$ \\
\hline
1 & 0 & -1.6668 & +0.0227 \\
 & 1--4 & -0.9332 & +0.0710 \\
 & 5--8 & -0.5532 & +0.1409 \\
\hline
2 & 0 & -1.6668 & -0.0853 \\
 & 1--4 & -0.9334 & -0.0419 \\
 & 5--8 & -0.5534 & +0.0323 \\
\hline
3 & 0 & -1.6665 & -0.0126 \\
 & 1--4 & -1.0403 & +0.0450 \\
 & 5--8 & -0.7268 & +0.1271 \\
\hline
4 & 0 & -1.6679 & +0.1868 \\
 & 1--4 & -0.8291 & +0.2629 \\
 & 5--8 & -0.4048 & +0.3542 \\
\hline\hline
\end{tabular}

    \begin{minipage}{0.95\textwidth}
        \vspace{0.25cm}
    \footnotesize Note: table summarises results from \mcnsim\ replications. $\bar\tau^{(P)}_h$ is the average of the population first- and second-event effects, weighted by the number of observations contributing at each horizon in the replication. Both $\bar\tau^{(P)}_h$ and the P-TWFE bias are formed at every horizon within a replication and then averaged, first across the horizons of the group with equal weight and then across replications.
    \end{minipage}
\end{table}

Two features of the specification are responsible. The first is that the FWL mechanics that underpin the TWFE model cause the P-TWFE specification to weight first- and second-event effects at a given horizon by the sample configuration of event timings and, in general, these weights will not coincide with the weighting that defines $\bar\tau^{(P)}_h$. This is effectively a `within-horizon' channel that causes the P-TWFE weights across occurrences to deviate from those in the pooled ATT at the horizon the pooled ATT pertains to. The second feature concerns cross-horizon contamination. Since a unit that has experienced both events is at two relative times at once, the coefficient at each horizon is fitted on outcomes that are additionally influenced by treatment effects at other horizons. Where effects differ across occurrences, a single coefficient per horizon cannot represent the effect of a specific event-horizon combination, leaving a residual that is absorbed into the coefficients at other horizons.\footnote{Proposition~\ref{prop:pooled} formalises these bias channels, showing that the second is generically non-zero and innocuous only where effects are common across occurrences. The first vanishes in the limit where conditional mean effects are common across occurrences and invariant to $(\mathbf{d},\mathbf{x})$, and otherwise arises because the weights on each occurrence at horizon $h$ need not coincide with the observation shares defining $\bar\tau^{(P)}_h$.}

Table~\ref{tab:mc_pooled} shows that the bias in the P-TWFE estimates relative to the pooled ATT grows with the horizon. Separate calculation of the two bias components described above reveals the increasing pattern is almost entirely driven by cross-horizon contamination. At short horizons after a first event, few units have yet experienced a second, so the outcomes identifying short-horizon effects can largely be attributed to a single occurrence. As the horizon grows, an increasing share of units experience a second event and the P-TWFE model must separate the contribution of each event from outcomes that reflect both. When occurrence-specific effects differ systematically, the P-TWFE model fails to reflect the true average effect of any event-horizon combination and thus imperfectly attributes effects to the horizons of each event. This attribution applies to more observations at longer horizons, causing greater cross-horizon contamination. The within-horizon channel, by contrast, is small and contributes proportionately less to the bias as the horizon grows.
	
These results have important implications for applied researchers. The pooled specification might appear a safe choice for a researcher content with an average effect across occurrences, since it asks less of the data than an event-specific estimator does. This is misleading and the model will not in general yield estimates with clear interpretation as the average effect across distinct occurrences. A researcher who wants an average across occurrences is better served by estimating the event-specific effects and averaging them, which the SIE permits at no further cost in assumptions.
	
\section{Implementation Considerations}\label{sect:implementation}
	
The Monte Carlo simulations of Section \ref{sect:montecarlo} reveal the SIE as an attractive tool for researchers interested in estimating event-specific treatment effects when treatment events can reoccur. The empirical contexts faced by such researchers will inevitably be less stylized than the controlled environment of the simulations, necessitating various implementation decisions. This section discusses two pertinent considerations.
	
\paragraph{Initial Conditions}
In many applications, datasets will not observe the full treatment trajectory $E_i$ of a unit from the time of its first existence. Units will usually predate the observed data and hence their initial conditions will be unobserved to the researcher.\footnote{Notable exceptions to the unobserved initial conditions problem are panel datasets recording individuals' full health or employment histories. Researchers who intend to examine the impact of recurrent events using such datasets can ignore the considerations discussed under the initial condition heading.} This raises two issues. 
	
First, and more substantively, it raises ambiguity over the definition of the pre-event sample partition $\Omega^{(0)}$. The SIE relies on the researcher's ability to clearly identify observations whose outcomes are unaffected by any treatment events, as this pre-treatment subsample is used to estimate unit and time fixed effects. When initial treatment trajectories are unobserved, there is the risk that observations before the first \emph{observed} event are contaminated by the effects of events occurring before the data period, which will bias unit and time fixed effect estimates and hence the SIE. 

This issue is not unique to the SIE and affects all TWFE estimators by undermining correct measurement of event-horizon indicators. Absent comprehensive data, the most straightforward way to address the issue is to assume that event effects have decayed to zero after some horizon $\bar{h}$, and to trim each unit's panel so that the earliest $\bar{h}$ periods in the estimation sample are free from treatment occurrence.\footnote{Choosing $\bar{h}=5$, for example, would correspond to assuming $\tau^{(k)}_{ih}=0$ for all $h\geq5$, $i$ and $k$.} This is similar to the solution proposed by \citet{de_chaisemartin_did_2024} to an equivalent initial conditions problem in their context of time-varying treatment, although it is less informationally costly since they also have to exclude the entire panel of units whose treatment trajectory varies at some point over the first $\bar{h}$ periods. While the choice of $\bar{h}$ is inevitably somewhat ad-hoc, it is desirable for $\bar{h}$ to be internally consistent with the treatment effect estimates that follow. If setting $\bar{h}=5$, for example, returned $\widehat{\tau}^{(k)}_{ih}\neq0$ for some $h\geq5$, it would be advisable to increase $\bar{h}$. Such iterative updating of $\bar{h}$, however, would raise issues of pre-testing and hence a prudent approach would be to set a relatively high $\bar{h}$, reporting estimates obtained under alternative $\bar{h}$ as supplementary robustness checks.
	
The second issue is semantic. If the full treatment trajectory is not observed in the data, it is unclear whether the $X$-th observed treatment event is the $X$-th treatment event ever experienced. While the count of ever-experienced events is perhaps a subject of greater research interest than the count of observed events, there is no way to overcome the ambiguity between the two concepts without fully observing units' treatment trajectories. If data lack such comprehensive coverage, it is necessary to interpret the event count indicator $k$ as the `number of treatment events experienced during the observed time period' and estimated treatment effects should accordingly be interpreted as effects of the $k$-th observed event, which need not correspond to the $k$-th event ever experienced.
	
\paragraph{Long Horizons and the Number of Modelled Events}

Assumption~\ref{ass:supp} requires adequate observations of units having experienced $m<k$ events at every prior-event horizon needed to construct counterfactuals for units hit by $k$ events. This condition becomes progressively stricter at longer horizons when event occurrence is serially correlated, and even where it holds, estimates become less precise as the samples on which the growth relationships are estimated reduce in size. Two methods to address this issue may appeal to practitioners, yet impose non-trivial restrictions.

The first approach mirrors the `binning' of event-time indicators routinely applied in TWFE models, whereby effects beyond a given horizon $\tilde{h}$ are held constant. In the SIE this amounts to setting $\lambda^{(m)}_\ell=\lambda^{(m)}_{\tilde{h}}$ for all $\ell>\tilde{h}$, so that each unit's imputed path is flat beyond $\tilde{h}$. Just as in single-event contexts, such binning is an identifying restriction 
rather than a normalisation \citep{schmidheiny_event_2023}, and where effects 
continue to evolve beyond $\tilde{h}$ it will bias estimates. The sequential structure of the SIE makes the bias introduced by terminal-horizon binning particularly consequential, since an error in the imputed event-$m$ path at long horizons is subtracted at Stage $(m+1)$.a and so will affect the event-$(m+1)$ estimate at the same observation. 

The second approach is to model fewer events than some units are observed to experience, setting $K$ below the maximum observed event count. Observations at which more than $K$ events have occurred then carry effects the specification does not parametrise, and those effects will be absorbed into the estimates of the $K$-th event. This need not undermine the utility of the SIE's estimates if, for example, analytical focus were on the effects of early event occurrences, but is an important consequence to be aware of.  

To avoid these issues, researchers should estimate a separate effect at every observed horizon and for every observed event occurrence, reporting as inestimable any average effect for which support fails.

\section{Conclusion}\label{sect:conclusion}
	
Event study research designs are widespread in applied economic research. While their performance in the context of a single absorbing treatment is now well-studied, the extent to which similar methods can be used to estimate treatment effects of recurrent events is poorly understood. This paper addresses this gap by showing a sufficiently flexible TWFE specification can obtain consistent estimates of the combined effect of a treatment trajectory under assumptions commonly invoked in single-event settings. Separate identification of the effects of each occurrence requires further assumptions. To achieve such identification, I propose a `conditional parallel effect-trends' (CPET) assumption, which mirrors the familiar parallel trend assumption of single-event difference-in-difference designs, and a sequential imputation estimator (SIE). I show the estimator is consistent when growth in event effects is linear in observed characteristics and conditionally mean independent of the subsequent event trajectory (the LEG assumption), a condition that implies CPET. Several TWFE specifications that have been applied in analyses of recurrent events, by contrast, fail to recover treatment effect estimates that can be interpreted as unbiased measures of occurrence-specific event effects under a range of plausible notions of treatment effect heterogeneity.
	
Monte Carlo simulations show the SIE performs well when its assumptions hold. A proposed diagnostic test of a selection assumption on which the SIE depends detects failures of the assumption reliably, but the test has little power against misspecification of a more specific functional form assumption. This contrasts with the single-event and pooled TWFE specifications, the most commonly adopted models in contexts of recurrent events, which exhibit large bias in estimates of event-specific treatment effects. A multi-event TWFE specification is more robust but remains vulnerable to non-negligible bias where selection into subsequent events depends on the effects of prior ones. The single-event estimator is only modestly biased for the total trajectory effect, whereas the pooled estimator exhibits substantial bias against the cross-event average effect it is designed to recover.
		
While the theoretical and simulation-based findings of this paper indicate the SIE has potential as a powerful empirical tool, its ability to answer questions of substantive empirical importance will be greater proof of its utility. Future work will address this by applying the SIE to two distinct questions of policy interest. The first application will use data on repeated minimum wage changes in the US to examine whether the effect of increases depends on the frequency between them. The second will use agricultural and climatic data from the US to quantify how the impact of drought on agricultural outcomes varies according to the intensity and frequency of droughts and examine whether agricultural insurance appears to be undermining adaptation to droughts.
	
\printbibliography
	
\appendix 
    
\section{Proof of Proposition~\ref{prop:sie}}\label{app:proof_sie}

\begin{proof}
	All probability limits are taken as $N\to\infty$ with $T$ fixed. For an event $m$ and a unit with $E^{(m)}_i<\infty$, write $\tau^{(m)}_{i\ell}$ for $\tau^{(m)}_{i,E^{(m)}_i+\ell}$ and decompose the event-$m$ effect path, following \ref{ass:leg}, as
	\[
	\tau^{(m)}_{i\ell}=\tau^{(m)}_{i0}+\mathbf{x}^{(m)\prime}_i\lambda^{(m)}_\ell+v^{(m)}_{i\ell},
	\qquad
	\E\bigl[v^{(m)}_{i\ell}\mid\mathbf{d},\mathbf{x}\bigr]=0 .
	\tag{A.1}\label{eq:leg-resid}
	\]
	For each $m$ and each observation at which the estimator constructs $\widehat\tau^{(m)}_{it}$, write the unit-event-horizon-specific error as $e^{(m)}_{it}\equiv\widehat\tau^{(m)}_{it}-\tau^{(m)}_{it}$, and collect the estimation errors that are common to all units in
	\[
	\delta_N=\bigl(\delta^{\theta\prime}_N,\delta^{\lambda\prime}_N\bigr)',
	\qquad
	\delta^{\theta}_N\equiv\widehat\theta-\theta,
	\qquad
	\delta^{\lambda}_N\equiv\bigl(\widehat\lambda^{(j)}_\ell-\lambda^{(j)}_\ell\bigr)_{j\leq K-1,\,\ell\geq1},
	\]
	where $\theta$ collects the Stage 0 period effects and their covariate interactions, as written out in Step 1. Since $T$ and $K$ are fixed, $\delta_N$ is of fixed dimension. The proof establishes the following claim.

	\medskip
	\noindent\textbf{Claim.} For each $m\in\{1,\ldots,K\}$:
	\begin{enumerate}[label=(\roman*),leftmargin=2em,itemsep=0.2ex,topsep=0.4ex]
		\item at every observation at which the estimator constructs $\widehat\tau^{(m)}_{it}$, the error decomposes as $e^{(m)}_{it}=A^{(m)}_{it}+B^{(m)}_{it}$, where $A^{(m)}_{it}$ is idiosyncratic, satisfying $\E[A^{(m)}_{it}\mid\mathbf{d},\mathbf{x}]=0$ with finite variance and independent across units, and $B^{(m)}_{it}=\mathbf{z}^{(m)\prime}_{it}\delta_N$, where $\mathbf{z}^{(m)}_{it}$ is a vector of the dimension of $\delta_N$, is a function of $(\mathbf{d},\mathbf{x})$, and satisfies $\max_{t\leq T}\E\|\mathbf{z}^{(m)}_{it}\|^2<\infty$;
		\item for $m\leq K-1$ and each $\ell\geq1$, $\widehat\lambda^{(m)}_\ell-\lambda^{(m)}_\ell=O_p(N^{-1/2})$.
	\end{enumerate}
	\medskip

	\paragraph{Step 1: Stage 0.} By Assumptions~\ref{ass:sutva}, \ref{ass:anticipation} and \ref{ass:nbc}, observations in $\Omega^{(0)}$ are untreated and their covariates unaffected by treatment, so by \ref{ass:lin} and equations~\eqref{eq:unitfe} and \eqref{eq:dgp_total},
	\[
	Y_{it}=\alpha_i+\mathbf{g}_{it}'\theta+u_{it},\qquad it\in\Omega^{(0)},
	\]
	where $\alpha_i=c_i$ as defined in \eqref{eq:unitfe} and $\theta=(\gamma_2,\ldots,\gamma_T,\pi_2',\ldots,\pi_T')'$, with $\mathbf{g}_{it}$ the corresponding vector of period indicators and their interactions with $\mathbf{x}_i$. Since $T$ is fixed, $\theta$ is of fixed dimension.

	Let $\Omega^{(0)}_i=\{t:t<E^{(1)}_i\}$, which is non-empty by the restriction $E^{(1)}_i>1$, and let a bar denote the mean over that set. Least squares with a full set of unit indicators sets the within-unit mean residual to zero over $\Omega^{(0)}_i$, so that $\widehat\alpha_i=\bar Y_i-\bar{\mathbf{g}}_i'\widehat\theta$ and hence $\widehat\alpha_i-\alpha_i=\bar u_i-\bar{\mathbf{g}}_i'\delta^{\theta}_N$. Since $\widehat{Y}^{(0)}_{it}-Y^{(0)}_{it}=(\widehat\alpha_i-\alpha_i)+\mathbf{g}_{it}'\delta^{\theta}_N-u_{it}$, this gives
	\[
	\eta_{it}\equiv\widehat{Y}^{(0)}_{it}-Y^{(0)}_{it}
	=\bar u_i-u_{it}+\widetilde{\mathbf{g}}_{it}'\delta^{\theta}_N,
	\qquad
	\widetilde{\mathbf{g}}_{it}\equiv\mathbf{g}_{it}-\bar{\mathbf{g}}_i ,
	\tag{A.2}\label{eq:eta}
	\]
	which satisfies part (i) of the claim, the first two terms being idiosyncratic and the third of the form $\mathbf{z}'\delta_N$.

	Concentrating out the unit indicators leaves a regression on $\widetilde{\mathbf{g}}_{it}$ over $\Omega^{(0)}$ in which the number of parameters is fixed while the number of observations grows at rate $N$, so $\delta^{\theta}_N=O_p(N^{-1/2})$ under \ref{ass:reg}(i)--(iv) and the mean-zero property of $u_{it}$. While the term $\bar u_i-u_{it}$ is mean zero given $(\mathbf{d},\mathbf{x})$, independent across units and of finite variance, it does not vanish as $N$ grows, because the unit effects are estimated from finitely many observations.

	The remainder of the proof establishes the claim by induction on the event index $m$. Steps 2 to 4 fix an arbitrary $m$ and establish the claim at $m$ given that it holds at every earlier event, so that verifying it at $m=1$, where no earlier event exists, establishes it in turn at $m=2$, then at $m=3$, and so on to $m=K$.

	\paragraph{Step 2: Stage $m$.a.} Fix $m\in\{1,\ldots,K\}$ and suppose the claim holds for all $j<m$. Two error identities follow directly from Definition~\ref{def:sie} and \eqref{eq:leg-resid}. For $it\in\Omega^{(m)}$ we have $K_{it}=m$ and, by the decomposition of Section~\ref{sect:estimation_target}, $\tau^{(m)}_{it}=\tau_{it}-\sum_{j=1}^{m-1}\tau^{(j)}_{it}$. Subtracting this from the Stage $m$.a estimator, or from the Stage $K$ estimator when $m=K$, and using $\widehat\tau_{it}=\tau_{it}-\eta_{it}$ gives
	\[
	e^{(m)}_{it}=-\eta_{it}-\sum_{j=1}^{m-1}e^{(j)}_{it},
	\qquad it\in\Omega^{(m)} .
	\tag{A.3}\label{eq:ema}
	\]
	For $it\in\Omega^{(k)}$ with $k>m$ and $\ell=t-E^{(m)}_i$, Stage $m$.c sets $\widehat\tau^{(m)}_{it}=\widehat\tau^{(m)}_{i0}+\mathbf{x}^{(m)\prime}_i\widehat\lambda^{(m)}_\ell$, so \eqref{eq:leg-resid} gives
	\[
	e^{(m)}_{it}=e^{(m)}_{i0}
	+\mathbf{x}^{(m)\prime}_i\bigl(\widehat\lambda^{(m)}_\ell-\lambda^{(m)}_\ell\bigr)
	-v^{(m)}_{i\ell},
	\qquad it\in\Omega^{(k)},\ k>m .
	\tag{A.4}\label{eq:emc}
	\]

	Consider \eqref{eq:ema}. Each $e^{(j)}_{it}$ with $j<m$ is supplied by Stage $j$.c, since $it\in\Omega^{(m)}$ with $m>j$, and so satisfies part (i) by the inductive hypothesis, while $\eta_{it}$ satisfies it by \eqref{eq:eta}. Summing finitely many such errors preserves the claim's two-part error structure, the idiosyncratic components remaining mean zero given $(\mathbf{d},\mathbf{x})$, independent across units and of finite variance, and the common components remaining of the form $\mathbf{z}'\delta_N$. Part (i) therefore holds for $e^{(m)}_{it}$ at every $it\in\Omega^{(m)}$. At $m=1$ the sum is empty and $e^{(1)}_{it}=-\eta_{it}$. For the impact error, that is $e^{(m)}_{it}$ at $t=E^{(m)}_i$, exactly $m$ events have
	occurred at that date so $(i,E^{(m)}_i)\in\Omega^{(m)}$. The impact error is always therefore given by \eqref{eq:ema} rather than by imputation and hence satisfies part (i).

	\paragraph{Step 3: Stage $m$.b.} This step establishes part (ii) at $m$. Let $m\leq K-1$, fix $\ell\geq1$, and write $E\equiv E^{(m)}_i$ and $S^{(m)}_\ell=\{i:(i,E+\ell)\in\Omega^{(m)}\}$, whose share of the sample converges to a positive limit by \ref{ass:supp} and \ref{ass:reg}(i), so that $|S^{(m)}_\ell|$ grows at rate $N$. For $i\in S^{(m)}_\ell$ both $E$ and $E+\ell$ lie in $\Omega^{(m)}$, so the errors in both effects entering the Stage $m$.b regressand are given by \eqref{eq:ema} and
	\[
	e^{(m)}_{i\ell}-e^{(m)}_{i0}
	=\zeta_{i\ell}-(\widetilde{\mathbf{g}}_{i,E+\ell}-\widetilde{\mathbf{g}}_{i,E})'\delta^{\theta}_N
	-\sum_{j=1}^{m-1}\bigl(e^{(j)}_{i,E+\ell}-e^{(j)}_{i,E}\bigr),
	\qquad
	\zeta_{i\ell}\equiv u_{i,E+\ell}-u_{i,E}.
	\tag{A.5}\label{eq:emb}
	\]
	The term $\bar u_i$, and with it the entire unit-effect estimation error, cancels between the two dates.

	The prior-event terms in \eqref{eq:emb} require separate treatment, since they are themselves imputed. For each $j<m$ both come from Stage $j$.c, so differencing \eqref{eq:emc} at event $j$ between horizons $\ell_1\equiv E-E^{(j)}_i$ and $\ell_1+\ell$, at which the impact error $e^{(j)}_{i0}$ cancels,
	\[
	e^{(j)}_{i,E+\ell}-e^{(j)}_{i,E}
	=\mathbf{x}^{(j)\prime}_i\Bigl[\bigl(\widehat\lambda^{(j)}_{\ell_1+\ell}-\lambda^{(j)}_{\ell_1+\ell}\bigr)-\bigl(\widehat\lambda^{(j)}_{\ell_1}-\lambda^{(j)}_{\ell_1}\bigr)\Bigr]
	-\bigl(v^{(j)}_{i,\ell_1+\ell}-v^{(j)}_{i\ell_1}\bigr).
	\tag{A.6}\label{eq:ejdiff}
	\]
	The growth-coefficient differences are $O_p(N^{-1/2})$ by the inductive hypothesis, and the residual difference is mean zero given $(\mathbf{d},\mathbf{x})$ by \eqref{eq:leg-resid}, the horizons $\ell_1$ and $\ell_1+\ell$ being functions of $\mathbf{d}$.

   	Substituting \eqref{eq:leg-resid} for the true growth and \eqref{eq:ejdiff} for the prior-event errors in \eqref{eq:emb}, the Stage $m$.b regressand equals $\mathbf{x}^{(m)\prime}_i\lambda^{(m)}_\ell+\varepsilon^{(m)}_{i\ell}$, where
	\begin{align*}
    	\varepsilon^{(m)}_{i\ell}
    	&=\underbrace{v^{(m)}_{i\ell}+\zeta_{i\ell}
    	+\sum_{j<m}\bigl(v^{(j)}_{i,\ell_1+\ell}-v^{(j)}_{i\ell_1}\bigr)}_{\textstyle a_{i\ell}}\\[1ex]
    	&\quad\underbrace{-\bigl(\widetilde{\mathbf{g}}_{i,E+\ell}-\widetilde{\mathbf{g}}_{i,E}\bigr)'\delta^{\theta}_N
    	-\sum_{j<m}\mathbf{x}^{(j)\prime}_i
    	\Bigl[\bigl(\widehat\lambda^{(j)}_{\ell_1+\ell}-\lambda^{(j)}_{\ell_1+\ell}\bigr)
    	-\bigl(\widehat\lambda^{(j)}_{\ell_1}-\lambda^{(j)}_{\ell_1}\bigr)\Bigr]}_{\textstyle b_{i\ell}},
	\end{align*}
	with $\ell_1=E-E^{(j)}_i$ depending on $j$. Ordinary least squares then gives
	\[
	\widehat\lambda^{(m)}_\ell-\lambda^{(m)}_\ell
	=\widehat{Q}^{-1}_{m\ell}\cdot\frac{1}{|S^{(m)}_\ell|}\sum_{i\in S^{(m)}_\ell}\mathbf{x}^{(m)}_i\bigl(a_{i\ell}+b_{i\ell}\bigr),
	\qquad
	\widehat{Q}_{m\ell}=\frac{1}{|S^{(m)}_\ell|}\sum_{i\in S^{(m)}_\ell}\mathbf{x}^{(m)}_i\mathbf{x}^{(m)\prime}_i .
	\tag{A.7}\label{eq:lambdahat}
	\]
	Consider the $a_{i\ell}$ part of the numerator. Each of its components is mean zero given $(\mathbf{d},\mathbf{x})$ and, since $\mathbf{x}^{(m)}_i$ is a function of $(\mathbf{d},\mathbf{x})$, $\E[\mathbf{x}^{(m)}_ia_{i\ell}\mid\mathbf{d},\mathbf{x}]=0$. Selection into $S^{(m)}_\ell$ does not disturb this, since membership too is a
	function of $\mathbf{d}$. The summands are independent across units and have finite variance under \ref{ass:reg}(ii)--(iii). Their average therefore has mean zero and variance of order $N^{-1}$, since $|S^{(m)}_\ell|$ grows at rate $N$, and is hence $O_p(N^{-1/2})$.
	
    The $b_{i\ell}$ component is a sum of terms, each a function of
	$(\mathbf{d},\mathbf{x})$ times either $\delta^{\theta}_N$ or a growth-coefficient error at some $j<m$. These factors are common to all units,
	so they do not average out; they are instead $O_p(N^{-1/2})$ by Step 1 and the inductive hypothesis. Factoring each out of the sum, the contribution of $b_{i\ell}$ to the numerator is bounded by $O_p(N^{-1/2})$ times a sample average of products of norms, which has finite expectation under \ref{ass:reg}(iii) and is hence $O_p(1)$. The numerator is therefore $O_p(N^{-1/2})$, and $\widehat{Q}_{m\ell}$ converges to a non-singular limit by \ref{ass:supp}, which gives part (ii) at $m$.
    
	\paragraph{Step 4: Stage $m$.c.} It remains to verify part (i) at the observations covered by \eqref{eq:emc}. There, $e^{(m)}_{i0}$ satisfies part (i) by Step 2, the second term is of the form $\mathbf{z}'\delta_N$ with the rate supplied by Step 3, and $v^{(m)}_{i\ell}$ is mean zero given $(\mathbf{d},\mathbf{x})$, independent across units and of finite variance by \ref{ass:reg}(iii). Part (i) therefore holds at every observation at which $\widehat\tau^{(m)}_{it}$ is constructed. At $m=K$ there is no Stage $K$.c, since no unit lies in $\Omega^{(k)}$ for $k>K$, and part (i) follows from Step 2 alone, while part (ii) is not required.

	Steps 2 to 4 therefore establish the claim at $m$ whenever it holds at every earlier event. Since it holds at $m=1$, it holds at every $m$ up to $K$. Part (ii) at every $m\leq K-1$, together with $\delta^{\theta}_N=O_p(N^{-1/2})$ from Step 1, then gives $\delta_N=O_p(N^{-1/2})$, the norm of $\delta_N$ being bounded by the sum of the norms of its finitely many blocks.

	\paragraph{Step 5: Aggregation.} Fix $m$ and weights $w_{it}$ supported on $\Omega^{(\geq m)}$ and satisfying \ref{ass:reg}(v), and write $W_i=\sum_t w_{it}$, so that $\sum_i W_i=1$ and $\max_i W_i\leq T\max_{it}w_{it}\to0$. All expectations in this step are conditional on $(\mathbf{d},\mathbf{x})$, under which the weights are non-stochastic. Writing $\bar\tau^{(m)}_{it}\equiv\E[\tau^{(m)}_{it}\mid\mathbf{d},\mathbf{x}]$, we can decompose the SIE estimation error as
	\[
	\sum_{it}w_{it}\widehat\tau^{(m)}_{it}-\sum_{it}w_{it}\bar\tau^{(m)}_{it}
	=\sum_{i}\Bigl(\sum_t w_{it}\bigl\{\tau^{(m)}_{it}-\bar\tau^{(m)}_{it}\bigr\}\Bigr)
	+\sum_{i}\Bigl(\sum_t w_{it}A^{(m)}_{it}\Bigr)
	+\sum_{it}w_{it}B^{(m)}_{it}.
	\]
	The first two terms on the RHS are sums over units of mean-zero, independent quantities, with variance bounded by $C\sum_i W_i^2\leq C\max_i W_i\to0$, the second inequality using $\sum_i W_i=1$ and $C$ being a constant that exists by \ref{ass:reg}(ii)--(iii) and the finiteness of $T$. Having mean zero and vanishing variance, both converge in probability to zero.

	The third term features $\delta_N$, which is a single draw common to all units and so does not average out across them. It vanishes instead because the common factor shrinks while the weights cannot amplify it,
	\[
	\Bigl|\sum_{it}w_{it}B^{(m)}_{it}\Bigr|
	\leq\|\delta_N\|\sum_{it}w_{it}\bigl\|\mathbf{z}^{(m)}_{it}\bigr\|
	=O_p(N^{-1/2})\cdot O_p(1)=o_p(1),
	\]
	the second factor being a weighted average of norms whose expectation is at most $\max_{t\leq T}\E\|\mathbf{z}^{(m)}_{it}\|$, finite by the claim, and hence $O_p(1)$. 
    
    All three terms therefore converge in probability to zero, so the weighted average of estimated event-$m$ effects and the corresponding weighted average of expected event-$m$ effects have the same probability limit. Part (ii) of the claim, established in Step 3, further gives $\plim\widehat\lambda^{(m)}_\ell=\lambda^{(m)}_\ell$ for each $m\leq K-1$ and each $\ell\geq1$, completing the proof.
\end{proof}
	
\section{Weighting Representations for TWFE Specifications}\label{app:twfe_proofs}
	
The three results below share a common structure and a common set of conditions. Throughout, the specification in question is estimated by OLS on a balanced panel of data generated by equation \eqref{eq:dgp_marginal} under \ref{ass:sutva}, \ref{ass:anticipation}, \ref{ass:nbc} and \ref{ass:lin}, with \ref{ass:reg} in force. Cells are indexed by $(e,\mathbf{x})$ and defined by $\mathcal{C}(e,\mathbf{x})=\{i:E_i=e,\ \mathbf{x}_i=\mathbf{x}\}$, with $N_{e,\mathbf{x}}$ denoting the number of units in the cell. In addition to \ref{ass:reg}, the results below impose that each cell retains a non-vanishing share of the sample as $N\to\infty$. This ensures that the weight attaching to any individual observation becomes uniformly small, so that the idiosyncratic components of the treatment effects average out. For Proposition~\ref{prop:multi_dynamic} I additionally assume that the within-unit-demeaned regressors of \eqref{eq:metwfe} have a positive-definite population second-moment matrix. 
	
Each result is stated relative to the following condition, which requires the conditional mean effect of each occurrence at each horizon to be the same whatever a unit's timing trajectory and covariates:
\begin{equation}\label{eq:tinv}
    \E\bigl[\tau^{(k)}_{ih}\mid\mathbf{d},\mathbf{x}\bigr]=\bar{\tau}^{(k)}_{h}
    \qquad\text{for every }k\text{ and every }h.
\end{equation}
Condition \eqref{eq:tinv} is similar to \ref{ass:cpt} but applies to the event-specific effects rather than to untreated outcomes. It is considerably stronger than \ref{ass:cpet}, which restricts conditional mean effect \emph{growth} across the $\Omega^{(m)}$ and $\Omega^{(k)}$ subsamples at a given horizon, while permitting that growth to depend on $\mathbf{x}^{(m)}_i$. 
	
\subsection{The Single-Event Dynamic Specification}\label{app:single_dynamic}
	
\begin{proposition}[Weighting representation for the SE-TWFE specification]\label{prop:single_dynamic}
    Fix a horizon $h$, write $r^{(h)}_{it}$ for the residual from projecting $w^{(1,h)}_{it}$ onto the remaining regressors of \eqref{eq:setwfe}, and define
    \[
    v^{(h)}_{it}=r^{(h)}_{it}\Big/\textstyle\sum_{js\in\Omega}\bigl(r^{(h)}_{js}\bigr)^{2}.
    \]
    Then:
    \begin{enumerate}
        \item \textbf{Weighting representation.}
        \begin{equation}\label{eq:se_weights}
            \widehat{\tau}^{SE-TWFE}_h=\sum_{h'=0}^{\bar{h}^{(1)}}\ \sum_{i\in\mathcal{N}^{(1)}_{h'}}v^{(h)}_{i,E^{(1)}_i+h'}\,\tau_{i,E^{(1)}_i+h'}+o_p(1),
        \end{equation}
        where
        \begin{equation}\label{eq:se_weightsums}
            \sum_{i\in\mathcal{N}^{(1)}_{h}}v^{(h)}_{i,E^{(1)}_i+h}=1,\qquad
            \sum_{i\in\mathcal{N}^{(1)}_{h'}}v^{(h)}_{i,E^{(1)}_i+h'}=0\ \ \text{for }h'\neq h .
        \end{equation}
        Some weights may be negative.
        
        \item \textbf{Measurability of the weights.} The weights are deterministic functions of $\Omega$, of $\mathbf{x}$ and of the sample configuration of first-event dates, and do not depend on realised outcomes. They do not depend on $E^{(k)}_i$ for any $k>1$, and $r^{(h)}_{it}=r^{(h)}_{jt}$ for every $t$ whenever $E^{(1)}_i=E^{(1)}_j$ and $\mathbf{x}_i=\mathbf{x}_j$.
        
        \item \textbf{Consistency.} Suppose that for each $h'$ there exists $\bar{\tau}_{h'}$ with
        \begin{equation}\label{eq:se_condition}
            \E\bigl[\tau_{it}\mid E^{(1)}_i,\mathbf{x}_i\bigr]=\bar{\tau}_{h'}\quad\text{for all }it\text{ with }\ell^{(1)}_{it}=h'.
        \end{equation}
        Then $\plim_{N\to\infty}\widehat{\tau}_h=\bar{\tau}_h$. Condition \eqref{eq:se_condition} holds if inter-event gaps are common across units and either event-specific effects are homogeneous across units or \eqref{eq:tinv} holds. Absent \eqref{eq:se_condition}, trajectory effects at horizons $h'\neq h$ enter $\widehat{\tau}_h$ with net weights determined by the joint distribution of $(E^{(1)}_i,\mathbf{x}_i)$, and $\widehat{\tau}_h$ need not admit interpretation as a reasonably-weighted average of total trajectory effects.
    \end{enumerate}
\end{proposition}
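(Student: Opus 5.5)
The proof rests on the Frisch--Waugh--Lovell theorem applied to \eqref{eq:setwfe}. By FWL, $\widehat{\tau}_h=\sum_{it}r^{(h)}_{it}Y_{it}/\sum_{it}(r^{(h)}_{it})^2=\sum_{it}v^{(h)}_{it}Y_{it}$. Because $r^{(h)}$ is orthogonal to every other regressor, we can substitute the DGP \eqref{eq:dgp_total} for $Y_{it}$. The unit effects $c_i$ and the time terms $\gamma_s fs_t$, $fs_t\mathbf{x}_i'\pi_s$ are spanned by regressors of the specification, so they are annihilated. This leaves
\[
\widehat\tau_h=\sum_{it}v^{(h)}_{it}\tau_{it}w^{(1)}_{it}+\sum_{it}v^{(h)}_{it}u_{it}.
\]
Grouping the first sum by first-event horizon $h'$ gives the double sum in \eqref{eq:se_weights}.

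The weight sums in \eqref{eq:se_weightsums} follow from two further orthogonality facts. First, $r^{(h)}$ is orthogonal to each retained indicator $w^{(1,h')}$ with $h'\neq h$, so $\sum_{it}v^{(h)}_{it}w^{(1,h')}_{it}=0$. Second, $\sum_{it}r^{(h)}_{it}w^{(1,h)}_{it}=\sum_{it}(r^{(h)}_{it})^2$, which gives weights summing to one at horizon $h$. Some weights may be negative: a simple staggered-timing example shows that residualised indicators take negative values at later-treated observations, as in the single-event literature.

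\textbf{Measurability (part 2).} The regressors of \eqref{eq:setwfe} are unit dummies, period dummies, their interactions with $\mathbf{x}_i$, and first-event horizon indicators. All of these are functions of $\Omega$, $\mathbf{x}$ and $E^{(1)}$ only, so the residual $r^{(h)}$ is too, and it does not depend on $E^{(k)}$ for $k>1$. For the symmetry claim, note that units $i$ and $j$ with the same $E^{(1)}$ and $\mathbf{x}$ have identical rows up to the unit dummy. In a balanced panel the projection is invariant to permuting such units, so the fitted projection coefficients imply $r^{(h)}_{it}=r^{(h)}_{jt}$.

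The remaining term $\sum v^{(h)}_{it}u_{it}$ must be shown to be $o_p(1)$. Conditional on $(\mathbf{d},\mathbf{x})$ it has mean zero, because $\E[u_{it}\mid\mathbf{d},\mathbf{x}]=0$ and the weights are measurable. Its variance is of order $\sum_i(\sum_t |v_{it}|)^2$. Because each cell retains a non-vanishing share of the sample, $\sum(r^{(h)})^2$ grows at rate $N$ while each $r^{(h)}_{it}$ stays bounded, so $\max|v|=O(1/N)$ and the variance vanishes.

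\textbf{Consistency (part 3).} Write $\tau_{it}=\bar\tau_{h'}+(\tau_{it}-\E[\tau_{it}\mid E^{(1)},\mathbf{x}])$. Under \eqref{eq:se_condition}, the means contribute $\sum_{h'}\bar\tau_{h'}\sum_{i}v^{(h)}_{i,E^{(1)}_i+h'}=\bar\tau_h$ by the weight sums. The deviations form a weighted sum of mean-zero terms, independent across units, with weights $O(1/N)$; by the same variance argument this sum is $o_p(1)$.

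The sufficient conditions for \eqref{eq:se_condition} are checked as follows. If gaps are common, the set of active events at $h'$ is fixed, and $\tau_{it}=\sum_k\tau^{(k)}_{i,h'-g_k}$. Either homogeneity of event effects or \eqref{eq:tinv} then makes its conditional mean depend only on $h'$.

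\textbf{Anticipated difficulty.} The main obstacle is the uniform $O(1/N)$ control of the weights, which requires a population-level positive denominator. I would handle it by writing $r^{(h)}$ cellwise as a deterministic function of the cell shares and applying the non-vanishing-share assumption.
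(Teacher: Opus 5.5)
Your proposal is correct and follows the paper's proof closely. It uses FWL with the controls absorbing $c_i$ and the time terms, and gets the weight sums from orthogonality of $r^{(h)}$ to the other horizon indicators and to its own fitted component. Measurability follows because the design involves only $\Omega$, $\mathbf{x}$ and first-event dates, and consistency follows by factoring $\bar\tau_{h'}$ out and invoking the weight sums.

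The differences are minor. Your unit-permutation argument for $r^{(h)}_{it}=r^{(h)}_{jt}$ replaces the paper's within-unit demeaning, and your explicit $O(1/N)$ variance bound is more detailed than the paper's bare appeal to the regularity conditions. The paper gets negative weights directly from the zero-sum weights at $h'\neq h$, which you already derived; that is cleaner than an unspecified staggered example. The paper also briefly justifies the closing ``absent \eqref{eq:se_condition}'' clause by grouping on $(E^{(1)}_i,\mathbf{x}_i)$, which you leave unaddressed.
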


\begin{proof}[Proof of Proposition~\ref{prop:single_dynamic}]\leavevmode
    
\noindent\textbf{Part 1.} By \eqref{eq:dgp_total}, every term of $Y_{it}$ other than $\tau_{it}w^{(1)}_{it}$ and $u_{it}$ lies in the span of the controls in \eqref{eq:setwfe}. Applying Frisch--Waugh--Lovell to $w^{(1,h)}_{it}$ and using the orthogonality of $r^{(h)}_{it}$ to those controls,
\[
\widehat{\tau}^{SE-TWFE}_h=\sum_{it\in\Omega}v^{(h)}_{it}\,\tau_{it}w^{(1)}_{it}+\sum_{it\in\Omega}v^{(h)}_{it}u_{it}.
\]
Since $r^{(h)}_{it}$ is a deterministic function of $(\mathbf{d},\mathbf{x})$ and $\Omega$, and $\E[u_{it}\mid\mathbf{d},\mathbf{x}]=0$, the second term on the RHS has mean zero and vanishes in probability under \ref{ass:reg}. Observations with $w^{(1)}_{it}=0$ drop out of the first term, so the sum runs over $\Omega^{(\geq1)}$. Writing $w^{(1)}_{it}=\sum_{h'}w^{(1,h')}_{it}$, each product $v^{(h)}_{it}w^{(1,h')}_{it}$ is nonzero only at $t=E^{(1)}_i+h'$, so for fixed $h'$ the sum over $\Omega$ reduces to a sum over the units observed at that first-event horizon, that is, over $\mathcal{N}^{(1)}_{h'}$. Summing over $h'$ gives \eqref{eq:se_weights}.
    
For \eqref{eq:se_weightsums}, note that $w^{(1,h)}_{it}$ decomposes into $r^{(h)}_{it}$ and a fitted component orthogonal to it, so that $\sum_{it\in\Omega}r^{(h)}_{it}w^{(1,h)}_{it}=\sum_{it\in\Omega}(r^{(h)}_{it})^{2}$, and the definition of $v^{(h)}_{it}$ gives the first equality. For $h'\neq h$ the indicator $w^{(1,h')}_{it}$ is among the regressors projected out, so $\sum_{it\in\Omega}r^{(h)}_{it}w^{(1,h')}_{it}=0$, giving the second.
		
It follows that some weights must be negative. By \eqref{eq:se_weightsums} the weights $v^{(h')}_{it}$ at each horizon $h'\neq h$ sum to zero, so unless they equal zero at each $h'\neq h$, both signs must be present among them. Since each observation in $\Omega^{(\geq1)}$ belongs to exactly one horizon, the weights on $\Omega^{(\geq1)}$ therefore sum to unity while taking negative values at horizons other than $h$.
		
\medskip
\noindent\textbf{Part 2.} No indicator for an event after the first enters \eqref{eq:setwfe}, so the design matrix is determined by $\Omega$, by $\mathbf{x}$ and by the sample configuration of first-event dates. Since the projection is taken over the full sample, $r^{(h)}_{it}$ inherits this dependence, and in particular does not depend on $E^{(k)}_i$ for any $k>1$ nor on realised outcomes.
		
Regarding invariance of the residual within cells, the unit indicators can be partialled out by within-unit demeaning, so that $r^{(h)}_{it}$ is the residual from projecting the demeaned regressand on the demeaned remaining regressors. In a balanced panel every unit is observed in all $T$ periods, so the demeaned period indicators are $fs_t-T^{-1}$, their covariate interactions $(fs_t-T^{-1})\mathbf{x}_i'$, and the demeaned horizon indicator at $h'\neq h$ is $\mathbf{1}[t-E^{(1)}_i=h']-T^{-1}\mathbf{1}[E^{(1)}_i+h'\leq T]$. Each demeaned regressor, and the demeaned regressand, is therefore a function of $t$, $E^{(1)}_i$ and $\mathbf{x}_i$ alone. The projection coefficients are common across units, so the residual is a common function of these three arguments, and units agreeing in $(E^{(1)}_i,\mathbf{x}_i)$ receive identical residuals period by period.
		
\medskip
\noindent\textbf{Part 3.} By Part 2 the weight is common across units sharing $(E^{(1)}_i,\mathbf{x}_i)$ at every period. Index such groups by $(e_1,\mathbf{x})$, with $N_{e_1,\mathbf{x}}$ units, each of which is a union of cells $\mathcal{C}(e,\mathbf{x})$ and so retains a non-vanishing share of the sample. Denote the group-specific weight $v^{(h)}_{t,e_1,\mathbf{x}}$ and write $\mu_{h',e_1,\mathbf{x}}=\E[\tau_{it}\mid E^{(1)}_i=e_1,\mathbf{x}_i=\mathbf{x}]$ for the conditional mean trajectory effect at first-event horizon $h'$ within the group. All units in the group reach that horizon at the common period $t=e_1+h'$, and the group contributes $N_{e_1,\mathbf{x}}$ such observations whenever that period lies within the panel. Decomposing $\tau_{it}=\mu_{h',e_1,\mathbf{x}}+\eta_{ih'}$ and grouping the terms of \eqref{eq:se_weights} accordingly,
\[
\widehat{\tau}^{SE-TWFE}_h=\sum_{h'}\sum_{e_1,\mathbf{x}}N_{e_1,\mathbf{x}}\,v^{(h)}_{e_1+h',\,e_1,\mathbf{x}}\,\mu_{h',e_1,\mathbf{x}}
\;+\;\sum_{h'}\sum_{i\in\mathcal{N}^{(1)}_{h'}}v^{(h)}_{i,E^{(1)}_i+h'}\,\eta_{ih'}+o_p(1),
\]
the sums over groups running only over those for which $e_1+h'\leq T$. The deviations $\eta_{ih'}$ are mean zero conditional on $(E^{(1)}_i,\mathbf{x}_i)$ by construction and, by Part 2, the weights are functions of $\Omega$ and of the sample configuration of first-event dates and covariates alone. The second term on the RHS therefore has mean zero and vanishes in probability under \ref{ass:reg}.
		
Under \eqref{eq:se_condition} the conditional means depend on $h'$ alone and may be taken outside the sum over groups, leaving
\[
\widehat{\tau}^{SE-TWFE}_h=\sum_{h'}\bar{\tau}_{h'}\sum_{e_1,\mathbf{x}}N_{e_1,\mathbf{x}}\,v^{(h)}_{e_1+h',\,e_1,\mathbf{x}}+o_p(1).
\]
The remaining sum adds the weight of every observation at first-event horizon $h'$, grouped by $(e_1,\mathbf{x})$, and so equals the corresponding sum in \eqref{eq:se_weightsums}. It is therefore unity at $h'=h$ and zero otherwise, which yields $\plim_{N\to\infty}\widehat{\tau}^{SE-TWFE}_h=\bar{\tau}_h$.

For the sufficient conditions, write $G^{(k)}_i=E^{(k)}_i-E^{(1)}_i$ for the gap between unit $i$'s first and $k$-th events, with $G^{(1)}_i=0$ and $G^{(k)}_i=\infty$ where fewer than $k$ events occur, so that at first-event horizon $h'$
\begin{equation}\label{eq:se_active}
    \tau_{it}=\sum_{k=1}^{K}\tau^{(k)}_{i,h'-G^{(k)}_i}\mathbf{1}\bigl[h'\geq G^{(k)}_i\bigr].
\end{equation}
Where gaps are common, $G^{(k)}_i=G^{(k)}$, the same set of events is active at $h'$ for every unit and the indicators in \eqref{eq:se_active} do not vary across units. If in addition each $\tau^{(k)}_{i\ell}=\tau^{(k)}_\ell$, then \eqref{eq:se_active} is a deterministic function of $h'$. If instead \eqref{eq:tinv} holds, then
    \[
    \E\bigl[\tau_{it}\mid\mathbf{d},\mathbf{x}\bigr]=\sum_{k=1}^{K}\bar{\tau}^{(k)}_{h'-G^{(k)}}\mathbf{1}\bigl[h'\geq G^{(k)}\bigr],
    \]
    which again depends on $h'$ alone. In either case $\E[\tau_{it}\mid\mathbf{d},\mathbf{x}]$ depends on $h'$ alone, so by iterated expectations so does $\E[\tau_{it}\mid E^{(1)}_i,\mathbf{x}_i]$, and \eqref{eq:se_condition} follows.
    
    Absent \eqref{eq:se_condition}, the conditional means differ across the groups contributing at a given horizon $h'\neq h$. The weight sums \eqref{eq:se_weightsums} hold across groups but not group by group, so no common mean can be factored, and the horizon-$h'$ contribution is generically nonzero with net weights determined by the joint distribution of $(E^{(1)}_i,\mathbf{x}_i)$.
    
\end{proof}

\begin{remark}[Role of the gap distribution]
Let $\tau^{(k)}_{i\ell}=\tau^{(k)}_\ell$ for every $i$ and $k$. Then $$\E[\tau_{it}\mid E^{(1)}_i,\mathbf{x}_i]=\sum_k\E\bigl[\tau^{(k)}_{h'-G^{(k)}_i}\mathbf{1}[h'\geq G^{(k)}_i]\mid E^{(1)}_i,\mathbf{x}_i\bigr],$$ which is invariant to $(E^{(1)}_i,\mathbf{x}_i)$ if the distribution of inter-event gaps is, and will in general vary with them otherwise. Homogeneous effects are therefore not sufficient when the distribution of inter-event gaps varies with the first-event date, as it does whenever right-truncation of the panel restricts the gaps available to late first-event cohorts.
\end{remark}
	
\subsection{The Pooled Dynamic Specification}\label{app:pooled_dynamic}
	
\begin{proposition}[Weighting representation for the P-TWFE specification]\label{prop:pooled}
	Fix a horizon $h$, write $r^{(h)}_{it}$ for the residual from projecting $\bar{w}^{(h)}_{it}$ onto the remaining regressors of \eqref{eq:ptwfe}, and define
    \[
    v^{(h)}_{it}=r^{(h)}_{it}\Big/\textstyle\sum_{js\in\Omega}\bigl(r^{(h)}_{js}\bigr)^{2},
    \qquad
    A^{(k)}_{h'}=\sum_{it\in\Omega}v^{(h)}_{it}\,w^{(k,h')}_{it},
    \]
    the latter denoting the total weight that $\widehat{\tau}_h$ places on occurrence $k$ at horizon $h'$. Then:
    \begin{enumerate}
        \item \textbf{Weighting representation.}
        \begin{equation}\label{eq:pooled_weights}
            \widehat{\tau}^{P-TWFE}_h=\sum_{h'=0}^{\bar{h}^{(1)}}\ \sum_{k=1}^{K}\ \sum_{i\in\mathcal{N}^{(k)}_{h'}}v^{(h)}_{i,E^{(k)}_i+h'}\,\tau^{(k)}_{ih'}+o_p(1),
        \end{equation}
        where the occurrence-specific weight totals satisfy
        \begin{equation}\label{eq:pooled_weightsums}
            \sum_{k=1}^{K}A^{(k)}_{h}=1,\qquad \sum_{k=1}^{K}A^{(k)}_{h'}=0\ \ \text{for }h'\neq h .
        \end{equation}
        Some weights on treated observations may be negative, and the individual $A^{(k)}_{h'}$ are unrestricted beyond \eqref{eq:pooled_weightsums}.
        
        \item \textbf{Measurability of the weights.} The weights are deterministic functions of $\Omega$, of $\mathbf{x}$ and of the sample configuration of event timings, and do not depend on realised outcomes. Since every occurrence enters \eqref{eq:ptwfe}, they depend on the complete trajectory $E_i$ rather than on $E^{(1)}_i$ alone, and $r^{(h)}_{it}=r^{(h)}_{jt}$ for every $t$ whenever $E_i=E_j$ and $\mathbf{x}_i=\mathbf{x}_j$.
        
        \item \textbf{Consistency.} Suppose \eqref{eq:tinv} holds with $\bar{\tau}^{(k)}_{h'}=\bar{\tau}_{h'}$ for every $k$, so that conditional mean effects are invariant to the timing trajectory and to the covariates and are common across occurrences. Then $\plim_{N\to\infty}\widehat{\tau}^{P-TWFE}_h=\bar{\tau}_h$.
        
        \item \textbf{Failure under occurrence heterogeneity.} Suppose instead that \eqref{eq:tinv} holds but that $\bar{\tau}^{(k)}_{h'}$ varies across occurrences. Then
        \begin{equation}\label{eq:pooled_contam}
            \widehat{\tau}^{P-TWFE}_h=\underbrace{\sum_{k=1}^{K}\bar{\tau}^{(k)}_{h}A^{(k)}_{h}}_{\text{own horizon}}
            \;+\;\underbrace{\sum_{\substack{h'\geq0\\h'\neq h}}\ \sum_{k=2}^{K}\bigl(\bar{\tau}^{(k)}_{h'}-\bar{\tau}^{(1)}_{h'}\bigr)A^{(k)}_{h'}}_{\text{cross-horizon contamination}}+o_p(1).
        \end{equation}
        The first term combines the occurrence-specific effects at horizon $h$ with weights totalling unity whose signs are unconstrained by \eqref{eq:pooled_weightsums}. The second is generically nonzero and vanishes at every $h$ only if effects are common across occurrences at every horizon.
    \end{enumerate}
\end{proposition}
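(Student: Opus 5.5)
The plan is to mirror the proof of Proposition~\ref{prop:single_dynamic}, substituting the marginal representation \eqref{eq:dgp_marginal} for the total representation \eqref{eq:dgp_total}. This is natural because the pooled regressors now encode every occurrence.

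\textbf{Part 1.} Every term of $Y_{it}$ other than the double sum of event effects and $u_{it}$ lies in the span of the controls of \eqref{eq:ptwfe}. Frisch--Waugh--Lovell applied to $\bar w^{(h)}_{it}$ therefore gives
\[
\widehat\tau_h=\sum_{it}v^{(h)}_{it}\sum_k\tau^{(k)}_{it}w^{(k)}_{it}+\sum_{it}v^{(h)}_{it}u_{it}.
\]
The second sum has conditional mean zero, since $r^{(h)}_{it}$ depends only on $(\mathbf{d},\mathbf{x})$ and $\Omega$. Its variance vanishes under \ref{ass:reg} and the non-vanishing cell-share condition, so it is $o_p(1)$. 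In the first sum I write $w^{(k)}_{it}=\sum_{h'}w^{(k,h')}_{it}$, which collapses each $(k,h')$ piece to a sum over $\mathcal{N}^{(k)}_{h'}$ and yields \eqref{eq:pooled_weights}. For the weight sums \eqref{eq:pooled_weightsums}, note that $\sum_k A^{(k)}_{h'}=\sum_{it}v^{(h)}_{it}\bar w^{(h')}_{it}$. This equals one at $h'=h$, because $\sum r^{(h)}\bar w^{(h)}=\sum (r^{(h)})^2$, and zero for $h'\neq h$, because $\bar w^{(h')}$ is among the regressors that are projected out. Zero-sum weights that are not identically zero must contain both signs, which gives the negativity claim. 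Nothing in the argument ties the individual $A^{(k)}_{h'}$ down.

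\textbf{Part 2.} I would argue exactly as in Part~2 of Proposition~\ref{prop:single_dynamic}. In a balanced panel, within-unit demeaning makes each demeaned regressor a function of $(t,E_i,\mathbf{x}_i)$, for example $\bar w^{(h')}_{it}-T^{-1}\sum_k\mathbf{1}[E^{(k)}_i+h'\le T]$. The projection coefficients are common across units, so $r^{(h)}_{it}$ is a common function of $(t,E_i,\mathbf{x}_i)$. It therefore depends on the full trajectory, and units that agree in $(E_i,\mathbf{x}_i)$ receive identical residuals.

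\textbf{Parts 3 and 4.} I decompose $\tau^{(k)}_{ih'}=\bar\tau^{(k)}_{h'}+\eta^{(k)}_{ih'}$ using \eqref{eq:tinv}. By Part~2 the weights are measurable with respect to $(\mathbf{d},\mathbf{x})$, so the $\eta$ contribution is mean zero. Its variance is bounded by $C\sum (v^{(h)}_{it})^2$. This bound goes to zero because $\sum(r^{(h)})^2$ grows at rate $N$ while each individual residual stays bounded, so the contribution is $o_p(1)$. What remains is $\sum_{h'}\sum_k\bar\tau^{(k)}_{h'}A^{(k)}_{h'}$, and the two parts follow from it:
\begin{itemize}
\item Part~3: if $\bar\tau^{(k)}_{h'}=\bar\tau_{h'}$ for all $k$, then \eqref{eq:pooled_weightsums} reduces the expression to $\bar\tau_h$.
\item Part~4: for $h'\neq h$, subtract $\bar\tau^{(1)}_{h'}\sum_k A^{(k)}_{h'}=0$, leaving $\sum_{k\ge2}(\bar\tau^{(k)}_{h'}-\bar\tau^{(1)}_{h'})A^{(k)}_{h'}$. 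The $h'=h$ term is the own-horizon component.
\end{itemize}

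\textbf{Main obstacle.} The step I expect to be delicate is the claim that the contamination term is \emph{generically} nonzero. This requires showing that some $A^{(k)}_{h'}$ with $k\ge2$ and $h'\neq h$ is nonzero, because otherwise the differences $\bar\tau^{(k)}_{h'}-\bar\tau^{(1)}_{h'}$ would be multiplied by zero. I would try an explicit construction with $K=2$ and a trajectory group $e$ of gap $G$. Within that group, the collinearity \eqref{eq:collinearity} means $\bar w^{(h)}$ loads both on the first event at horizon $h$ and on the second event at horizon $h-G$. The regression cannot separate these, so the residual $r^{(h)}$ places nonzero weight on second-event cells at horizons other than $h$. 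Exhibiting one such configuration suffices to establish that the contamination term is generically nonzero, rather than proving it for every design.
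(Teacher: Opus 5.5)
Your proposal is correct and follows the paper's proof essentially step for step: Frisch--Waugh--Lovell on $\bar w^{(h)}_{it}$ under \eqref{eq:dgp_marginal}, the weight totals via $\sum_k w^{(k,h')}_{it}=\bar w^{(h')}_{it}$, within-unit demeaning for Part~2, and the substitution $A^{(1)}_{h'}=-\sum_{k\geq 2}A^{(k)}_{h'}$ for Part~4, with your decomposition directly around $\bar\tau^{(k)}_{h'}$ rather than around cell means being a harmless shortcut once \eqref{eq:tinv} is imposed. For ``generically nonzero'' the paper only observes that the projection enforces orthogonality to the pooled indicators rather than to each $w^{(k,h')}_{it}$, so your explicit construction goes further than the paper, and it works: with never-treated units and a single doubly-treated group of gap $G$ the pooled design is exactly identified and triangular, which gives $\plim_{N\to\infty}\widehat\tau_G=\bar\tau^{(1)}_G+\bigl(\bar\tau^{(2)}_0-\bar\tau^{(1)}_0\bigr)$.
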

	
\begin{proof}[Proof of Proposition~\ref{prop:pooled}]\leavevmode
		
\noindent\textbf{Part 1.} By \eqref{eq:dgp_marginal}, every term of $Y_{it}$ other than the event-specific effects and $u_{it}$ lies in the span of the controls in \eqref{eq:ptwfe}. Applying Frisch--Waugh--Lovell to $\bar{w}^{(h)}_{it}$ and using the orthogonality of $r^{(h)}_{it}$ to those controls,
\[
\widehat{\tau}^{P-TWFE}_h=\sum_{it\in\Omega}v^{(h)}_{it}\sum_{k=1}^{K}\sum_{h'\geq0}\tau^{(k)}_{ih'}\,w^{(k,h')}_{it}+\sum_{it\in\Omega}v^{(h)}_{it}u_{it}.
\]
Since $r^{(h)}_{it}$ is a deterministic function of $(\mathbf{d},\mathbf{x})$ and $\Omega$, and $\E[u_{it}\mid\mathbf{d},\mathbf{x}]=0$, the second term on the RHS has mean zero and vanishes in probability under \ref{ass:reg}. Each product $v^{(h)}_{it}w^{(k,h')}_{it}$ is nonzero only at $t=E^{(k)}_i+h'$, so for fixed $(k,h')$ the sum over $\Omega$ reduces to a sum over the units observed at horizon $h'$ after their $k$-th event, that is, over $\mathcal{N}^{(k)}_{h'}$. Summing over $(k,h')$ gives \eqref{eq:pooled_weights}.

For \eqref{eq:pooled_weightsums}, summing the definition of $A^{(k)}_{h'}$ over occurrences and using $\sum_{k}w^{(k,h')}_{it}=\bar{w}^{(h')}_{it}$ gives
\[
\sum_{k=1}^{K}A^{(k)}_{h'}=\sum_{it\in\Omega}v^{(h)}_{it}\,\bar{w}^{(h')}_{it},
\]
so the occurrence-specific weight totals sum to the weight the specification places on the pooled indicator at horizon $h'$. At $h'=h$ this equals unity, since $\bar{w}^{(h)}_{it}$ decomposes into $r^{(h)}_{it}$ and a fitted component orthogonal to it, so that $\sum_{it\in\Omega}r^{(h)}_{it}\bar{w}^{(h)}_{it}=\sum_{it\in\Omega}(r^{(h)}_{it})^{2}$, and the definition of $v^{(h)}_{it}$ then delivers the result. At $h'\neq h$ it equals zero, since $\bar{w}^{(h')}_{it}$ is among the regressors projected out. The individual $A^{(k)}_{h'}$ are unrestricted, since $w^{(k,h')}_{it}$ is not itself a regressor of \eqref{eq:ptwfe}. By \eqref{eq:pooled_weightsums} the occurrence-specific weight totals at each horizon $h'\neq h$ sum to zero so, unless all weights are identically zero, some must be negative.
		
\medskip
\noindent\textbf{Part 2.} Every occurrence indicator $w^{(k,h)}_{it}$ enters \eqref{eq:ptwfe} through the pooled regressors, so the design matrix is determined by $\Omega$, by $\mathbf{x}$ and by the sample configuration of complete trajectories. Since the projection is taken over the full sample, $r^{(h)}_{it}$ inherits this dependence and does not depend on realised outcomes. 
		
Turning to invariance within cells, the unit indicators can be partialled out by within-unit demeaning, so that $r^{(h)}_{it}$ is the residual from projecting the demeaned regressand on the demeaned remaining regressors. In a balanced panel every unit is observed in all $T$ periods, so the demeaned period indicators are $fs_t-T^{-1}$, their covariate interactions $(fs_t-T^{-1})\mathbf{x}_i'$, and the demeaned pooled indicator at $h'$ is $\bar{w}^{(h')}_{it}-T^{-1}\sum_{k}\mathbf{1}[E^{(k)}_i+h'\leq T]$. Each demeaned regressor, and the demeaned regressand, is therefore a function of $t$, $E_i$ and $\mathbf{x}_i$ alone. The projection coefficients are common across units, so the residual is a common function of these three arguments, and units agreeing in $(E_i,\mathbf{x}_i)$ receive identical residuals period by period.
		
\medskip
\noindent\textbf{Part 3.} By Part 2 the weight is common within each cell at every period. Denote this cell-specific weight $v^{(h)}_{t,e,\mathbf{x}}$. Since all units in a cell share the trajectory $e$, the observation at horizon $h'$ after the $k$-th event falls at the common period $t=e^{(k)}+h'$, and the cell contributes $N_{e,\mathbf{x}}$ such observations whenever that period lies within the panel. Decomposing $\tau^{(k)}_{ih'}=\mu^{(k)}_{h',e,\mathbf{x}}+\eta^{(k)}_{ih'}$, where $\mu^{(k)}_{h',e,\mathbf{x}}=\E[\tau^{(k)}_{ih'}\mid\mathbf{d},\mathbf{x}]$, and grouping the terms of \eqref{eq:pooled_weights} by cell,
    \[
    \widehat{\tau}^{P-TWFE}_h=\sum_{h'}\sum_{k=1}^{K}\sum_{e,\mathbf{x}}N_{e,\mathbf{x}}\,v^{(h)}_{e^{(k)}+h',\,e,\mathbf{x}}\,\mu^{(k)}_{h',e,\mathbf{x}}
    \;+\;\sum_{h'}\sum_{k=1}^{K}\sum_{i\in\mathcal{N}^{(k)}_{h'}}v^{(h)}_{i,E^{(k)}_i+h'}\,\eta^{(k)}_{ih'}+o_p(1),
    \]
the sums over cells running only over those for which $e^{(k)}+h'\leq T$. The deviations $\eta^{(k)}_{ih'}$ are mean zero conditional on $(\mathbf{d},\mathbf{x})$ by construction and the weights are functions of $(\mathbf{d},\mathbf{x})$ and $\Omega$ by Part 2, so the second term has mean zero and vanishes in probability under \ref{ass:reg}.
		
Under the condition of Part 3 the conditional means equal $\bar{\tau}_{h'}$ for every $k$ and every cell, so they may be taken outside the sums over $k$ and over cells, leaving
\[
\widehat{\tau}^{P-TWFE}_h=\sum_{h'}\bar{\tau}_{h'}\sum_{k=1}^{K}\sum_{e,\mathbf{x}}N_{e,\mathbf{x}}\,v^{(h)}_{e^{(k)}+h',\,e,\mathbf{x}}+o_p(1)
=\sum_{h'}\bar{\tau}_{h'}\sum_{k=1}^{K}A^{(k)}_{h'}+o_p(1),
\]
the second equality holding because the cell sum at each $(k,h')$ adds the weight of every observation at horizon $h'$ after a $k$-th event, which is $A^{(k)}_{h'}$. By \eqref{eq:pooled_weightsums} the inner sum equals unity at $h'=h$ and zero otherwise, so $\plim_{N\to\infty}\widehat{\tau}^{P-TWFE}_h=\bar{\tau}_h$.
		
\medskip
\noindent\textbf{Part 4.} Where $\bar{\tau}^{(k)}_{h'}$ varies across occurrences, the conditional means may still be taken outside the sum over cells at each $(k,h')$, giving
\[
\widehat{\tau}^{P-TWFE}_h=\sum_{h'}\sum_{k=1}^{K}\bar{\tau}^{(k)}_{h'}A^{(k)}_{h'}+o_p(1).
\]
Separating the terms at $h'=h$ delivers the first term of \eqref{eq:pooled_contam}, whose weights total unity by \eqref{eq:pooled_weightsums} (but which does not constrain their signs). For each $h'\neq h$, substituting $A^{(1)}_{h'}=-\sum_{k\geq2}A^{(k)}_{h'}$ from \eqref{eq:pooled_weightsums} gives
    \[
    \sum_{k=1}^{K}\bar{\tau}^{(k)}_{h'}A^{(k)}_{h'}
    =\sum_{k=2}^{K}\bigl(\bar{\tau}^{(k)}_{h'}-\bar{\tau}^{(1)}_{h'}\bigr)A^{(k)}_{h'},
    \]
which is the second term of \eqref{eq:pooled_contam}. The occurrence-specific effects therefore enter only through their departures from those of the first event, with the choice of the first occurrence as the base a normalisation. The term vanishes at every $h$ under either of two conditions. The first is $\bar{\tau}^{(k)}_{h'}=\bar{\tau}^{(1)}_{h'}$ for all $k$ and $h'$, which is homogeneity across occurrences and returns the setting of Part 3. The second is $A^{(k)}_{h'}=0$ for all $k\geq2$ and $h'\neq h$, which requires $r^{(h)}_{it}$ to be orthogonal to each occurrence-specific indicator individually. The projection in \eqref{eq:ptwfe} imposes orthogonality only to the pooled indicators, and so does not deliver this.
\end{proof}
	
	
\subsection{The Multi-Event Dynamic Specification}\label{app:multi_dynamic}
	
\begin{proposition}[Weighting representation for the ME-TWFE specification]\label{prop:multi_dynamic}
Fix an occurrence $k$ and a horizon $h$, write $r^{(k,h)}_{it}$ for the residual from projecting $w^{(k,h)}_{it}$ onto the remaining regressors of \eqref{eq:metwfe}, and define
\[
v^{(k,h)}_{it}=r^{(k,h)}_{it}\Big/\textstyle\sum_{js\in\Omega}\bigl(r^{(k,h)}_{js}\bigr)^{2},
\qquad
A^{(k',h')}=\sum_{it\in\Omega}v^{(k,h)}_{it}\,w^{(k',h')}_{it}.
\]
Then:
\begin{enumerate}
    \item \textbf{Weighting representation.}
    \begin{equation}\label{eq:me_weights}
        \widehat{\tau}^{(k)ME-TWFE}_h=\sum_{k'=1}^{K}\ \sum_{h'=0}^{\bar{h}^{(k')}}\ \sum_{i\in\mathcal{N}^{(k')}_{h'}}v^{(k,h)}_{i,E^{(k')}_i+h'}\,\tau^{(k')}_{ih'}+o_p(1),
    \end{equation}
    where
    \begin{equation}\label{eq:me_weightsums}
        A^{(k,h)}=1,\qquad A^{(k',h')}=0\ \ \text{for every }(k',h')\neq(k,h).
    \end{equation}
    Some weights may be negative.
			
    \item \textbf{Measurability of the weights.} The weights are deterministic functions of $\Omega$, of $\mathbf{x}$ and of the sample configuration of event timings, and do not depend on realised outcomes. Since every occurrence enters \eqref{eq:metwfe} separately, they depend on the complete trajectory $E_i$ rather than on the timing of any single event in isolation, and $r^{(k,h)}_{it}=r^{(k,h)}_{jt}$ for every $t$ whenever $E_i=E_j$ and $\mathbf{x}_i=\mathbf{x}_j$.
    
	\item \textbf{Consistency.} Suppose \eqref{eq:tinv} holds. Then $\plim_{N\to\infty}\widehat{\tau}^{(k)ME-TWFE}_h=\bar{\tau}^{(k)}_h$.
			
	\item \textbf{Failure under trajectory-dependent heterogeneity.} Absent \eqref{eq:tinv},
    \begin{equation}\label{eq:me_contam}
        \widehat{\tau}^{(k)ME-TWFE}_h=\sum_{e,\mathbf{x}}\theta^{(k,h)}_{h,e,\mathbf{x}}\,\mu^{(k)}_{h,e,\mathbf{x}}
        \;+\;\sum_{(k',h')\neq(k,h)}\ \sum_{e,\mathbf{x}}\theta^{(k',h')}_{h',e,\mathbf{x}}\,\mu^{(k')}_{h',e,\mathbf{x}}+o_p(1),
    \end{equation}
    where $\mu^{(k')}_{h',e,\mathbf{x}}=\E[\tau^{(k')}_{ih'}\mid\mathbf{d},\mathbf{x}]$ and $\theta^{(k',h')}_{h',e,\mathbf{x}}$ denotes the total weight that $\widehat{\tau}^{(k)}_h$ places on the horizon-$h'$ observations of occurrence $k'$ among units in the cell. The second sum has weights totalling zero at each $(k',h')$ but is generically nonzero, while the first has weights totalling unity whose signs are unconstrained by \eqref{eq:me_weightsums}.
\end{enumerate}
\end{proposition}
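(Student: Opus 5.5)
The plan is to follow the template already used for Propositions~\ref{prop:single_dynamic} and \ref{prop:pooled}, with the per-occurrence indicators $w^{(k,h)}_{it}$ taking the place of the first-event and pooled indicators.

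\textbf{Part 1.} By \eqref{eq:dgp_marginal}, every component of $Y_{it}$ other than the event-specific effects and $u_{it}$ lies in the span of the controls of \eqref{eq:metwfe}. Applying Frisch--Waugh--Lovell to $w^{(k,h)}_{it}$ gives
\[
\widehat\tau^{(k)}_h=\sum_{it}v^{(k,h)}_{it}\sum_{k'}\sum_{h'}\tau^{(k')}_{ih'}w^{(k',h')}_{it}+\sum_{it}v^{(k,h)}_{it}u_{it}.
\]
The weights are functions of $(\mathbf{d},\mathbf{x},\Omega)$ and $\E[u_{it}\mid\mathbf{d},\mathbf{x}]=0$. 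Under \ref{ass:reg} and the non-vanishing cell shares, the second term therefore has mean zero and variance of order $\sum_{it}(v^{(k,h)}_{it})^2=1/\sum(r^{(k,h)})^2=O(N^{-1})$, so it is $o_p(1)$. Each product $v\,w^{(k',h')}$ is nonzero only at $t=E^{(k')}_i+h'$, which collapses the first term to \eqref{eq:me_weights}. For \eqref{eq:me_weightsums}, the equality $\sum r^{(k,h)}w^{(k,h)}=\sum (r^{(k,h)})^2$ gives $A^{(k,h)}=1$. Every other $w^{(k',h')}$ is itself a regressor that has been projected out, so $A^{(k',h')}=0$. This is the key difference from the pooled case: orthogonality now holds occurrence by occurrence. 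Negativity follows as before, since weights summing to zero at each $(k',h')\neq(k,h)$ must take both signs unless they vanish identically.

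\textbf{Part 2.} The design matrix is built from $\Omega$, $\mathbf{x}$ and the full set of $E^{(k')}_i$, so the residual inherits this dependence and does not depend on outcomes. After within-unit demeaning in a balanced panel, each demeaned regressor is a function of $(t,E_i,\mathbf{x}_i)$ alone. For example, the demeaned $w^{(k',h')}$ equals $\mathbf{1}[t-E^{(k')}_i=h']-T^{-1}\mathbf{1}[E^{(k')}_i+h'\le T]$. The projection coefficients are common across units, so the residual is a common function of $(t,E_i,\mathbf{x}_i)$ and is constant within cells $\mathcal{C}(e,\mathbf{x})$.

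\textbf{Parts 3 and 4.} Within a cell every unit reaches horizon $h'$ of occurrence $k'$ at the same period $e^{(k')}+h'$. I would therefore decompose $\tau^{(k')}_{ih'}=\mu^{(k')}_{h',e,\mathbf{x}}+\eta^{(k')}_{ih'}$ and group \eqref{eq:me_weights} by cell. The deviation term has conditional mean zero, and its variance is bounded by $\max|v|\cdot O(1)\to0$ using \ref{ass:reg}(iii), so it is $o_p(1)$. Setting $\theta^{(k',h')}_{h',e,\mathbf{x}}=N_{e,\mathbf{x}}v^{(k,h)}_{e^{(k')}+h',e,\mathbf{x}}$ yields \eqref{eq:me_contam}. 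Summing these cell weights over $(e,\mathbf{x})$ recovers $A^{(k',h')}$, which gives the stated totals. Under \eqref{eq:tinv} each $\mu^{(k')}_{h'}$ is constant across cells and can be factored out, so \eqref{eq:me_weightsums} leaves exactly $\bar\tau^{(k)}_h$.

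\textbf{Main obstacle.} I expect the main difficulty to be the weights themselves: showing that $\sum(r^{(k,h)})^2$ grows at rate $N$, so that individual weights are $O(N^{-1})$ and the noise terms vanish. The design has many correlated indicators, so this is not automatic. Here I would invoke the assumed positive-definiteness of the population second-moment matrix of the demeaned regressors. Together with the law of large numbers, it gives $N^{-1}\sum(r^{(k,h)})^2\to$ a positive Schur complement, which bounds $\max|v|$ uniformly at rate $N^{-1}$.
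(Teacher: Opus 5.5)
Your proposal is correct and follows essentially the same route as the paper's proof. The steps match: Frisch--Waugh--Lovell with orthogonality holding occurrence by occurrence for Part 1, within-unit demeaning in the balanced panel to get cell-invariant residuals for Part 2, and the cell-level decomposition $\tau^{(k')}_{ih'}=\mu^{(k')}_{h',e,\mathbf{x}}+\eta^{(k')}_{ih'}$ with $\theta^{(k',h')}_{h',e,\mathbf{x}}=N_{e,\mathbf{x}}v^{(k,h)}_{e^{(k')}+h',e,\mathbf{x}}$ for Parts 3 and 4. Your explicit argument that $\sum_{it\in\Omega}(r^{(k,h)}_{it})^2$ grows at rate $N$, via the positive-definite second-moment assumption, is a detail the paper leaves implicit in its appeal to \ref{ass:reg}; it is, however, exactly why that extra assumption is imposed for this proposition.
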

	
\begin{proof}[Proof of Proposition~\ref{prop:multi_dynamic}]\leavevmode
    
\noindent\textbf{Part 1.} By \eqref{eq:dgp_marginal}, every term of $Y_{it}$ other than the event-specific effects and $u_{it}$ lies in the span of the controls in \eqref{eq:metwfe}. Applying Frisch--Waugh--Lovell to $w^{(k,h)}_{it}$ and using the orthogonality of $r^{(k,h)}_{it}$ to those controls,
\[
\widehat{\tau}^{(k)}_h=\sum_{it\in\Omega}v^{(k,h)}_{it}\sum_{k'=1}^{K}\sum_{h'\geq0}\tau^{(k')}_{ih'}\,w^{(k',h')}_{it}+\sum_{it\in\Omega}v^{(k,h)}_{it}u_{it}.
\]
Since $r^{(k,h)}_{it}$ is a deterministic function of $(\mathbf{d},\mathbf{x})$ and $\Omega$, and $\E[u_{it}\mid\mathbf{d},\mathbf{x}]=0$, the second term on the RHS has mean zero and vanishes in probability under \ref{ass:reg}. Each product $v^{(k,h)}_{it}w^{(k',h')}_{it}$ is nonzero only at $t=E^{(k')}_i+h'$, so for fixed $(k',h')$ the sum over $\Omega$ reduces to a sum over $\mathcal{N}^{(k')}_{h'}$. Summing over $(k',h')$ gives \eqref{eq:me_weights}.
		
For \eqref{eq:me_weightsums}, note that $w^{(k,h)}_{it}$ decomposes into $r^{(k,h)}_{it}$ and a fitted component orthogonal to it, so that $\sum_{it\in\Omega}r^{(k,h)}_{it}w^{(k,h)}_{it}=\sum_{it\in\Omega}(r^{(k,h)}_{it})^{2}$, and the definition of $v^{(k,h)}_{it}$ gives $A^{(k,h)}=1$. Every other indicator $w^{(k',h')}_{it}$ is itself a regressor of \eqref{eq:metwfe} and is therefore among those projected out, so $A^{(k',h')}=0$. Since $A^{(k',h')}=0$ at every $(k',h')\neq(k,h)$, the weights at those occurrence–horizon combinations cannot all be non-negative unless they are identically zero.
		
\medskip
\noindent\textbf{Part 2.} Every occurrence indicator enters \eqref{eq:metwfe} separately, so the design matrix is determined by $\Omega$, by $\mathbf{x}$ and by the sample configuration of complete trajectories. Since the projection is taken over the full sample, $r^{(k,h)}_{it}$ inherits this dependence and does not depend on realised outcomes.
		
Regarding invariance within cells, the unit indicators can be partialled out by within-unit demeaning, so that $r^{(k,h)}_{it}$ is the residual from projecting the demeaned regressand on the demeaned remaining regressors. In a balanced panel every unit is observed in all $T$ periods, so the demeaned period indicators are $fs_t-T^{-1}$, their covariate interactions $(fs_t-T^{-1})\mathbf{x}_i'$, and the demeaned occurrence indicator at $(k',h')$ is $\mathbf{1}[t-E^{(k')}_i=h']-T^{-1}\mathbf{1}[E^{(k')}_i+h'\leq T]$. Each demeaned regressor, and the demeaned regressand, is therefore a function of $t$, $E_i$ and $\mathbf{x}_i$ alone. The projection coefficients are common across units, so the residual is a common function of these three arguments, and units agreeing in $(E_i,\mathbf{x}_i)$ receive identical residuals period by period.

\medskip
\noindent\textbf{Part 3.} By Part 2 the weight is common within each cell at every period and we can use $v^{(k,h)}_{t,e,\mathbf{x}}$ to denote the cell-specific weight. Since all units in a cell share the trajectory $e$, the observation at horizon $h'$ after the $k'$-th event falls at the common period $t=e^{(k')}+h'$, and the cell contributes $N_{e,\mathbf{x}}$ such observations whenever that period lies within the panel. Decomposing $\tau^{(k')}_{ih'}=\mu^{(k')}_{h',e,\mathbf{x}}+\eta^{(k')}_{ih'}$ and grouping the terms of \eqref{eq:me_weights} by cell,
\[
\widehat{\tau}^{(k)ME-TWFE}_h=\sum_{k'=1}^{K}\sum_{h'}\sum_{e,\mathbf{x}}N_{e,\mathbf{x}}\,v^{(k,h)}_{e^{(k')}+h',\,e,\mathbf{x}}\,\mu^{(k')}_{h',e,\mathbf{x}}
\;+\;\sum_{k'=1}^{K}\sum_{h'}\sum_{i\in\mathcal{N}^{(k')}_{h'}}v^{(k,h)}_{i,E^{(k')}_i+h'}\,\eta^{(k')}_{ih'}+o_p(1),
\]
the sums over cells running only over those for which $e^{(k')}+h'\leq T$. The deviations $\eta^{(k')}_{ih'}$ are mean zero conditional on $(\mathbf{d},\mathbf{x})$ by construction and the weights are functions of $(\mathbf{d},\mathbf{x})$ and $\Omega$ by Part 2, so the second term on the RHS has mean zero and vanishes in probability under \ref{ass:reg}.
		
Under \eqref{eq:tinv} the conditional means are constant across cells and may therefore be taken outside the sum over $(e,\mathbf{x})$, leaving
\[
\widehat{\tau}^{(k)}_h=\sum_{k'=1}^{K}\sum_{h'}\bar{\tau}^{(k')}_{h'}\sum_{e,\mathbf{x}}N_{e,\mathbf{x}}\,v^{(k,h)}_{e^{(k')}+h',\,e,\mathbf{x}}+o_p(1).
\]
The remaining sum adds the weight of every observation at horizon $h'$ after a $k'$-th event, grouped by cell, and so equals $A^{(k',h')}$. By \eqref{eq:me_weightsums} it is unity at $(k',h')=(k,h)$ and zero otherwise, which yields $\plim_{N\to\infty}\widehat{\tau}^{(k)ME-TWFE}_h=\bar{\tau}^{(k)}_h$.

\medskip
\noindent\textbf{Part 4.} Absent \eqref{eq:tinv} the conditional means vary across cells and cannot be taken outside the sum over $(e,\mathbf{x})$. Writing
\[
\theta^{(k',h')}_{h',e,\mathbf{x}}=N_{e,\mathbf{x}}\,v^{(k,h)}_{e^{(k')}+h',\,e,\mathbf{x}}
\]
and separating the term at $(k',h')=(k,h)$, the first display of Part 3 delivers \eqref{eq:me_contam}. The weight totals satisfy $\sum_{e,\mathbf{x}}\theta^{(k',h')}_{h',e,\mathbf{x}}=A^{(k',h')}$, which equals unity at $(k,h)$ and zero elsewhere by \eqref{eq:me_weightsums}. Constancy of $\mu^{(k')}_{h',e,\mathbf{x}}$ across the cells contributing at a given $(k',h')$ is sufficient for the corresponding inner sum to vanish, since the common value then multiplies $A^{(k',h')}=0$. Absent \eqref{eq:tinv} no such factoring is available and the second sum of \eqref{eq:me_contam} is generically nonzero. The first sum places total weight unity on the cell means at $(k,h)$, but \eqref{eq:me_weightsums} constrains only that total and not the individual $\theta^{(k,h)}_{h,e,\mathbf{x}}$, so $\widehat{\tau}^{(k)}_h$ need not be a convex combination of the cell means at its own occurrence and horizon.
\end{proof}
	
\begin{remark}[Rank condition]\label{rem:me_rank}
    Estimation of \eqref{eq:metwfe} requires inter-event gaps to vary across units. Where every doubly-treated unit shares the gap $G$, the indicators $w^{(1,h)}_{it}$ and $w^{(2,h-G)}_{it}$ coincide at every observation for which both are defined and the specification is not estimable. For $K>2$ the same applies to every pair of occurrences at their common gap. Identification therefore rests on variation in the timing of subsequent events relative to earlier ones.
\end{remark}
	
	
\end{document}